\def\calI{\mathcal{I}}
\def\calL{\mathcal{L}}
\def\calQ{\mathcal{Q}}
\def\calP{\mathcal{H}}
\newcommand{\CH}{\mbox{$C\!H$}}
\newtheorem{observation}{Observation}
\begin{document}

\title{On the Planar Two-Center Problem and Circular Hulls\thanks{A preliminary version of this paper will appear in the Proceedings of the 36th International Symposium on Computational Geometry (SoCG 2020).}}
\author{
Haitao Wang
}
\institute{
Department of Computer Science\\
Utah State University, Logan, UT 84322, USA\\
\email{haitao.wang@usu.edu}\\
}

\maketitle

\pagestyle{plain}
\pagenumbering{arabic}
\setcounter{page}{1}

\vspace{-0.2in}
\begin{abstract}
Given a set $S$ of $n$ points in the Euclidean plane, the two-center problem is to find two congruent disks of smallest radius whose union covers all points of $S$. Previously, Eppstein [SODA'97] gave a randomized algorithm of $O(n\log^2n)$ expected time and Chan [CGTA'99] presented a deterministic algorithm of $O(n\log^2 n\log^2\log n)$ time. 
In this paper, we propose an $O(n\log^2 n)$ time deterministic algorithm, which improves Chan's deterministic algorithm and matches the randomized bound of Eppstein. If $S$ is in convex position, then we solve the problem in $O(n\log n\log\log n)$ deterministic time. Our results rely on new techniques for dynamically maintaining circular hulls under point insertions and deletions, which are of independent interest.
\end{abstract}


\section{Introduction}
\label{sec:intro}
In this paper, we consider the planar 2-center problem. Given a set $S$ of $n$ points in the Euclidean plane, we wish to find two congruent disks of smallest radius whose union covers all points of $S$.

The classical 1-center problem for a set of points is to find the smallest disk covering all points, and the problem can be solved in linear time in any fixed dimensional space~\cite{ref:ChazelleOn96,ref:DyerOn86,ref:MegiddoLi83}. As a natural generalization, the 2-center problem has attracted much attention. Hershberger and Suri~\cite{ref:HershbergerFi91} first solved the decision version of the problem in $O(n^2\log n)$ time, which was later improved to $O(n^2)$ time~\cite{ref:HershbergerA93}. Using this result and parametric search~\cite{ref:MegiddoAp83}, Agarwal and Sharir~\cite{ref:AgarwalPl94} gave an $O(n^2\log^3 n)$ time algorithm for the $2$-center problem. Katz and Sharir~\cite{ref:KatzAn97} achieved the same running time by using expanders instead of parametric search. Eppstein~\cite{ref:EppsteinDy92} presented a randomized algorithm of $O(n^2\log^2n \log\log n)$ expected time.
Later, Jaromczyk and Kowaluk~\cite{ref:JaromczykAn94} proposed an $O(n^2)$ time algorithm. A breakthrough was achieved by Sharir~\cite{ref:SharirA97}, who proposed the first subquadratic algorithm for the problem, and the running time is $O(n\log^9 n)$. Afterwards, following Sharir's algorithmic scheme, Eppstein~\cite{ref:EppsteinFa97} derived a randomized algorithm of $O(n\log^2 n)$ expected time, and then Chan~\cite{ref:ChanMo99} developed an $O(n\log^2 n\log^2 \log n)$ time deterministic algorithm and a randomized algorithm of $O(n\log^2 n)$ time with high probability.
Recently, Tan and Jiang~\cite{ref:TanSi17} proposed a simple algorithm of
$O(n\log^2 n)$ time based on binary search, but unfortunately, the algorithm is not correct (see the appendix for details).
The problem has an $\Omega(n\log n)$ time lower bound in the algebraic decision tree model~\cite{ref:EppsteinFa97}, by a reduction from the max-gap problem.

In this paper, we present a new deterministic algorithm of $O(n\log^2 n)$ time, which improves the $O(n\log^2 n\log^2\log n)$ time deterministic algorithm by Chan~\cite{ref:ChanMo99} and
matches the randomized bound of $O(n\log^2n)$~\cite{ref:ChanMo99,ref:EppsteinFa97}.
This is the first progress on the problem since Chan's work~\cite{ref:ChanMo99} was published twenty years ago.
Further, if $S$ is in convex position (i.e., every point of $S$ is a vertex of the convex hull of $S$), then our technique can solve the 2-center problem on $S$ in $O(n\log n\log\log n)$ time. Previously, Kim and Shin~\cite{ref:KimEf00} announced an $O(n\log^2 n)$ time algorithm for this convex position case, but Tan and Jiang~\cite{ref:TanSi17} found errors in their time analysis.

Some variations of the 2-center problem have also been considered in the literature.
Agarwal et al.~\cite{ref:AgarwalTh98} studied the discrete 2-center problem where the centers of the two disks must be in $S$, and they solved the problem in $O(n^{4/3}\log^5 n)$ time. Agarwal and Phillips~\cite{ref:AgarwalAn08} considered an outlier version of the problem where $k$ points of $S$ are allowed to be outside the two disks, and they presented a randomized algorithm of $O(nk^7\log^3 n)$ expected time.
Arkin et al.~\cite{ref:ArkinBi15} studied a bichromatic 2-center problem for a set of $n$ pairs of points in the plane, and the goal is to assign a red color to a point and a blue color to the other point for every pair, such that $\max\{r_1,r_2\}$ is minimized, where $r_1$ (resp., $r_2$) is the radius of the smallest disk covering all red (resp., blue) points. Arkin et al.~\cite{ref:ArkinBi15} gave an $O(n^3\log^2 n)$ time algorithm, which was recently improved to $O(n^2\log^2 n)$ time by Wang and Xue~\cite{ref:WangIm19}.
The more general $k$-center problem is NP-hard if $k$ is part of the input~\cite{ref:MegiddoOn84}.


\subsection{Our Techniques}

Let $D_1^*$ and $D_2^*$ be two congruent disks in an optimal solution such that the distance of their centers is minimized.
Let $r^*$ be their radius and $\delta^*$ the distance of their centers.
If $\delta^*\geq r^*$, we call it the {\em distant case}; otherwise, it is the {\em nearby case}.

Eppstein~\cite{ref:EppsteinFa97} already solved the distant case in $O(n\log^2 n)$ deterministic time.
Solving the nearby case turns out to be the bottleneck in all previous three sub-quadratic time algorithms~\cite{ref:ChanMo99,ref:EppsteinFa97,ref:SharirA97}. Specifically, Sharir~\cite{ref:SharirA97} first solved it in $O(n\log^9n)$ deterministic time. Eppstein~\cite{ref:EppsteinFa97} gave a randomized algorithm of $O(n\log n\log\log n)$ expected time. Chan~\cite{ref:EppsteinFa97} proposed a randomized algorithm of $O(n\log n)$ time with high probability and another deterministic algorithm of $O(n\log^2 n\log^2\log n)$ time.
Our contribution is an $O(n\log n\log\log n)$ time deterministic algorithm for the nearby case,
which improves Chan's algorithm by a factor of $\log n\log\log n$.
Combining with the $O(n\log^2 n)$ time deterministic algorithm of
Eppstein~\cite{ref:EppsteinFa97} for the distant case, the 2-center
problem can now be solved in $O(n\log^2 n)$ deterministic time. Interestingly,
solving the distant case now becomes the bottleneck of the problem.


Our algorithm (for the nearby case) is based on the framework of Chan~\cite{ref:ChanMo99}.
Our improvement is twofold.
First, Chan~\cite{ref:ChanMo99} derived an $O(n\log n)$ time algorithm
for the {\em decision problem}, i.e., given  $r$, decide whether
$r^*\leq r$. We improve the algorithm to $O(n)$ time, after
	$O(n\log n)$ time preprocessing. Second, Chan~\cite{ref:ChanMo99}
	solved the optimization problem (i.e., the original 2-center
	problem) by parametric search. To this end, Chan developed a
	parallel algorithm for the decision problem and the algorithm runs
	in $O(\log n\log^2 \log n)$ parallel steps using $O(n\log n)$
	processors. By applying Cole's parametric
	search~\cite{ref:ColeSl87} and using his $O(n\log n)$ time
	decision algorithm, Chan solved the optimization problem in
	$O(n\log^2 n\log^2\log n)$ time. We first notice that simply
	replacing Chan's $O(n\log n)$ time decision algorithm with our new
	$O(n)$ time algorithm does not lead to any improvement. Indeed, in
	Chan's parallel algorithm, the number of processors times the
	number of parallel steps is $O(n\log^2 n\log^2\log n)$. We further
	design another parallel algorithm for the decision problem, which
	runs in $O(\log n\log\log n)$ parallel steps using $O(n)$
	processors. Consequently, by applying Cole's parametric search
	with our $O(n)$ time decision algorithm, we solve the optimization
	problem in $O(n\log n\log\log n)$ time. Note that although Cole's
	parametric search is used, our algorithm mainly involves
	independent binary searches and no sorting networks are needed.

In	addition, we show that our algorithm can be easily applied to solving the convex position case in $O(n\log n\log\log n)$ time.

\paragraph{Circular hulls.}
To obtain our algorithm for the decision problem, we develop new techniques for {\em circular hulls}~\cite{ref:HershbergerFi91} (also known as {\em $\alpha$-hulls} with $\alpha=1$~\cite{ref:EdelsbrunnerOn83}). A circular hull of radius $r$ for a set $Q$ of points is the common intersection of all disks of radius $r$ containing $Q$ (to see how circular hulls are related to the two-center problem, notice that there exists a disk of radius $r$ covering all points of $Q$ if and only if the circular hull of $Q$ of radius $r$ exists).
Although circular hulls have been studied before, our result needs more efficient algorithms for certain operations. For example, two algorithms~\cite{ref:EdelsbrunnerOn83,ref:HershbergerFi91} were known for constructing the circular hull for a set of $n$ points; both algorithms run in $O(n\log n)$ time, even if the points are given sorted. We instead present a linear-time algorithm once the points are sorted. Also, Hershberger and Suri~\cite{ref:HershbergerFi91} gave a linear-time algorithm to find the common tangents of two circular hulls separated by a line, and we design a new algorithm of $O(\log n)$ time. We also need to maintain a dynamic circular hull for a set of points under point insertions and deletions. Hershberger and Suri~\cite{ref:HershbergerFi91} gave a semi-dynamic data structure that can support deletions in $O(\log n)$ amortized time each. In our problem, we need to handle both insertions and deletions but with the following special properties: the point in each insertion must be to the right of all points of $Q$ and the point in each deletion must be the leftmost point of $Q$. Our data structure can handle each update in $O(1)$ amortized time (which leads to the linear time decision algorithm for the 2-center problem\footnote{As will be clear later, the points processed in our dynamic circular hull problem are actually sorted radially around a point; we can extend the result for the left-right sorted case to the radically sorted case.}).
We believe that these results on circular hulls are interesting in their own right
and undoubtedly have other applications.

\paragraph{Outline.}
The rest of the paper is organized as follows. We introduce notation
and review some previous work in Section~\ref{sec:pre}. In
Section~\ref{sec:decision}, we present our decision algorithm, and the
algorithm needs a data structure to maintain circular hulls
dynamically, which is given in Section~\ref{sec:circularhull}.
Section~\ref{sec:optimization} solves the optimization problem.
Section~\ref{sec:convexposition} is concerned with
the convex position case.
Section~\ref{sec:commontangent} is devoted to proving a lemma 
that is used in Section~\ref{sec:optimization}.

\section{Preliminaries}
\label{sec:pre}

We begin with some notation, some of which is borrowed from~\cite{ref:ChanMo99}.
It suffices to solve the nearby case. Thus, we assume that $\delta^*<r^*$ in the rest of the paper.
In the nearby case, it is possible to find in $O(n)$ time a constant number of points such that at least one of them, denoted by $o$, is in $D_1^*\cap D_2^*$~\cite{ref:EppsteinFa97}.
We assume that $o$ is the origin of the plane.
We make a general position assumption:
no two points of $S$ are collinear with $o$ and no two points of $S$ have the same $x$-coordinate. This assumption does not affect the running time of the algorithm, but simplifies the presentation.

For any set $P$ of points in the plane, let $\tau(P)$ denote the radius of the smallest enclosing disk of $P$.
For a connected region $B$ in the plane, let $\partial B$ denote the boundary of $B$.

The boundaries of the two disks $D^*_1$ and $D^*_2$ have exactly two intersections, and let $\rho_1$ and $\rho_2$ be the two rays through these intersections emanating from $o$ (e.g., see Fig.~\ref{fig:casenear}). As argued in~\cite{ref:ChanMo99}, one of the two coordinate axes must separate $\rho_1$ and $\rho_2$ since the angle between the two rays lies in $[\pi/2,3\pi/2]$, and without loss of generality, we assume it is the $x$-axis.

\begin{figure}[t]
\begin{minipage}[t]{0.49\textwidth}
\begin{center}
\includegraphics[height=1.5in]{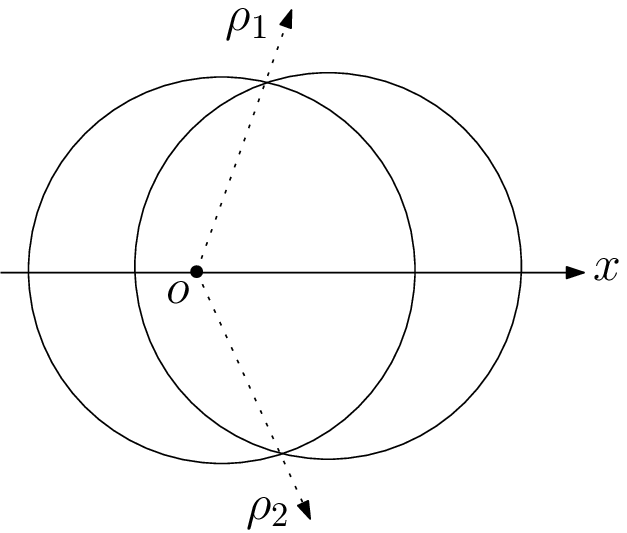}
\caption{\footnotesize Illustrating the nearby case.}
\label{fig:casenear}
\end{center}
\end{minipage}
\hspace{0.02in}
\begin{minipage}[t]{0.49\textwidth}
\begin{center}
\includegraphics[height=1.5in]{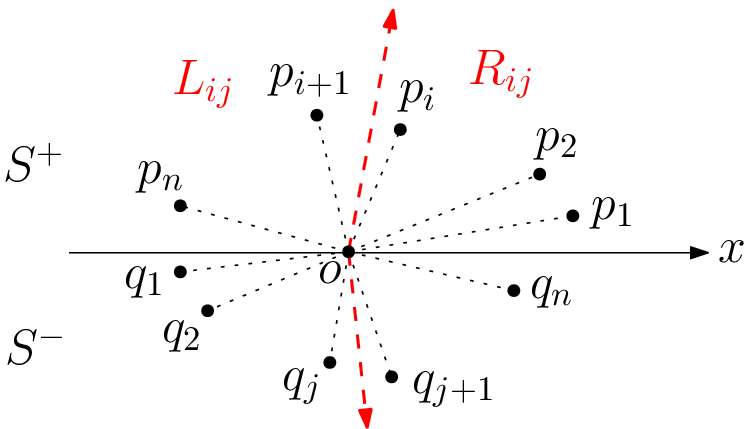}
\caption{\footnotesize Illustrating the points of $S^+$ and $S^-$.}
\label{fig:sort}
\end{center}
\end{minipage}
\vspace{-0.15in}
\end{figure}

Let $S^+$ denote the subset of points of $S$ above the $x$-axis, and $S^-=S\setminus S^+$. For notational simplicity, let $|S^+|=|S^-|=n$. Let $p_1,p_2,\ldots, p_n$ be the sorted list of $S^+$ counterclockwise around $o$, and $q_1,q_2,\ldots, q_n$ the sorted list of $S^-$ also counterclockwise around $o$ (e.g., see Fig.~\ref{fig:sort}).
For each $i=0,1,\ldots, n$ and $j=0,1,\ldots,n$, define $L_{ij}=\{p_{i+1}\ldots, p_n, q_1,\ldots,q_j\}$ and $R_{ij}=\{q_{j+1},\ldots,q_n,p_1,\ldots,p_i\}$. Note that if $i=n$, then $L_{ij}=\{q_1,\ldots,q_j\}$, and if $j=n$, then $R_{ij}=\{p_1,\ldots,p_i\}$. In words, if we consider a ray emanating from $o$ and between $p_i$ and $p_{i+1}$, and another ray emanating from $o$ and between $q_j$ and $q_{j+1}$, then $L_{ij}$ (resp., $R_{ij}$) consist of all points to the left (resp., right) of the two rays (e.g., see Fig.~\ref{fig:sort}).


Note that the partition of $S$ by the two rays $\rho_1\cup \rho_2$ is $\{L_{ij},R_{ij}\}$ for some $i$ and $j$, and thus $r^*=\max\{\tau(L_{ij}),\tau(R_{ij})\}$. Define $A[i,j]=\tau(L_{ij})$ and $B[i,j]=\tau(R_{ij})$, for all $0\leq i, j\leq n$. Then, $r^*=\min_{0\leq i, j\leq n}\max\{A[i,j],B[i,j]\}$.
If we consider $A$ and $B$ as $(n+1)\times (n+1)$ matrices, then each row of $A$ (resp., $B$) is monotonically increasing (resp., decreasing) and each column of $A$ (resp., $B$) is monotonically decreasing (resp., increasing).
For each $i\in [0,n]$, define $r^*_i=\min_{0\leq j\leq n}\max\{A[i,j],B[i,j]\}$. Thus, $r^*=\min_{0\leq i\leq n}r_i^*$.



\subsection{Circular Hulls}

For any point $c$ in the plane and a value $r$, we use $D_r(c)$ to denote the disk centered at $c$ with radius $r$. For a set $Q$ of points in the plane, define $\calI_r(Q)=\bigcap_{c\in Q}D_r(c)$, i.e., the common intersection of the disks $D_r(c)$ for all points $c\in Q$. Note that $\calI_r(Q)$ is convex.
A dual concept of $\calI_r(Q)$ is the {\em circular hull}~\cite{ref:HershbergerFi91} (also known as {\em $\alpha$-hull} with $\alpha=1$~\cite{ref:EdelsbrunnerOn83}; e.g., see Fig~\ref{fig:circularhull}), denoted by $\alpha_r(Q)$, which is the common intersection of all disks of radius $r$ containing $Q$. $\alpha_r(Q)$ is convex and unique. The vertices of $\alpha_r(Q)$ is a subset of $Q$ and the edges are arcs of circles of radius $r$. $\calI_r(Q)$ and $\alpha_r(Q)$ are dual to each other: Every arc of $\alpha_r(Q)$ is on the circle of radius $r$ centered at a vertex of $\calI_r(Q)$ and every arc of $\calI_r(Q)$ is on the circle of radius $r$ centered at a vertex of $\alpha_r(Q)$. Note that $\alpha_r(Q)$ exists if and only if $\calI_r(Q)\neq\emptyset$, which is true if and only $\tau(Q)\leq r$.
For brevity, we often drop the subscript $r$ from $\calI_r(Q)$ and $\alpha_r(Q)$ if it is clear from the context.

\begin{figure}[t]
\begin{minipage}[t]{0.49\textwidth}
\begin{center}
\includegraphics[height=1.2in]{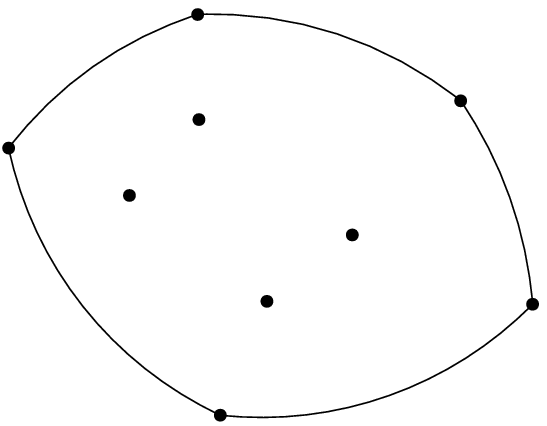}
\caption{\footnotesize Illustrating the circular hull of a set of points.}
\label{fig:circularhull}
\end{center}
\end{minipage}
\hspace{0.02in}
\begin{minipage}[t]{0.49\textwidth}
\begin{center}
\includegraphics[height=1.2in]{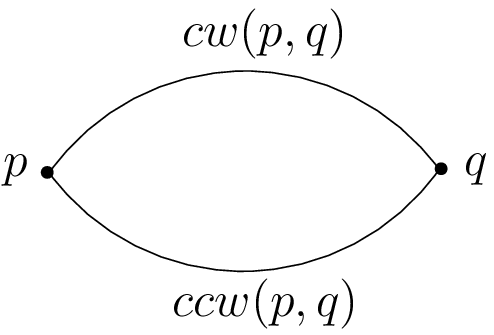}
\caption{\footnotesize Illustrating two minor arcs of $p$ and $q$.}
\label{fig:minorarc}
\end{center}
\end{minipage}
\vspace{-0.15in}
\end{figure}

Circular hulls will play a very important role in our algorithm.
As discussed in~\cite{ref:HershbergerFi91}, circular hulls have many
properties similar to convex hulls. However, circular hulls also have
special properties that convex hulls do not possess. For example, the
circular hull for a set of points may not exist. Also, the leftmost
point of a set $Q$ of points must be a vertex of the convex hull of $Q$, but this may not be the case for the circular hull. Due to these special properties, extending algorithms on convex hulls to circular hulls sometimes is not trivial, as will be seen later.
In the following, we introduce some concepts on circular hulls that will be needed later.

We assume that $r=1$ and thus a disk of radius $r$ is a {\em unit disk} (whose boundary is a {\em unit circle}). We use $\alpha(Q)$ to refer to $\alpha_r(Q)$. We assume that $\alpha(Q)$ exists.

For any arc of a circle, the circle is called the {\em supporting circle} of the arc, and the disk enclosed in the circle is called the {\em supporting disk} of the arc.
If a disk $D$ contains all points of a set $P$, then we say that $D$ {\em covers} $P$.
We say that a set $P$ of points in the plane is {\em unit disk coverable} if there is
a unit disk that contains all points of $P$, which is true if and only if $\alpha(P)$ exists.

Consider two points $p$ and $q$ that are unit disk coverable. There
must be a unit circle with $p$ and $q$ on it, and we call the arc of
the circle subtending an angle of at most $180^{\circ}$ a {\em minor
arc}~\cite{ref:HershbergerFi91}. Note that there are two minor arcs connecting $p$ and $q$, we
use $cw(p,q)$ to refer to the one clockwise around the center of the supporting circle of the arc
from $p$ to $q$, and use $ccw(p,q)$ to refer to the other one (e.g., see Fig.~\ref{fig:minorarc}).
Note that $cw(p,q)=ccw(q,p)$ and $ccw(p,q)=cw(q,p)$. For any minor arc
$w$, we use $D(w)$ to denote the supporting disk of $w$, i.e., the disk whose boundary contains $w$.
Note that all arcs of $\alpha(Q)$ are minor arcs.
We make a general position assumption that no point of $Q$ is on a minor arc of two other points of $Q$. The following observation has already been discovered previously~\cite{ref:EdelsbrunnerOn83,ref:HershbergerFi91}.

\begin{observation}{\em \cite{ref:EdelsbrunnerOn83,ref:HershbergerFi91}}\label{obser:basic}
\begin{enumerate}
\item
A point $p$ of $Q$ is a vertex of $\alpha(Q)$ iff there is a unit disk covering $Q$ and with $p$ on the boundary.
\item
A minor arc connecting two points of $Q$ is an arc of $\alpha(Q)$ iff its supporting disk covers $Q$.
\item
$\alpha(Q)$ is the common intersection of the supporting disks of all arcs of $\alpha(Q)$.
\item
A unit disk covers $Q$ iff it contains $\alpha(Q)$.
\item
For any subset $Q'$ of $Q$, $\alpha(Q')\subseteq \alpha(Q)$.
\end{enumerate}
\end{observation}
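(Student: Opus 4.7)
The plan is to anchor every part of the observation in the definition $\alpha(Q)=\bigcap\{D : D \text{ is a unit disk with } Q\subseteq D\}$, together with the obvious fact that every point of $Q$ lies in every such $D$, hence in $\alpha(Q)$. From this one immediately reads off Parts 4 and 5, and the remaining parts are characterizations of the boundary structure of $\alpha(Q)$.

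I would dispose of Parts 4 and 5 first. For Part 4, if a unit disk $D$ covers $Q$, then $D$ is one of the disks in the intersection defining $\alpha(Q)$, so $\alpha(Q)\subseteq D$; conversely, if $\alpha(Q)\subseteq D$, then $Q\subseteq\alpha(Q)\subseteq D$ because every point of $Q$ belongs to $\alpha(Q)$. For Part 5, every unit disk containing $Q$ also contains $Q'$, so the family of disks whose intersection defines $\alpha(Q')$ is a superset of the family whose intersection defines $\alpha(Q)$; hence $\alpha(Q')\subseteq\alpha(Q)$.

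For Part 1, the $(\Rightarrow)$ direction is the substantive one: if $p$ is a vertex of $\alpha(Q)$, then $p$ lies on $\partial\alpha(Q)$, and by the intersection-of-disks definition there must be some unit disk $D\supseteq Q$ with $p\in\partial D$ (otherwise a small neighborhood of $p$ would lie in every such $D$, placing $p$ in the interior of $\alpha(Q)$, contradiction). The $(\Leftarrow)$ direction follows because $p\in Q\subseteq\alpha(Q)\subseteq D$ and $p\in\partial D$ force $p\in\partial\alpha(Q)$, and a point of $Q$ on $\partial\alpha(Q)$ is by definition a vertex. For Part 2, if $w$ is a minor arc on $\partial\alpha(Q)$ with endpoints $p,q\in Q$ and supporting disk $D(w)$, then $w\subset\partial D(w)$ and $\alpha(Q)$ lies on one side of this arc; the general-position assumption (no point of $Q$ on a minor arc of two others) together with the fact that the endpoints of $w$ are vertices of $\alpha(Q)$ forces $Q\subseteq D(w)$. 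Conversely, if $D(w)$ covers $Q$, then $\alpha(Q)\subseteq D(w)$ by Part 4, and since $p,q$ are vertices of $\alpha(Q)$ lying on $\partial D(w)$, the arc $w$ of $\partial D(w)$ between them is an arc of $\partial \alpha(Q)$.

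Part 3 is the step I expect to be the main obstacle, since it says the intersection of the (potentially few) supporting disks of boundary arcs recovers $\alpha(Q)$, even though the defining intersection ranges over \emph{all} unit disks containing $Q$. The $\subseteq$ containment is immediate from Part 2 and Part 4. For the reverse, I would argue as follows: let $K$ be the intersection of the supporting disks of the arcs of $\alpha(Q)$. Then $K$ is a convex region bounded by pieces of those same unit circles, and by construction $\partial K$ contains $\partial\alpha(Q)$; since $K$ is convex and $\alpha(Q)$ is a closed convex region whose boundary is traced out exactly by these arcs, $K$ cannot strictly extend beyond $\alpha(Q)$, for any point $x\in K\setminus\alpha(Q)$ would have to lie on the opposite side of some boundary arc $w$ of $\alpha(Q)$ from $\alpha(Q)$, contradicting $x\in D(w)$. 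Thus $K=\alpha(Q)$. The only delicate case is verifying this when $\alpha(Q)$ has few arcs (two points, or degenerate configurations), which can be handled directly from the definition of a minor arc.
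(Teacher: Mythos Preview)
The paper does not prove this observation at all: it is stated with citations to \cite{ref:EdelsbrunnerOn83,ref:HershbergerFi91} and then used freely. So there is no ``paper's own proof'' to compare against; your proposal stands on its own and is essentially correct.

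A few small points worth tightening. In Part~2 $(\Rightarrow)$, the cleanest justification that $Q\subseteq D(w)$ is not via general position but simply that every arc of $\partial\alpha(Q)$ arises as part of $\partial D$ for some $D$ in the defining family $\{D:Q\subseteq D\}$, so $D(w)$ is one of those disks. In Part~2 $(\Leftarrow)$, you implicitly need $w\subseteq\alpha(Q)$ to conclude $w\subseteq\partial\alpha(Q)$; this follows from Part~5 applied to $\{p,q\}$, since $w\subseteq\alpha(p,q)\subseteq\alpha(Q)$, and then the general-position assumption rules out any other point of $Q$ lying on $w$. In Part~3, your topological argument that $\partial\alpha(Q)\subseteq\partial K$ and $\alpha(Q)\subseteq K$ with both convex forces $K=\alpha(Q)$ is fine; the one-line version is: take $x\in K\setminus\alpha(Q)$ and $y\in\operatorname{int}\alpha(Q)\subseteq\operatorname{int} K$, then the segment $xy$ meets $\partial\alpha(Q)\subseteq\partial K$ at a point strictly between $x$ and $y$, contradicting convexity of $K$. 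With these clarifications the argument is complete.
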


For any vertex $v$ of $\alpha(Q)$, we refer to the clockwise neighboring vertex of $v$ on $\alpha(Q)$ the {\em clockwise neighbor} of $v$, and the {\em counterclockwise} neighbor is defined analogously. We use $cw(v)$ and $ccw(v)$ to denote $v$'s clockwise and counterclockwise neighbors, respectively.
%

\paragraph{Tangents.}
Consider a vertex $v$ in the circular hull $\alpha(Q)$. Consider the arc $cw(ccw(v),v)$ of $\alpha(Q)$.
Let $D$ be the disk $D(cw(ccw(v),v))$. By Observation~\ref{obser:basic}(2) and (4), $D$ contains $\alpha(Q)$. Observe that if we rotate $D$ around $v$ clockwise until $\partial D$ contains the arc $cw(v,cw(v))$, $D$ always contains $\alpha(Q)$, and in fact, this continuum of disks $D$ are the only unit disks that contain $\alpha(Q)$ and have $v$ on the boundaries. For each of such disk $D$, we say that $D$ (and any part of $\partial D$ containing $v$) is {\em tangent} to $\alpha(Q)$ at $v$. We have the following observation.

\begin{observation}\label{obser:tangent}
A unit disk $D$ that contains a vertex $v$ of $\alpha(Q)$ on its boundary is tangent to $\alpha(Q)$ at $v$ if and only if $D$ contains both $cw(v)$ and $ccw(v)$.
\end{observation}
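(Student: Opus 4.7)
The forward direction is immediate from the definition of tangency: if $D$ is tangent to $\alpha(Q)$ at $v$, then $D\supseteq\alpha(Q)$, so $D$ contains the vertices $cw(v),ccw(v)$ of $\alpha(Q)$.

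For the backward direction, let $D$ be a unit disk with $v\in\partial D$ and $cw(v),ccw(v)\in D$. The plan is to parametrize the one-parameter family of unit disks with $v$ on their boundary by the position of the center on the unit circle $C_v$ centered at $v$, and to pin down the subfamily in which \emph{both} $cw(v)$ and $ccw(v)$ lie in the disk. Write $D_1=D(cw(ccw(v),v))$ and $D_2=D(cw(v,cw(v)))$ with centers $c_1,c_2\in C_v$, and let $T\subseteq C_v$ be the arc of centers of tangent disks at $v$, namely the arc traced by the center as $D_1$ is rotated clockwise around $v$ into $D_2$. The goal is to show the center $c$ of $D$ lies in $T$. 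Set $I_1=\{c'\in C_v:|c'-ccw(v)|\le 1\}$ and $I_2=\{c'\in C_v:|c'-cw(v)|\le 1\}$; these are precisely the centers whose unit disks contain $ccw(v)$ and $cw(v)$ respectively. Each $I_j$ is a closed arc of $C_v$ of angular width $2\arccos(d_j/2)<\pi$ (where $d_j$ is the distance from $v$ to the relevant neighbor), with endpoints at the centers of the two unit circles through $v$ and that neighbor; in particular $c_1$ is an endpoint of $I_1$ and $c_2$ is an endpoint of $I_2$. The hypothesis on $D$ is equivalent to $c\in I_1\cap I_2$, so it suffices to prove $I_1\cap I_2=T$.

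The inclusion $T\subseteq I_1\cap I_2$ follows from the forward direction applied to each tangent disk. For the reverse inclusion, I would verify that $I_1$ opens from its endpoint $c_1$ into (rather than away from) the tangent arc $T$: the defining rotation of $T$ keeps $\alpha(Q)\ni ccw(v)$ inside the disk, so the rotated centers lie in $I_1$, which pins down the direction in which $I_1$ opens at $c_1$; a symmetric argument applies to $I_2$ at $c_2$. Since each $I_j$ has width strictly less than $\pi$, any two of its points are connected by a short sub-arc of $C_v$ lying in $I_j$; hence $I_1\cap I_2$ is a single connected sub-arc of $C_v$. Combined with the opening-direction facts, this sub-arc has endpoints $c_1$ and $c_2$, which is exactly $T$. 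The main technical ingredient is establishing these opening directions at $c_1,c_2$, which reflects the convex geometry of $\alpha(Q)$ at $v$: both neighbors $ccw(v),cw(v)$ lie on the ``interior'' side of the outward normal cone at $v$ (whose angular range corresponds to $T$), so $I_1$ and $I_2$ both lean into $T$ from their respective endpoints on $T$.
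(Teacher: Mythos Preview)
The paper states this observation without proof, treating it as evident from the preceding definition of tangency (which itself asserts, without justification, that the rotation from $D(cw(ccw(v),v))$ to $D(cw(v,cw(v)))$ yields exactly the unit disks containing $\alpha(Q)$ with $v$ on the boundary). Your argument is correct and fills in precisely this gap.

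Your route---parametrizing unit disks through $v$ by their centers on $C_v$ and proving $I_1\cap I_2=T$---works cleanly. The crucial structural facts are (i) $c_1$ is an endpoint of $I_1$ (because $|c_1-ccw(v)|=1$) and likewise $c_2$ of $I_2$; (ii) $T\subseteq I_1\cap I_2$ forces each $I_j$ to open from its endpoint into $T$; (iii) width $<\pi$ makes $I_1\cap I_2$ a single arc. These combine to pin the intersection to exactly $[c_1,c_2]=T$, as you say. An equivalent but more conceptual phrasing uses the duality between $\alpha(Q)$ and $\calI(Q)$: a unit disk with $v$ on its boundary contains $\alpha(Q)$ iff its center lies in $\calI(Q)$, so the tangent centers are $C_v\cap\calI(Q)$; by duality this is the arc of $\partial\calI(Q)$ on the circle $C_v$, whose endpoints are the $\calI(Q)$-vertices dual to the two $\alpha(Q)$-arcs incident to $v$, namely $c_1$ and $c_2$. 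Your direct argument is doing the same computation without naming $\calI(Q)$.
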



\begin{figure}[t]
\begin{minipage}[t]{0.47\textwidth}
\begin{center}
\includegraphics[height=1.2in]{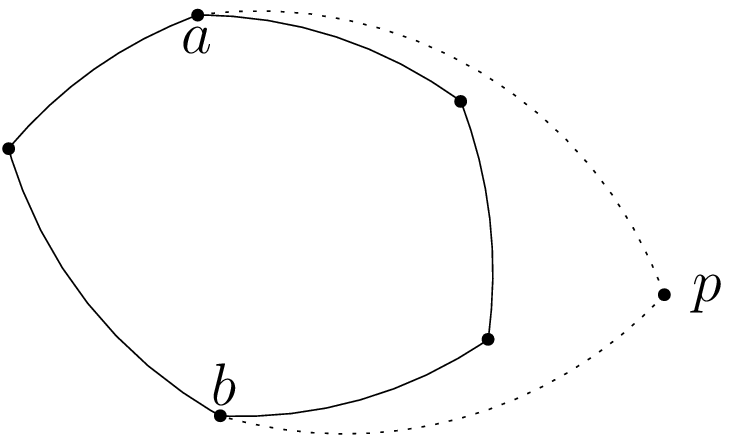}
\caption{\footnotesize Illustrating the two tangents from $p$ to $\alpha(Q)$: $cw(a,p)$ and $ccw(b,p)$.}
\label{fig:tangent}
\end{center}
\end{minipage}
\hspace{0.02in}
\begin{minipage}[t]{0.52\textwidth}
\begin{center}
\includegraphics[height=1.5in]{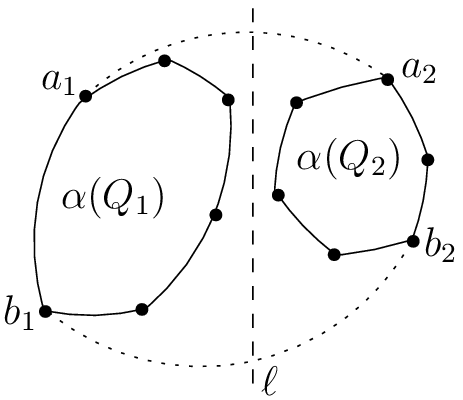}
\caption{\footnotesize Illustrating the upper common tangent $cw(a_1,a_2)$ and the lower common tangent $ccw(b_1,b_2)$ of $\alpha(Q_1)$ and $\alpha(Q_2)$.}
\label{fig:commontangent}
\end{center}
\end{minipage}
\vspace{-0.15in}
\end{figure}

Let $p$ be a point outside $\alpha(Q)$. If there is a vertex $a$ on $\alpha(Q)$ such that $D(cw(a,p))$ is tangent to $\alpha(Q)$ at $a$, then the arc $cw(a,p)$ is an {\em upper tangent} from $p$ to $\alpha(Q)$; e.g., see Fig~\ref{fig:tangent}.
If there is a vertex $b$ on $\alpha(Q)$ such that $D(ccw(b,p))$ is tangent to $\alpha(Q)$ at $b$, then the arc $ccw(b,p)$ is a {\em lower tangent} from $p$ to $\alpha(Q)$.
By replacing the arcs of $\alpha(Q)$ clockwise from $a$ to $b$ with the two tangents from $p$, we obtain $\alpha(Q\cup \{p\})$.
This also shows that $p$ has tangents to $\alpha(Q)$ if and only if $Q\cup\{p\}$ is unit disk coverable and $p$ is outside $\alpha(Q)$.
Note that $a=b$ is possible, in which case $\alpha(Q\cup \{p\})=\alpha(\{a,p\})$.

\paragraph{Common tangents of two circular hulls.}
Let $Q_1$ and $Q_2$ be two sets of points in the plane such that all points of $Q_1$ (resp., $Q_2$) are to the left (resp., right) of a vertical line $\ell$. Let $Q=Q_1\cup Q_2$.
A unit disk $D$ that is tangent to $\alpha(Q_1)$, say at a vertex $a$, and is also tangent to $\alpha(Q_2)$, say at a vertex $b$, is said to be {\em commonly tangent} to $\alpha(Q_1)$ and $\alpha(Q_2)$. The minor arc of $D$ connecting $a$ and $b$ is called a {\em common tangent} of the two circular hulls. It is an {\em upper} (resp, {\em lower}) tangent if it is clockwise (resp., counterclockwise) from $a$ to $b$ along the minor arc (e.g., see Fig.~\ref{fig:commontangent}).
The common tangents of $\alpha(Q_1)$ and $\alpha(Q_2)$ may not exist.
Indeed, if $\alpha(Q)$ does not exist, then the common tangents do not
exist. Otherwise the common tangents do not exist either if all points
of $Q_2$ are contained in $\alpha(Q_1)$, which happens only if $Q_2$
is covered by $D(w)$ for the rightmost arc $w$ of $\alpha(Q_1)$ and we
call it the {\em $Q_1$-dominating case}, or if all points of $Q_1$ are
contained in $\alpha(Q_2)$, which happens only if $Q_1$ is covered by $D(w')$ for the leftmost arc $w'$ of $\alpha(Q_2)$ and we call it the {\em $Q_2$-dominating case}.
If none of the above cases happens, then there are exactly two common tangents between the two hulls.
Each tangent intersects the vertical line $\ell$, which separates
$Q_1$ and $Q_2$, and the upper tangent intersects $\ell$ higher than the lower tangent does.

Suppose $\calL$ is a sequence of points and $p$ and $q$ are two points of $\calL$. We will adhere to the convention that a subsequence of $\calL$ {\em from $p$ to $q$} includes both $p$ and $q$, but a subsequence of $\calL$ {\em strictly} from $p$ to $q$ does not include either one. In many cases, $\calL$ is a cyclic sequence of points, e.g., vertices on a circular hull, and we often say points of $\calL$ clockwise/counterclockwise (strictly) from $p$ to $q$.

\section{The Decision Problem}
\label{sec:decision}

This section is concerned with the decision problem: Given a value $r$, decide whether $r^*\leq r$. Previously, Chan~\cite{ref:ChanMo99} solved the problem in $O(n\log n)$ time (Chan actually considered a slightly different problem: decide whether $r^*<r$, but the idea is similar). We present an $O(n)$ time algorithm, after $O(n\log n)$ time preprocessing to sort all points of $S^+$ and $S^-$ to obtain the sorted lists $p_1,\ldots, p_n$ and $q_1,\ldots, q_n$.

Given $r$, we use the following algorithmic framework in Algorithm~\ref{algo:case3} from~\cite{ref:ChanMo99} (see Theorem 3.3), which can decide whether $r^*\leq r$, and if yes, report all indices $i$ with $r_i^*\leq r$.

\begin{algorithm}[h]
	\caption{The decision algorithm of Chan~\cite{ref:ChanMo99}}
	\label{algo:case3}
	\SetAlgoNoLine
	$j\leftarrow -1$\;
	\For{$i\leftarrow 0$ \KwTo $n$}
	{
        \lWhile{$A[i,j+1]\leq  r$}
        {
         $j++$
        }
		\lIf{$B[i,j]\leq r$}
		{
			{\bf report} $i$
	    }
	}
\end{algorithm}


The algorithm is simple, but the technical crux is in how to decide
whether $A[i,j+1]\leq r$ and whether $B[i,j]\leq r$.
Chan~\cite{ref:ChanMo99} built a data structure in $O(n\log n)$ time
so that each of these two steps can be done in $O(\log n)$ time, which
leads to an overall $O(n\log n)$ time for his decision algorithm.
Our innovation is a new data structure that can perform each of the two
steps in $O(1)$ amortized time, resulting in an $O(n)$ time
algorithm.
Our idea is motivated by the following observation.

\begin{observation}
All such elements $A[i,j+1]$ that are checked in the algorithms (i.e., Line 3) are in a path of the matrix $A$ from $A[0,0]$ to an element in the bottom row and the path only goes rightwards or downwards. The same holds for the elements of $B$ that are checked in the algorithms (i.e., Line 4).
\end{observation}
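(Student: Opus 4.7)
The plan is to establish the observation by tracing how the two loop indices $i$ and $j$ evolve during the execution of Algorithm~\ref{algo:case3}. The fundamental fact I will rely on is that both $i$ and $j$ are monotonically non-decreasing throughout the algorithm: the index $i$ is incremented by exactly $1$ each time the outer for-loop advances and is never decreased, while $j$ is incremented by exactly $1$ in each iteration of the while loop in Line 3 and is likewise never decreased. This is immediate from the code.

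First I will handle the $A$-path. I list the entries $A[i,j+1]$ inspected in Line 3 in the order they are inspected. Between two consecutive such inspections, exactly one of two things happens. If the while-loop test was true, then $j$ advances to $j+1$ while $i$ stays fixed, so the next checked entry is $A[i,(j+1)+1]$, one step to the right in the matrix. Otherwise the while loop terminated and control returns to the for loop, incrementing $i$ and leaving $j$ fixed, so the next checked entry is $A[i+1,j+1]$, one step down. Since $i=0$ and $j=-1$ at entry, the very first check is $A[0,0]$, and because the algorithm proceeds until $i=n$, the path terminates in the bottom row of $A$. This yields the $A$-part of the claim.

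For the $B$-path, note that Line 4 is executed at most once per outer iteration, with the current values of $i$ and $j$. Between two consecutive executions of Line 4, the value of $i$ increases by exactly $1$, while $j$ may have increased by zero or more (depending on how many times the inner while loop fired). Thus the row index of the checked $B$-entries strictly increases by $1$ and the column index weakly increases, so these entries also lie along a path in $B$ that moves only rightwards or downwards, terminating in row $n$. (The path for $B$ need not start at $B[0,0]$, but merely at $B[0,j_0]$, where $j_0$ is the value of $j$ after the while loop on the first outer iteration; this is consistent with ``the same holds.'')

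There is no real obstacle in this proof; it is essentially a bookkeeping argument about the monotone behavior of the loop variables. The only subtlety to flag in the write-up is the starting point of the $B$-path noted above, and the boundary behavior when $j$ would exceed $n$ in the while loop — but since $j\leq n$ is maintained (otherwise $A[i,j+1]$ would be undefined and the loop is understood to terminate), the path lies entirely within the matrix.
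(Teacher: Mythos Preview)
Your argument is correct. The paper states this observation without proof, treating it as immediate from the structure of Algorithm~\ref{algo:case3}; your write-up supplies exactly the bookkeeping that justifies it, and your remarks about the starting column of the $B$-path and the boundary case $j=n$ are appropriate caveats that the paper leaves implicit.
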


We call such a path in $A$ as specified in the observation a {\em monotone path}, which contains at most $2(n+1)$ elements of $A$. We show that we can determine in $O(n)$ time whether $A[i,j]\leq r$ for all elements $A[i,j]$ in a monotone path of $A$. The same algorithm works for $B$ as well.

Let $\pi$ be a monotone path of $A$, starting from $A[0,0]$. Consider any element $A[i,j]$ on
$\pi$. Recall that $A[i,j]=\tau(L_{ij})$. The next value of $\pi$ after
$A[i,j]$ is either $A[i,j+1]$ or $A[i+1,j]$, i.e., either
$\tau(L_{i,j+1})$ or $\tau(L_{i+1,j})$. Note that $L_{i,j+1}$ can be
obtained from $L_{ij}$ by inserting $q_{j+1}$ and $L_{i+1,j}$ can be
obtained from $L_{ij}$ by deleting $p_{i+1}$. Because the points
$p_1,p_2,\ldots,p_n,q_1,q_2\ldots, q_n$ are ordered around $o$
counterclockwise, our problem becomes the following. Maintain a
sublist $Q$ of the above sorted list of $S$, with
$Q=S^+$ initially, to determine whether $\tau(Q)\leq
r$ (or equivalently whether $\alpha_r(Q)$ exists) under deletions and
insertions, such that a deletion operation deletes the first point
of $Q$ and an insertion operation inserts the point of $S$ following
the last point of $Q$. Further, deletions only happen to
points of $S^+$ (i.e., once $p_n$ is deleted from $Q$, no deletions
will happen). We refer to the problem as the {\em dynamic circular
hull problem}. We will show in Section~\ref{sec:circularhull} that the problem can be
solved in $O(n)$ time, i.e., each update takes $O(1)$ amortized time.
This leads to the following result.

\begin{theorem}\label{theo:decisionserial}
Assume that points of $S$ are sorted cyclically around $o$. Given
any $r$, whether $r^*\leq r$ can be decided in $O(n)$ time.
\end{theorem}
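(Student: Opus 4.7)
The plan is to combine Chan's framework from Algorithm~\ref{algo:case3} with the monotone-path observation and the dynamic circular hull data structure promised for Section~\ref{sec:circularhull}. First I would run Algorithm~\ref{algo:case3} essentially verbatim; its correctness for deciding $r^*\le r$ is already established in~\cite{ref:ChanMo99}, so the only thing to argue is that the work inside the loop (the tests $A[i,j+1]\le r$ and $B[i,j]\le r$) can be performed in $O(1)$ amortized time per test, given the preprocessing that produces the sorted lists $p_1,\ldots,p_n$ and $q_1,\ldots,q_n$.

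Next I would make the reduction to dynamic circular hulls explicit. By the Observation preceding the theorem, the sequence of pairs $(i,j)$ at which $A[i,j+1]$ is evaluated traces a monotone path in the matrix $A$ starting at $A[0,0]$, moving only right or down, and containing at most $2(n{+}1)$ entries; the same holds for $B$. Along this path, $L_{i,j+1}$ arises from $L_{ij}$ by inserting $q_{j+1}$ (a right-move), and $L_{i+1,j}$ arises from $L_{ij}$ by deleting $p_{i+1}$ (a down-move). Because $p_1,\ldots,p_n,q_1,\ldots,q_n$ are sorted counterclockwise around $o$, each insertion appends the point of $S$ immediately following the last point of the current list $L_{ij}$, and each deletion removes the first point of $L_{ij}$; moreover, deletions only occur while the first point still belongs to $S^+$. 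Thus testing whether $A[i,j]\le r$ is equivalent to testing whether $\alpha_r(L_{ij})$ exists, and the sequence of tests along the path is exactly an instance of the dynamic circular hull problem as defined before the theorem. The same reduction, symmetrically, applies to $B$ and $R_{ij}$.

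I would then invoke the data structure of Section~\ref{sec:circularhull}, which by the paper's plan supports each such insertion/deletion in $O(1)$ amortized time while maintaining enough information about $\alpha_r(Q)$ to decide its existence after every update. Summing over the at most $4(n{+}1)$ updates along the two monotone paths, together with the $O(n)$ initial work to build $\alpha_r(S^+)$ (or detect that it does not exist) to initialize the structure for the $A$-side, and the analogous initialization on the $B$-side, yields $O(n)$ total time. Since the sorted lists are given by hypothesis, no extra preprocessing is required beyond this.

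The main obstacle is not in this theorem itself but rather in the black box it cites: building a dynamic circular hull structure that attains $O(1)$ amortized time per update when insertions are appended on one side and deletions are peeled off the other, and that permits $O(1)$ existence queries for $\alpha_r(\cdot)$. Convex-hull analogues of this operation are classical, but as the introduction emphasizes, circular hulls do not share all of the structural properties of convex hulls (e.g.\ the leftmost point need not be a vertex, and the hull can fail to exist), so porting the standard amortized argument is delicate; this is precisely what Section~\ref{sec:circularhull} is designed to handle, and once that result is in hand the theorem follows immediately from the reduction above.
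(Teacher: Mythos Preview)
Your proposal is correct and follows essentially the same approach as the paper: the theorem is stated immediately after the discussion that reduces Algorithm~\ref{algo:case3} to two monotone-path instances of the dynamic circular hull problem, and its proof is precisely that reduction together with the forward reference to Section~\ref{sec:circularhull}. Your write-up is, if anything, slightly more explicit than the paper's own exposition (e.g., counting the $4(n{+}1)$ updates and noting the initialization cost), but the argument is the same.
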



\paragraph{Remark.} For the nearby case, Chan proposed (in Theorem 3.4~\cite{ref:EppsteinFa97}) a randomized algorithm of $O(n\log n)$ time with high probability (i.e., $1-2^{-\Omega(n/\log^{12}n)}$) by using his $O(n\log n)$ time decision algorithm. Applying our linear time decision algorithm and following Chan's algorithm (specifically, setting $m$ to $\lfloor n/\log^7n \rfloor$ instead of $\lfloor n/\log^6n\rfloor$ in the algorithm of Theorem 3.4 in~\cite{ref:EppsteinFa97}), we can obtain the following result: After $O(n\log n)$ deterministic time preprocessing, we can compute $r^*$ for the nearby case in $O(n)$ time with high probability (i.e., $1-2^{-\Omega(n/\log^{14}n)}$).

\section{The Optimization Problem}
\label{sec:optimization}

With Theorem~\ref{theo:decisionserial}, we solve the optimization problem by parametric search~\cite{ref:ColeSl87,ref:MegiddoAp83}. As Chan's algorithm~\cite{ref:ChanMo99}, because our decision algorithm
is inherently sequential, we need to design a parallel decision algorithm. Chan~\cite{ref:ChanMo99} gave a parallel decision algorithm that runs in $O(\log n\log^2\log n)$ parallel steps using $O(n\log n)$ processors. Consequently, by using his $O(n\log n)$ time decision algorithm and applying Cole's parametric search~\cite{ref:ColeSl87}, Chan~\cite{ref:ChanMo99} solved the optimization problem in $O(n\log^2 n\log^2\log n)$ time. By following Chan's algorithmic scheme, we develop a new parallel decision algorithm that runs in $O(\log n\log \log n)$ parallel steps using $O(n)$ processors. Then, with the serial decision algorithm in Theorem~\ref{theo:decisionserial} and applying Cole's parametric search~\cite{ref:ColeSl87} on our new parallel decision algorithm,
we solve the optimization problem in $O(n\log n\log\log n)$ time.

Our algorithm relies on the following lemma, whose proof is quite independent of the remainder of this section and will be given in Section~\ref{sec:commontangent}. Note that Hershberger and Suri~\cite{ref:HershbergerFi91} gave a linear-time algorithm to achieve the same result as Lemma~\ref{lem:commontangent}, which suffices for their purpose.

\begin{lemma}\label{lem:commontangent}
Given the circular hull (with respect to a radius $r$) of a set $L$ of points and the circular hull of another set $R$ of points such that the points of $L$ and $R$ are separated by a line, one can do the following in $O(\log (|L|+|R|))$ time (assuming that the vertices of each circular hull are stored in a data structure that supports binary search):
determine whether the circular hull of $L\cup R$ (with respect to $r$) exists; if yes, either determine which dominating case happens (i.e., all points of a set are contained in the circular hull of the other set) or compute the two common tangents between the circular hulls of $L$ and $R$.
\end{lemma}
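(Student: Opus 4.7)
The plan is to mirror the classical Overmars--van Leeuwen paradigm for computing common tangents of two convex hulls, replacing each ``which side of a line'' test by a ``which side of a unit circle'' test driven by Observation~\ref{obser:tangent}. Let $\ell$ be the separating line, assumed vertical with $L$ on its left. First, by two $O(\log(|L|+|R|))$ binary searches on the vertex sequences, I locate the rightmost arc $w_L$ of $\alpha(L)$ and the leftmost arc $w_R$ of $\alpha(R)$; this is routine because outward arc normals turn monotonically around each hull. Then I decide the two dominating cases by testing whether $D(w_L)$ contains every vertex of $\alpha(R)$, and symmetrically whether $D(w_R)$ contains every vertex of $\alpha(L)$. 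Each test reduces to a farthest-point query on a convex region from a fixed point, which the binary-searchable representation of a circular hull supports in $O(\log n)$ time via the standard unimodality trick along the boundary. By Observation~\ref{obser:basic}(4), a positive answer is equivalent to the corresponding dominating case.

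If neither dominating case holds, I search for the two common tangents by a simultaneous binary search on the two hulls; the upper-tangent case is described below and the lower one is symmetric. Maintain an interval $I_L$ of candidate upper-tangent vertices on $\alpha(L)$, initialised using the extremal vertices found above, and an analogous interval $I_R$ on $\alpha(R)$. In each round take the middle vertices $a\in I_L$ and $b\in I_R$, form the minor arc $cw(a,b)$, and evaluate the four membership predicates $cw(a),ccw(a)\in D(cw(a,b))$ and $cw(b),ccw(b)\in D(cw(a,b))$. By Observation~\ref{obser:tangent}, the arc $cw(a,b)$ is the upper common tangent precisely when all four predicates hold. Otherwise I use which predicate fails, and on which side of $\partial D(cw(a,b))$ the violating neighbour lies, to halve either $I_L$ or $I_R$. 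After $O(\log(|L|+|R|))$ rounds the intervals collapse either to a witnessing tangent pair or to a certificate that no upper common tangent exists; combined with the symmetric lower-tangent search, this decides whether $\alpha(L\cup R)$ exists.

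The main technical obstacle is justifying the discarding rule, that is, showing that each failed predicate eliminates exactly one half of one interval. Here the argument diverges from the convex-hull setting, where orientation tests compose cleanly along straight lines. For circular hulls I plan to establish two monotonicity lemmas. First, as $a$ slides clockwise along $\alpha(L)$ with $b$ held fixed, the centre of $D(cw(a,b))$ traces a smooth curve on a specific side of $\ell$, and each of the four predicates above switches truth value at most once along this trajectory. Second, the upper-tangent vertex on $\alpha(L)$ is the unique vertex at which both $L$-side predicates hold simultaneously, and symmetrically on $\alpha(R)$. Together these give an unambiguous case analysis matching the four cases of the Overmars--van Leeuwen scheme, so each round halves one of $I_L$ and $I_R$ and the total cost is $O(\log(|L|+|R|))$, as required.
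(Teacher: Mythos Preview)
Your high-level strategy matches the paper's: an Overmars--van Leeuwen style double binary search, with the dominating cases handled first and the tangent search driven by local tests at the current candidate pair $(a,b)$. The gap is in the discard rule, which you correctly flag as the crux but do not resolve.

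Your proposed criterion---check whether each of $cw(a),ccw(a),cw(b),ccw(b)$ lies in $D(cw(a,b))$---correctly recognises the tangent pair (all four predicates hold) but does \emph{not} distinguish ``$a$ is clockwise of the true tangent point on $\alpha(L)$'' from ``$a$ is counterclockwise of it.'' Your monotonicity conjecture (each predicate flips at most once as $a$ slides with $b$ fixed) is not established and, in the form stated, is not how the paper proceeds. The paper instead proves a separate lemma (its Lemma~\ref{lem:binarysearch}) giving a \emph{non-local} test: with $c_1$ the rightmost vertex of $\alpha(L)$, a vertex $v\neq c_1$ lies before the upper tangent point in the counterclockwise order from $c_1$ iff the four points $cw(v),v,c_1,ccw(c_1)$ are all contained in $D(cw(v,q))$ or all contained in $D(cw(c_1,q))$. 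This global anchor at $c_1$ is what makes the $A$/$C$ decision unambiguous; the purely local neighbour-in-disk test you propose does not carry enough information for that.

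There is a second piece you are missing. In the paper's $(C,C)$ case---both $p$ and $q$ already past their respective tangent points---the four-predicate test cannot by itself decide which interval to halve. The paper subdivides $C$ into $C_1$ (vertices between the upper and lower tangent points) and $C_2$ (the rest), producing subcases $(C_1,C_2)$, $(C_2,C_1)$, $(C_2,C_2)$, $(C_1,C_1)$; the last of these is resolved by locating an intersection of two tangent disks relative to the separating line $\ell$, exactly mirroring the two subcases i1/i2 of Overmars--van Leeuwen. Your four-case sketch does not account for this refinement. So: the plan is right, but the engine that drives the binary search needs the anchored criterion of Lemma~\ref{lem:binarysearch} and the $C_1/C_2$ split, not the bare neighbour tests.
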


For any $0\leq i\leq j\leq n$, let $S^+[i,j]=\{p_{i},p_{i+1},\ldots,p_{j}\}$ and $S^-[i,j]=\{q_{i},q_{i+1},\ldots,q_{j}\}$.
By using Lemma~\ref{lem:commontangent}, we have the following lemma.

\begin{lemma}\label{lem:parallelquery}
We can preprocess $S$ and compute an interval $(r_1,r_2]$ containing
$r^*$ in $O(n\log n)$ time so that given any $r\in (r_1,r_2)$ and any pair $(i,j)$ with $1\leq i\leq j\leq n$,
we can determine whether $\alpha_r(S^+[i,j])$ (resp., $\alpha_r(S^-[i,j])$) exists, and if yes, return the root of a balanced binary search tree representing the circular hull, in
$O(\log k\log\log k)$ parallel steps using $O(\log k)$ processors, or in $O(\log^2k)$ time using one processor, where $k=j-i+1$.
\end{lemma}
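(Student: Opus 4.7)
The plan is to build a balanced segment tree $T^+$ over the radially sorted sequence $p_1,\ldots,p_n$ (and analogously $T^-$ over $q_1,\ldots,q_n$). At each internal node $v$ with canonical subset $P_v$, I store a balanced BST representing the circular hull $\alpha_r(P_v)$ that is combinatorially correct for every $r\in (r_1,r_2]$. A query $(i,j)$ is answered by decomposing $[i,j]$ into the $O(\log k)$ canonical segment-tree subsets covering it, fetching the stored hulls, and merging them pairwise in a balanced binary tree of merges of height $O(\log\log k)$. Each merge applies Lemma~\ref{lem:commontangent} in $O(\log k)$ time to obtain the two common tangents (or to detect a dominating case, or to report non-existence of the merged hull), and then splits and joins the inputs' BBSTs at the tangent endpoints; path-copying BBSTs keep the stored canonical hulls intact for later queries. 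Two consecutive canonical subsets of $T^+$ are separated by the ray from $o$ lying between their angular ranges (hence by a line), so Lemma~\ref{lem:commontangent} is applicable at every merge. This yields the claimed $O(\log k\log\log k)$ parallel bound with $O(\log k)$ processors and, by summing the work across levels, the $O(\log^2 k)$ serial bound.

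For preprocessing, after sorting $S^+$ and $S^-$ around $o$ and building the segment trees in $O(n\log n)$ time, I construct the stored hull at every node for a tentative value of $r$. A single bottom-up construction costs $O(\log|P_v|)$ per node (one call to Lemma~\ref{lem:commontangent} plus one BBST split/join), totalling $O(n\log n)$. The stored hulls remain combinatorially valid throughout an interval $(r_1,r_2]$ as long as no \emph{critical} $r$-value of any stored canonical hull lies strictly inside it, where a critical value is one at which the vertex set of some $\alpha_r(P_v)$ changes. By the standard $\alpha$-shape complexity argument (each point of $P_v$ enters or leaves the boundary of $\alpha_r(P_v)$ only $O(1)$ times as $r$ grows), node $v$ contributes $O(|P_v|)$ critical values and the total across $T^+$ and $T^-$ is $O(n\log n)$. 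I isolate $(r_1,r_2]$ by $O(\log n)$ rounds of a weighted-median binary search against $r^*$: each round takes the median of the undecided critical values in linear time and resolves it with a single $O(n)$-time call to the serial decision algorithm of Theorem~\ref{theo:decisionserial}.

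The main obstacle I anticipate is keeping the preprocessing cost of this binary search within the $O(n\log n)$ budget, since naively re-scanning all $\Theta(n\log n)$ critical values across the $O(\log n)$ median selections would cost $\Theta(n\log^2 n)$. Two steps fix this. First, I enumerate the critical $r$-values as a by-product of the single bottom-up hull construction: at each merge, the supporting circles of the children's arcs together with the common-tangent information from Lemma~\ref{lem:commontangent} directly expose the $O(|P_v|)$ radii at which the merged hull would gain or lose a vertex, and these can be emitted in $O(|P_v|)$ additional time per node. Second, because each decision call halves the undecided critical set, the sizes of the subsequent median-selection problems form a geometric series totalling $O(n\log n)$; combined with the $O(n\log n)$ aggregate cost of the $O(\log n)$ decision calls, the overall search fits in the budget. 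Once $(r_1,r_2]$ has been isolated, one final $O(n\log n)$ bottom-up construction produces the stored hulls that are combinatorially valid throughout the interval, and the query algorithm described in the first paragraph delivers the promised bounds.
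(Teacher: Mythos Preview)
Your query algorithm is essentially identical to the paper's: decompose $[i,j]$ into $O(\log k)$ canonical subsets of a balanced tree over $S^+$ (resp.\ $S^-$), merge their stored hulls along a balanced binary tree of height $O(\log\log k)$, each merge via Lemma~\ref{lem:commontangent} plus path-copying BBST split/joins. The overall preprocessing scheme---store parameterized hulls at the canonical nodes and isolate an interval $(r_1,r_2]$ by binary search against $r^*$ over a set of $O(n\log n)$ critical radii---also matches the paper.

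The gap is in how you obtain the critical radii. You claim that a \emph{single} bottom-up hull construction at one tentative value of $r$, together with the common-tangent information from Lemma~\ref{lem:commontangent}, ``directly exposes'' the $O(|P_v|)$ radii at which $\alpha_r(P_v)$ changes combinatorially. This is not so: the critical radii of $\alpha_r(P_v)$ are the circumradii of the triangles in the farthest-point Delaunay triangulation of $P_v$; they are a global property of $P_v$ and cannot be read off from one snapshot of the children's arcs and one pair of common tangents at a fixed $r$. The paper sidesteps this by the standard lifting $p=(a,b)\mapsto h(p)=\{z\ge a^2+b^2-2ax-2by\}$: for each canonical subset $P$ it builds the 3D polyhedron $\mathcal{H}(P)=\bigcap_{p\in P}h(p)$ bottom-up using linear-time convex-polyhedra intersection (total $O(n\log n)$), and then the critical radii are simply $\sqrt{x^2+y^2+z}$ over the vertices $(x,y,z)$ of these polyhedra. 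Intersecting $\mathcal{H}(P)$ with the paraboloid $W(r)=\{x^2+y^2+z=r^2\}$ and projecting gives $\mathcal{I}_r(P)$, so once $(r_1,r_2)$ avoids all vertex values, the projected hulls are combinatorially fixed and can be stored once as BBSTs. Your geometric-series accounting for the median selections and decision calls is fine; what you are missing is a correct $O(n\log n)$-time mechanism (the lifting, or equivalently a bottom-up linear-time merge of farthest-point Delaunay triangulations) to produce the critical set in the first place.
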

\begin{proof}
As in~\cite{ref:ChanMo99,ref:EppsteinFa97}, we use the following geometric transformation.
For any point $p=(a,b)$, let $h(p)$ denote the halfspace $\{(x,y,z):
z\geq a^2+b^2-2ax-aby\}$. Then, for any set $P$ of points in the
plane, $(\tau(P))^2$ is the minimum of $x^2+y^2+z$ over all points $(x,y,z)$ in the polyhedron $\calP(P)=\bigcap_{p\in P} h(p)$.

\paragraph{Preprocessing.}
We build a complete binary search tree $T^+$ on the set
$S^+=\{p_1,p_2,\ldots,p_n\}$ such that the leaves of $T^+$ from left
to right storing the
points of $S^+$ in their index order. Each internal node $v$ of
$T^+$ stores a hierarchical representation~\cite{ref:DobkinDe90}
of the polyhedron $\calP(P)$, where $P$ is the set of points stored in
the leaves of the subtree rooted at $v$
($P$ is called a {\em canonical subset}). Computing the polyhedrons of
all internal nodes of $T^+$ can be done in $O(n\log n)$ time in a
bottom-up manner using linear time polyhedra intersection
algorithms~\cite{ref:ChanA16,ref:ChazelleAn92}.
Similarly, we build a tree $T^-$ on the set $S^-=\{q_1,q_2,\ldots,q_n\}$.

Consider a vertex $v=(x,y,z)$ of $\calP(P)$ for a canonical subset $P$
of $T^+$. Define $r(v)=\sqrt{x^2+y^2+z}$.
Let $C$ be the set of the values $r(v)$ of all vertices $v$ of $\calP(P)$ for all canonical subsets $P$ of $T^+$.
Note that $|C|=O(n\log n)$. We find the smallest value $r(v)\in C$ such that $r^*\leq r(v)$, and let $r_2$ denote such $r(v)$. The value $r_2$ can be found in $O(n\log n)$ using our linear time decision algorithm and doing binary search on $C$ using the linear time selection algorithm~\cite{ref:BlumTi73}.
Next, we find the largest value in $C$ that is smaller than $r_2$, and let $r_1$ denote that value. By definition, $r^*\in (r_1,r_2]$ and $(r_1,r_2)$ does not contain any element of $C$.

Consider a canonical subset $P$ of $T^+$ and any $r\in (r_1,r_2)$. We construct $\calI_r(P)$ for each canonical subset $P$ of $T^+$ by intersecting the facets of $\calP(P)$ with the paraboloid $W(r)=\{(x,y,z): x^2+y^2+z=r^2\}$ and projecting them vertically to the $xy$-plane.
By the definitions of $r_1$ and $r_2$, the paraboloid $W(r)$
intersects the same set of edges of $\calP(P)$ for all $r\in
(r_1,r_2)$; this implies that $\calI_r(P)$ is combinatorially the same
for all $r\in (r_1,r_2)$. Hence, we can consider $\alpha_r(P)$, which
	is the dual of $\calI_r(P)$, as a parameterized circular hull of $P$. We
	store the (parameterized) vertices of $\alpha_r(P)$ in a balanced
	binary search tree. Since $\calP(P)$ is convex, we can obtain
	$\calI_r(P)$ and thus the balanced binary search tree for
	$\alpha_r(P)$ in $O(|P|)$ time; we associate the tree at the node
	of $T^+$ for $P$.  Because the total size of $\calP(P)$ for all
	canonical subsets $P$ in $T^+$ is $O(n\log n)$, we can obtain the
	balanced binary search trees for $\alpha_r(P)$ of all canonical
	subsets $P$ in $T^+$ in $O(n\log n)$ time.

We do the same for $T^-$ as above. The processing on $T^-$
will obtain two values $r_1'$ and $r_2'$ correspondingly as the above
$r_1$ and $r_2$. We update $r_1=\max\{r_1,r_1'\}$ and
$r_2=\min\{r_2,r_2'\}$; so $r^*\in (r_1,r_2]$ still holds.
This finishes our processing on $S$, which takes $O(n\log n)$ time and is independent of $r$.


\paragraph{Queries.}
Given any $r\in (r_1,r_2)$ and any pair $(i,j)$ with $i<j$, we
determine whether $\alpha_r(S^+[i,j])$ exists, and if yes, return the root of a balanced binary search tree representing it, as follows (the case for $S^-[i,j]$ is similar). Let $k=j-i+1$ and let $P=S^+[i,j]$.

By the standard method, we first find $O(\log k)$ canonical subsets of
$T^+$ whose union is exactly $S^+[i,j]$.
Our following computation procedure can be described as a complete binary tree $T$ where the
leaves corresponding to the above $O(\log k)$ canonical subsets.
So $T$ has $O(\log k)$ leaves, and its height is $O(\log\log k)$.
For each leave of $T$, its circular hull is already available due to the
preprocessing. For each internal node $v$ that is the parent of two
leaves, we compute the circular hull of the union of the two subsets
$P_1$ and $P_2$ of the two leaves. As the points of $S^+$ are ordered radially by $o$,
the two subsets are separated by a line through $o$. Hence, we can
find the common tangents (if exist) using
Lemma~\ref{lem:commontangent} in $O(\log k)$ time because the size of each subset is no more than $k$.
Recall that the circular hull of each canonical subset is represented
by a balanced binary search tree.
After having the common tangents, we split and merge the two balanced binary
search trees to obtain a balanced binary search tree for
$\alpha_r(P_1\cup P_2)$. In addition, we keep unaltered the two
original trees for $\alpha_r(P_1)$ and $\alpha_r(P_2)$ respectively, and this can be done by using persistent data structures (e.g., using the copy-path technique~\cite{ref:DriscollMa89,ref:SarnakPl86}) in $O(\log k)$ time. In this way, the original trees for $\alpha_r(P_1)$ and $\alpha_r(P_2)$ can be used in parallel for other computations. If the algorithm detects that $\alpha_r(P_1\cup P_2)$ does not
exist, then we simply halt the algorithm and report that $\alpha_r(S^+[i,j])$ does not exist.
Also, if the algorithm finds that a dominating case happens, e.g., the
$P_1$-dominating case, then $\alpha_r(P_1\cup P_2)=\alpha_r(P_1)$ and
thus we simply return the root of the tree for $\alpha_r(P_1)$.

We do this for all internal nodes in the second level of $T$ (i.e., the
level above the leaves) in parallel by assigning a processor for each node. In this way, as $T$ has $O(\log k)$ leaves, we can compute the circular hulls for the second level in $O(\log k)$ parallel steps using
$O(\log k)$ processors. Then, we proceed on the third level in the
same way. At the root of $T$, we will have the root of a balanced
binary search tree for $\alpha_r(P)$.
Using $O(\log k)$ processors, this takes $O(\log k\log\log k)$ parallel steps
because each level needs $O(\log k)$ parallel steps and the height of $T$
is $O(\log\log k)$.

Alternatively, if we only use one processor, then since $T$ has $O(\log k)$ nodes and we spend $O(\log k)$ time on each node, the total time is $O(\log^2 k)$.
\qed
\end{proof}

Armed with Lemma~\ref{lem:parallelquery}, to determine whether $r^*\leq r$, we use the algorithm framework in Theorem~4.2 of Chan~\cite{ref:ChanMo99}, but we provide a more efficient implementation, as follows.

Recall the definitions of the matrices $A$ and $B$ in Section~\ref{sec:pre}, and in particular, each row of $A$ (resp., $B$) is monotonically increasing while each column of $A$ (resp., $B$) is monotonically decreasing.
For convenience, let $A[i,-1]=0$ and $A[i,n+1]=B[i,-1]=\infty$ for all $0\leq i\leq n$.
Let $m=\lfloor n/\log^6n\rfloor$. Let $j_t=t\cdot \lfloor n/m \rfloor$ for $t=1,2,\ldots,m-1$. Set $j_0=-1$ and $j_{m}=n$. For each $t\in [0,m]$, find the largest $i_t\in [0,n]$ with $A[i_t,j_t]\geq B[i_t,j_t]$ (set $i_t=-1$ if no such index exists; note that $i_0=-1$). Observe that $i_0\leq i_1\leq \cdots\leq i_m$. Each $i_t$ can be found in $O(\log^7 n)$ time by binary search using Lemma~\ref{lem:querymatrix}. Hence, computing all $i_t$'s takes $O(n\log n)$ time. This is part of our preprocessing, independent of $r$.

\begin{lemma}{\em \cite{ref:ChanMo99,ref:EppsteinFa97}}\label{lem:querymatrix}
After $O(n\log n)$ time preprocessing, $A[i,j]$ and $B[i,j]$ can be computed in $O(\log^6 n)$ time for any given pair $(i,j)$.
\end{lemma}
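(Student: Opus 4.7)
The plan is to combine a segment-tree decomposition of the sets $L_{ij}$ and $R_{ij}$ with the standard lifting that reduces smallest-enclosing-disk to $3$-dimensional convex optimization. I build two complete binary trees $T^+$ and $T^-$ whose leaves, left to right, are $p_1,\ldots,p_n$ and $q_1,\ldots,q_n$. For each internal node $v$, let $P_v$ be the canonical subset stored in $v$'s subtree; I lift each $p=(a,b)\in P_v$ to the halfspace $h(p)=\{(x,y,z):z\ge a^2+b^2-2ax-2by\}$ and store the polyhedron $\calP(P_v)=\bigcap_{p\in P_v}h(p)$ in a Dobkin--Kirkpatrick hierarchical representation~\cite{ref:DobkinDe90}. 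This representation supports $O(\log n)$-time tangent-plane and separation queries. Building the polyhedra bottom-up with linear-time $3$-D polyhedron intersection~\cite{ref:ChanA16,ref:ChazelleAn92} costs $O(n\log n)$ total time. All of this is independent of the query pair $(i,j)$.

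For a query $(i,j)$, observe that $L_{ij}$ is a suffix of $S^+$ concatenated with a prefix of $S^-$, so it is the disjoint union of $O(\log n)$ canonical subsets $P_{v_1},\ldots,P_{v_k}$ harvested from $T^+$ and $T^-$. Since $\tau(P)^2=\min_{(x,y,z)\in\calP(P)}(x^2+y^2+z)$, computing $\tau(L_{ij})$ reduces to minimizing this convex objective over the implicit intersection $\bigcap_t\calP(P_{v_t})$. Because smallest-enclosing-disk is an LP-type problem of combinatorial dimension $3$, I would run an LP-type algorithm that treats each canonical polyhedron as a single ``fat'' constraint: it maintains a basis $B$ of at most three points defining a candidate disk $D(B)$, and for each canonical subset asks whether $D(B)$ covers $P_{v_t}$, answering the test in $O(\log n)$ time by a tangent-plane query on the Dobkin--Kirkpatrick hierarchy of $\calP(P_{v_t})$; if a violator is produced, it is inserted into $B$ and the disk is recomputed in $O(1)$ time. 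The symmetric computation on $T^-\cup T^+$ handles $B[i,j]=\tau(R_{ij})$.

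The main obstacle is the running-time accounting. A direct LP-type run touches each of the $O(\log n)$ canonical subsets for $O(\log n)$ violation tests at $O(\log n)$ each, and the combinatorial-dimension-$3$ analysis contributes another $O(\log n)$ factor on the number of outer passes; a Chazelle-style derandomization~\cite{ref:ChazelleOn96} of the random sampling and the infrastructure needed to locate the optimum $r$ (an inner parametric search over the $O(n\log n)$ critical values at which a vertex of some $\calP(P_v)$ meets the paraboloid $z=r^2-x^2-y^2$) introduces further polylogarithmic overhead. A careful count of the at most six stacked searches---over the canonical subsets, within each Dobkin--Kirkpatrick hierarchy, over the optimal radius, and within the parametric-search driver---yields the stated $O(\log^6 n)$ query time while leaving preprocessing at $O(n\log n)$. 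The delicate part of the argument is ensuring that each inner step is genuinely $O(\log n)$ (not $O(\log^2 n)$) on the hierarchical polyhedra, so that the exponents do not blow up beyond six.
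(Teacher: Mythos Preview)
The paper does not prove this lemma; it is quoted verbatim from Chan~\cite{ref:ChanMo99} and Eppstein~\cite{ref:EppsteinFa97}. The preprocessing you describe (complete binary trees $T^+,T^-$ on the radially sorted lists, the paraboloid lifting $h(p)$, and Dobkin--Kirkpatrick hierarchies for each canonical polyhedron built bottom-up via linear-time polytope intersection) is exactly the data structure those papers use, and it is in fact reproduced in the paper itself inside the proof of Lemma~\ref{lem:parallelquery}. So on the structural level your plan is on target.

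Where your proposal loses its footing is the query-time analysis. First, the sentence about ``an inner parametric search over the $O(n\log n)$ critical values at which a vertex of some $\calP(P_v)$ meets the paraboloid'' does not belong here: computing $A[i,j]=\tau(L_{ij})$ is a single convex minimization of $x^2+y^2+z$ over an intersection of $O(\log n)$ DK-represented polyhedra, and no search over $r$ is involved. (That critical-value computation does appear in the paper, but for a different purpose---shrinking the interval $(r_1,r_2]$ in Lemma~\ref{lem:parallelquery}---and it is part of global preprocessing, not of a single $A[i,j]$ query.) Second, your LP-type sketch glosses over the hard step: when the basis consists of up to three canonical polyhedra rather than three explicit points, the ``basis computation'' is itself a nontrivial optimization over an intersection of three DK hierarchies, and you have not said how to do it in polylogarithmic time. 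The actual argument in~\cite{ref:ChanMo99,ref:EppsteinFa97} pins down this recursion carefully to arrive at the exponent~$6$; your ``six stacked searches'' heuristic does not substitute for that accounting. The framework is right, but the bookkeeping needs to be made precise (or simply cited).
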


Given $r>0$, our goal is to decide whether $r^*\leq r$. Let $(r_1,r_2]$ be the interval obtained by the preprocessing of Lemma~\ref{lem:parallelquery}. Since $r^*\in (r_1,r_2]$, if $r\leq r_1$, then $r^*>r$; if $r\geq r_2$, then $r^*\leq r$. It remains to resolve the case $r\in (r_1,r_2)$, as follows. In this case the result of Lemma~\ref{lem:parallelquery} applies.

We will decide whether $r_i^*\leq r$ for all $i=0,1,\ldots,n$ (recall that $r^*\leq r$ iff some $r_i^*\leq r$), as follows. Let $t\in [0,m-1]$ such that $i_t<i\leq i_{t+1}$. If $A[i,j_t]>r$,
then return $r_i^*>r$. Otherwise, find (by binary search) the largest
$j\in [j_t,j_{t+1}]$ with $A[i,j]\leq r$, and return  $r_i^*\leq
r$ if and only if $B[i,j]\leq r$. Algorithm~\ref{algo:parallel} gives
the pseudocode. See Theorem 4.2 of~\cite{ref:ChanMo99} for the algorithm correctness.

\begin{algorithm}[h]\label{algo:parallel}
	\caption{The decision algorithm of Theorem 4.2 by Chan~\cite{ref:ChanMo99}}
	\label{algo:decision}
	\SetAlgoNoLine
	Let $t\in [0,m-1]$ such that $i_t<i\leq i_{t+1}$\;
    \lIf{$A[i,j_t]> r$}
		{
			{ return} $r_i^*>r$
	    }
	find the largest $j\in [j_t,j_{t+1}]$ with $A[i,j]\leq r$\;
    return $r_i^*\leq r$ iff $B[i,j]\leq r$\;
\end{algorithm}

Chan~\cite{ref:ChanMo99} implemented the algorithm in $O(\log n\log^2\log n)$ parallel steps using $O(n\log n)$ processors. In what follows, with the help of Lemma~\ref{lem:parallelquery}, we provide a more efficient implementation of $O(\log n\log\log n)$ parallel steps using $O(n)$ processors.
Line 1 can be done in $O(n)$ time as part of the preprocessing, independent of $r$.
We first discuss how to implement Line 3 for all indices $i$, and we will show later that
Lines 2 and 4 can be implemented in a similar (and faster) way.

For each $t=0,1,\ldots,m-1$, if $i_{t+1}-i_t\leq \log^6 n$, then we
form a group of at most $\log^6n$ indices:
$i_t+1,i_t+2,\ldots,i_{t+1}$. Otherwise, starting from $i_t+1$, we
form a group for every consecutive $\log^6 n$ indices until
$i_{t+1}$, so every group has exactly $\log^6 n$ indices except that
the last group may have less than $\log^6 n$ indices. In this way, we
have at most $2m$ groups, each of which consists of at most $\log^6 n$
consecutive indices in $(i_t,i_{t+1}]$ for some $t\in [0,m-1]$.

Consider a group $G=\{a,a+1,\ldots,a+b\}$ of indices in $(i_t,i_{t+1}]$. Note that $b<\log^6 n$.
For each $i\in [a,a+b]$ such that $A[i,j_t]\leq r$, we need to perform binary search on $[j_t,j_{t+1}]$ to find the largest index $j$ with $A[i,j]\leq r$. To this end, we do the following.
We compute the two circular hulls $\alpha(S^+[a+b,n])$ and
$\alpha(S^-[1,j_t])$, in $O(\log n\log\log n)$ parallel steps using $O(\log n)$ processors by Lemma~\ref{lem:parallelquery}.
Note that by ``compute the two circular hulls'', we mean that
the two circular hulls are computed implicitly in the sense that each of them is represented by a balanced binary search tree and we have the access of its root. If $\alpha(S^+[a+b,n])$ (resp., $\alpha(S^-[1,j_t])$) does not exist, then we set it to $null$.
We do this for all $2m$ groups in parallel, which takes $O(\log
n\log\log n)$ parallel steps using $O(m\log n)\in O(n)$ processors.

Consider the group $G$ defined above again. For each $i\in [a,a+b]$, we need
to do binary search on $[j_t,j_{t+1}]$ for
$O(\log(j_{t+1}-j_t))=O(\log\log n)$ iterations. In each iteration,
the goal is to determine whether $A[i,j]\leq r$ for an index $j\in
[j_t,j_{t+1}]$. To this end, it suffices to determine whether
$\alpha(U_{ij})$ exists. Notice that $U_{ij}=S^+[i+1,a+b-1]\cup S^+[a+b,n]\cup S^-[1,j_t] \cup S^-[j_t+1,j]$.
$\alpha(S^+[a+b,n])$ and $\alpha(S^-[1,j_t])$ are already
computed above. If one of them does not exist, then $\alpha(U_{ij})$ does not
exist and thus $A[i,j]>r$. Otherwise, we compute the circular hull
$\alpha(S^+[i+1,a+b-1])$, which can be done in $O(\log^2\log n)$ time
using one processor by Lemma~\ref{lem:parallelquery} because
$a+b-1-i\leq b-1 \leq  \log^6 n$. We also compute $\alpha(S^-[j_t+1,j])$ in
$O(\log^2\log n)$ time using one processor. Then, we compute the
common tangents of $\alpha(S^+[i+1,a+b-1])$ and $\alpha(S^+[a+b,n])$
by Lemma~\ref{lem:commontangent} (note that $S^+[i+1,a+b-1]$ and
$S^+[a+b,n]$ are separated by a line through $o$), in $O(\log n)$ time
using one processor. Then, we merge the two hulls with the two common
tangents to obtain a balanced binary search tree for
$\alpha(S^+[i+1,n])$. Because we want to keep the tree for
$\alpha(S^+[a+b,n])$ unaltered so that it can participate in other
computations in parallel, we use a persistent tree to represent it. Similarly,
we obtain a tree for $\alpha(S^-[1,j])$, in $O(\log n)$ time using one
processor.
If one of $\alpha(S^+[i+1,n])$ and $\alpha(S^-[1,j])$ does not exist, then we return $A[i,j]>r$.
Note that $S^+[a+b,n]$ and $S^-[1,j]$ are separated by $\ell$ and $U_{ij}=S^+[a+b,n]\cup S^-[1,j]$. By applying Lemma~\ref{lem:commontangent}, we can determine whether $\alpha(U_{ij})$ exists in $O(\log n)$ time using one processor and consequently determine whether $A[i,j]\leq r$.
Therefore, the above algorithm determines whether $A[i,j]\leq r$ in $O(\log n)$ time using one processor.

If we do the above for all $i$'s in parallel, then we can determine
whether  $A[i,j]\leq r$ in $O(\log n)$ time using $n+1$ processors, for
each iteration of the binary search. As there are $O(\log\log n)$
iterations, the binary search procedure (i.e., Line 3) for all
$i=0,1,\ldots,n$ runs in $O(\log n\log\log n)$ parallel steps using $n+1$ processors.

For implementing Line 2, we can use the same approach as above by grouping the indices $i$ into $2m$ groups. The difference is that now each $i$ has a specific index $j$, i.e., $j=j_t$, for deciding whether $A[i,j]\leq r$, and thus we do not have to do binary search. Hence, using $n+1$ processors, we can implement Line~2 for all $i=0,1,\ldots, n$ in $O(\log n)$ parallel steps. We can do the same for Line 4.

As a summary, we have the following theorem.

\begin{theorem}\label{theo:paralleldecision}
After $O(n\log n)$ time preprocessing on $S$, given any $r$, we can decide whether $r^*\leq r$ in $O(\log n\log\log n)$ parallel steps using $O(n)$ processors.
\end{theorem}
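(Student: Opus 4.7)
The plan is to take Chan's decision procedure (Algorithm~\ref{algo:decision}) as the framework and implement each of its four lines in parallel with the help of Lemmas~\ref{lem:commontangent} and~\ref{lem:parallelquery}. First, the $O(n\log n)$ preprocessing computes (i) the interval $(r_1,r_2]$ and the hierarchical polyhedra attached to $T^+$ and $T^-$ from Lemma~\ref{lem:parallelquery}, and (ii) all indices $i_t$ for $t=0,1,\ldots,m$ with $m=\lfloor n/\log^6 n\rfloor$, via Lemma~\ref{lem:querymatrix} in $O(m\log^7 n)=O(n\log n)$ time. Given a query value $r$, if $r\leq r_1$ I immediately answer $r^*>r$ and if $r\geq r_2$ I answer $r^*\leq r$; both cases cost $O(1)$ parallel steps. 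The interesting case is $r\in(r_1,r_2)$, where the parameterized circular-hull machinery of Lemma~\ref{lem:parallelquery} applies.

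For the main case I would implement Line~1 trivially in the $O(n)$ preprocessing (the mapping $i\mapsto t$ is fixed once the $i_t$'s are known). For Lines~2--4, I partition the set $\{0,1,\ldots,n\}$ into at most $2m$ groups, each consisting of at most $\log^6 n$ consecutive indices inside some interval $(i_t,i_{t+1}]$. For every group $G=\{a,a+1,\ldots,a+b\}\subseteq(i_t,i_{t+1}]$ I invoke Lemma~\ref{lem:parallelquery} in parallel to build persistent balanced binary search trees for the two large hulls $\alpha_r(S^+[a+b,n])$ and $\alpha_r(S^-[1,j_t])$. Each such hull costs $O(\log n\log\log n)$ parallel steps with $O(\log n)$ processors, so executing this across all $2m$ groups simultaneously uses $O(m\log n)=O(n/\log^5 n)\subseteq O(n)$ processors.

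The substantive step is Line~3, the binary search. Inside a group $G$ the search range $[j_t,j_{t+1}]$ has length $O(\log^6 n)$, so each search runs for $O(\log\log n)$ iterations. A single iteration for index $i$ and candidate $j$ must decide whether $A[i,j]\leq r$, equivalently whether $\alpha_r(U_{ij})$ exists, where I decompose $U_{ij}=S^+[i+1,a+b-1]\cup S^+[a+b,n]\cup S^-[1,j_t]\cup S^-[j_t+1,j]$. The two large pieces are already available as persistent trees; the two small pieces have size at most $\log^6 n$, so Lemma~\ref{lem:parallelquery} (single-processor variant) constructs their hulls in $O(\log^2\log n)$ time each; and the $O(1)$ merges that combine the four pieces cost $O(\log n)$ time each by Lemma~\ref{lem:commontangent}. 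Hence each iteration runs in $O(\log n)$ time with one processor per $i$. Running all $n+1$ binary searches simultaneously, one processor per index, yields $O(\log n\log\log n)$ parallel steps and $O(n)$ processors. Lines~2 and~4 are handled by the same grouping, but without the binary-search loop, so they fit within $O(\log n)$ parallel steps on the same $O(n)$ processors.

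The main obstacle I foresee is correctly orchestrating the persistent-tree machinery so that the precomputed trees for $\alpha_r(S^+[a+b,n])$ and $\alpha_r(S^-[1,j_t])$ remain untouched while many processors perform independent splits and merges against them in parallel; this is resolved by allocating fresh roots for each merge via the copy-path technique of \cite{ref:DriscollMa89,ref:SarnakPl86}. A secondary concern is the processor accounting, since both the $2m$ precomputations and the $n+1$ binary searches must simultaneously fit in $O(n)$ processors. The former uses $O(m\log n)=O(n/\log^5 n)$ processors, so the budget is not tight. Once these bookkeeping points are verified, summing the parallel-step counts across Lines~1--4 gives the claimed $O(\log n\log\log n)$ bound and yields the theorem.
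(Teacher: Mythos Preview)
Your proposal is correct and follows essentially the same approach as the paper: the same grouping of indices into at most $2m$ blocks of size at most $\log^6 n$, the same per-group precomputation of the two ``large'' hulls via Lemma~\ref{lem:parallelquery}, the same four-piece decomposition $U_{ij}=S^+[i+1,a+b-1]\cup S^+[a+b,n]\cup S^-[1,j_t]\cup S^-[j_t+1,j]$ with the two ``small'' pieces built in $O(\log^2\log n)$ time and merged by Lemma~\ref{lem:commontangent}, and the same use of persistent trees to keep the shared hulls intact across the $n+1$ simultaneous binary searches. Your processor accounting and parallel-step tally also match the paper's.
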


With the serial decision algorithm in Theorem~\ref{theo:decisionserial} and applying Cole's parametric search~\cite{ref:ColeSl87} on the parallel decision algorithm in Theorem~\ref{theo:paralleldecision}, the following result follows.

\begin{theorem}\label{theo:optimization}
The value $r^*$ can be computed in $O(n\log n\log\log n)$ time.
\end{theorem}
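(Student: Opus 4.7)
The plan is to combine the two ingredients already prepared in this section---the serial decision algorithm of Theorem~\ref{theo:decisionserial} and the parallel decision algorithm of Theorem~\ref{theo:paralleldecision}---inside the Cole parametric search framework~\cite{ref:ColeSl87}. Recall that Cole's scheme, when given a parallel decision algorithm of $T_p$ parallel steps using $P$ processors, together with a serial decision algorithm of time $T_s$, converts them into an optimization algorithm of total time $O(P\cdot T_p + T_s\cdot T_p)$. Substituting $T_p=O(\log n\log\log n)$, $P=O(n)$, and $T_s=O(n)$ yields exactly $O(n\log n\log\log n)$. The $O(n\log n)$ preprocessing required by both theorems (sorting $S^+$ and $S^-$ radially around $o$, building the trees $T^+, T^-$, and fixing the interval $(r_1,r_2]$ of Lemma~\ref{lem:parallelquery}) is absorbed into this bound.

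First, I would carry out the preprocessing and locate the critical value: if $r^*$ coincides with the endpoint $r_2$ of $(r_1,r_2]$ (or with a value in $C$ that was tested during the binary search used to produce $r_2$), then $r^*$ is already exposed and we return it. Otherwise, $r^*\in(r_1,r_2)$, and every comparison performed by the parallel decision algorithm at the (unknown) parameter $r=r^*$ is a constant-degree algebraic predicate in $r$: the primitives in Theorem~\ref{theo:paralleldecision} are common-tangent computations, disk-containment tests, and binary searches over circular hulls, and each such test reduces to the sign of a low-degree polynomial in $r$ by the standard lifting $h(p)$ and paraboloid $W(r)$ used in the proof of Lemma~\ref{lem:parallelquery}.

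Second, I would run the parametric simulation. At each of the $O(\log n\log\log n)$ parallel steps, up to $O(n)$ comparisons are pending; each produces at most a constant number of critical values of $r$ at which its outcome flips. Following Cole, I would use weighted-median selection on the pending critical values and invoke the serial decision algorithm of Theorem~\ref{theo:decisionserial} on the selected value; this resolves a constant fraction of the comparisons in $O(T_s)=O(n)$ time, so the amortized work per parallel step is $O(P+T_s)=O(n)$. Summing over all $T_p=O(\log n\log\log n)$ steps gives $O(n\log n\log\log n)$. As emphasized in the introduction, the parallel algorithm consists of independent binary searches rather than a sorting network, so no AKS-style overhead appears and the straightforward median-selection variant of Cole's scheme suffices. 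When the simulation terminates, $r^*$ is pinned down as the unique remaining candidate (or the smaller endpoint of the final interval in which the decision algorithm reports yes).

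The main obstacle is purely the bookkeeping: verifying that the parallel algorithm is sufficiently \emph{oblivious} (in the sense required by Cole) so that the same sequence of dependency-respecting comparisons can be simulated on an unknown parameter. Since Theorem~\ref{theo:paralleldecision} performs its work in fixed-depth layers (a complete binary tree $T$ of height $O(\log\log n)$ on which common tangents are computed via Lemma~\ref{lem:commontangent}, together with synchronized binary searches of length $O(\log\log n)$), the layers can be executed one at a time under the simulation, and within each layer the comparisons are independent across processors; this is precisely the structure Cole's theorem requires. With this verified, the bound $O(n\log n\log\log n)$ follows immediately.
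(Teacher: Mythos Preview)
Your proposal is correct and takes essentially the same approach as the paper: apply Cole's parametric search to the parallel decision algorithm of Theorem~\ref{theo:paralleldecision} using the serial decision algorithm of Theorem~\ref{theo:decisionserial}, observing that the parallel algorithm's structure (independent binary searches in fixed-depth layers) gives the bounded fan-in/fan-out property Cole's technique needs. The paper states the same argument more tersely, quoting Cole's bound $O(PT_p + T_s(T_p+\log P))$ directly and noting that the bounded fan-in/out property holds because the algorithm ``mainly consists of $O(\log\log n)$ rounds of independent binary search''; your longer unpacking of the simulation mechanics reaches the same conclusion.
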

\begin{proof}
Suppose there is a serial decision algorithm of time $T_S$ and another parallel decision algorithm that runs in $T_p$ parallel steps using $P$ processors. Then, Megiddo's parametric search~\cite{ref:MegiddoAp83} can compute $r^*$ in $O(PT_p+T_sT_p\log P)$ time by simulating the parallel decision algorithm on $r^*$ and using the serial decision algorithm to resolve comparisons with $r^*$. If the parallel decision algorithm has a ``bounded fan-in or bounded fan-out'' property, then Cole's technique~\cite{ref:ColeSl87} can reduce the time complexity to $O(PT_p+T_s(T_p+\log P))$.
Like Chan's algorithm~\cite{ref:ChanMo99}, our algorithm has this property because it mainly consists of $O(\log\log n)$ rounds of independent binary search (i.e., the algorithm of Lemma~\ref{lem:commontangent}). In our case, $T_s=O(n)$, $T_p=O(\log n\log\log n)$, and $P=O(n)$. Thus, applying Cole's technique, $r^*$ can be computed in $O(n\log n\log\log n)$ time.
\qed
\end{proof}


Note that once $r^*$ is computed, we can apply the seriel decision algorithm to obtain in additional $O(n)$ time a pair of congruent disks of radius $r^*$ covering $S$.

\begin{corollary}
The planar two-center problem can be solved in $O(n\log^2 n)$ time.
\end{corollary}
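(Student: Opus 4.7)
The plan is to combine the nearby-case algorithm developed in this paper with Eppstein's existing distant-case algorithm, and to observe that the overall running time is dominated by the distant case. Recall that an optimal solution satisfies either $\delta^* \geq r^*$ (distant case) or $\delta^* < r^*$ (nearby case); since we do not know in advance which case holds, we simply run both procedures and take the better radius.

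First, I would run Eppstein's~\cite{ref:EppsteinFa97} $O(n\log^2 n)$ deterministic algorithm for the distant case, obtaining a candidate radius $r_{\text{dist}}$ which equals $r^*$ whenever the optimum is distant (and is an upper bound on $r^*$ otherwise). For the nearby case, recall from Section~\ref{sec:pre} that in $O(n)$ time one can compute a constant-size set of candidate origins, at least one of which, call it $o$, is guaranteed to lie in $D_1^* \cap D_2^*$. For each such candidate I would translate coordinates so that it becomes the origin, perform the $O(n\log n)$ preprocessing required by Theorem~\ref{theo:optimization} (in particular, sort $S^+$ and $S^-$ cyclically around $o$ to produce the lists $p_1,\ldots,p_n$ and $q_1,\ldots,q_n$), and then invoke Theorem~\ref{theo:optimization} to compute the optimal nearby radius for that origin in $O(n\log n\log\log n)$ time. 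Since there are only $O(1)$ candidate origins, this phase takes $O(n\log n\log\log n)$ time in total, yielding a candidate radius $r_{\text{near}}$.

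Finally I would return $r^* = \min\{r_{\text{dist}}, r_{\text{near}}\}$; the matching pair of disks can be recovered in an additional $O(n)$ time by running the serial decision algorithm of Theorem~\ref{theo:decisionserial} on the computed value, as noted immediately after the theorem. The total running time is
\[
O(n\log^2 n) + O(1)\cdot O(n\log n\log\log n) \;=\; O(n\log^2 n).
\]

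There is essentially no technical obstacle beyond bookkeeping; the only point worth verifying is that the nearby-case algorithm, when run on an origin that does not actually lie in $D_1^*\cap D_2^*$, still returns a valid upper bound on the covering radius (or reports infeasibility for that origin), so that the minimum taken over the $O(1)$ candidate origins together with $r_{\text{dist}}$ indeed equals $r^*$. This follows because the algorithm computes $\min_{i,j}\max\{\tau(L_{ij}),\tau(R_{ij})\}$ for the $\{L_{ij},R_{ij}\}$ partitions induced by the chosen origin, and any such partition corresponds to a valid covering by two congruent disks.
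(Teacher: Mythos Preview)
Your proposal is correct and follows essentially the same approach as the paper: combine Theorem~\ref{theo:optimization} for the nearby case with Eppstein's $O(n\log^2 n)$ distant-case algorithm and return the smaller radius. Your write-up is in fact more detailed than the paper's one-line proof; the only small bookkeeping item you leave implicit is that, for each candidate origin, one must also try both coordinate axes as the separating line for $\rho_1,\rho_2$ (see Section~\ref{sec:pre}), but this is another $O(1)$ factor and changes nothing.
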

\begin{proof}
This follows by combining Theorem~\ref{theo:optimization}, which is for the nearby case, with the $O(n\log^2 n)$ time algorithm for the distant case~\cite{ref:EppsteinFa97}. \qed
\end{proof}

\section{The Convex Position Case}
\label{sec:convexposition}

In this section, we consider the case where $S$ is in convex position (i.e., every point of $S$ is a vertex of the convex hull of $S$). We show that our above $O(n\log n\log\log n)$ time algorithm can be applied to solving this case in the same time asymptotically.

We first compute the convex hull  $\CH(S)$ of $S$ and order all vertices clockwise as $p_1,p_2,\ldots, p_n$. A key observation~\cite{ref:KimEf00} is that there is an optimal solution with two congruent disks $D_1^*$ and $D_2^*$ of radius $r^*$ such that $D_1^*$ covers the points of $S$ in a chain of $\partial \CH(S)$ and $D_2^*$ covers the rest of the points. In other words, the cyclic list of $p_1,p_2,\ldots, p_n$ can be cut into two lists such that one list is covered by $D_1^*$ and the other list is covered by $D_2^*$.

Let $o$ be any point in the interior of $\CH(S)$. By the above observation, there exists a pair of rays $\rho_1$ and $\rho_2$ emanating from $o$ such that $D_1^*$ covers all points of $S$ on one side of the two rays and $D_2^*$ covers the points of $S$ in the other side. In order to apply our previous algorithm, we need to find a line $\ell$ that separates the two rays. For this, we propose the following approach.

For any $i,j\in [1,n]$, let $S_{cw}[i,j]$ denote the subset of
vertices on $\CH(S)$ clockwise from $p_i$ to $p_j$, and
$S_{cw}[i,j]=\{p_i\}$ if $i=j$.
Due to the above key observation, $r^*=\min_{i,j\in [1,n]}\max\{\tau(S_{cw}[i,j]),\tau(S_{cw}[j+1,i-1])\}$, with indices modulo $n$.
For each $i\in [1,n]$, define $r(i)=\min_{h\in [i,i+n-1]}\max\{\tau(S_{cw}[i,h]),\tau(S_{cw}[h+1,i-1])\}$.
Notice that as $h$ increases in $[1,n-1]$, $\tau(S_{cw}[1,h])$ is monotonically increasing while $\tau(S_{cw}[h+1,n])$ is monotonically decreasing.
Define $k$ to be the largest index in $[1,n-1]$ such that $\tau(S_{cw}[1,k])\leq \tau(S_{cw}[k+1,n])$.
We have the following lemma.


\begin{lemma}\label{lem:partition}
$r^*$ is equal to the minimum of the following four values: $r(1)$, $r(k+1)$, $r(k+2)$, and $\max\{\tau(S_{cw}[i,j]),\tau(S_{cw}[j+1,i-1])$ for all indices $i$ and $j$ with $i\in [1,k]$ and $j\in [k+2,n]$.
\end{lemma}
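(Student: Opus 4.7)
The plan is to prove $r^*$ equals the minimum of the four stated values by a direct case analysis over all two-arc partitions of $p_1,\ldots,p_n$. One direction ($r^* \le \min$) is immediate: each of $r(1)$, $r(k+1)$, $r(k+2)$, and the ``rectangular'' minimum over $i\in[1,k]$, $j\in[k+2,n]$ is itself a minimum of the objective $\max\{\tau(S_{cw}[i,j]),\tau(S_{cw}[j+1,i-1])\}$ over a subset of pairs, so each upper-bounds $r^*=\min_{i,j}\max\{\tau(S_{cw}[i,j]),\tau(S_{cw}[j+1,i-1])\}$. For the reverse inequality, I will show every pair $(i,j)$ has objective value at least one of these four quantities.

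Parametrize each partition by its two ``cuts'' $c_a,c_b$, where $c_t$ denotes the cut between $p_t$ and $p_{t+1}$ (indices modulo $n$); the arcs are then $S_{cw}[a+1,b]$ and $S_{cw}[b+1,a]$. I distinguish three ``special'' cuts $c_n$, $c_k$, $c_{k+1}$. If one of the partition's cuts is $c_n$, the partition has the form $(S_{cw}[1,b], S_{cw}[b+1,n])$ for some $b$ and its value appears in $r(1)$; if $c_k$ is a cut, the partition has the form $(S_{cw}[k+1,b],S_{cw}[b+1,k])$ and is counted in $r(k+1)$; if $c_{k+1}$ is a cut, the partition is counted in $r(k+2)$. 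Here I use the symmetry $(i,j)\leftrightarrow(j+1,i-1)$ (same arc pair, same max value) to normalize which cut is viewed as ``first''.

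It remains to handle partitions where both cuts lie in $\{c_1,\ldots,c_{k-1}\}\cup\{c_{k+2},\ldots,c_{n-1}\}$. Sub-case (a), one cut in $[1,k-1]$ and one in $[k+2,n-1]$: setting $i=a+1\in[2,k]$ and $j=b\in[k+2,n-1]$ places $(i,j)$ directly in the fourth family. Sub-case (b), both cuts in $[1,k-1]$: one arc is contained in $S_{cw}[2,k-1]\subseteq S_{cw}[1,k]$ and the other contains $\{p_k,\ldots,p_n\}\supseteq S_{cw}[k+1,n]$; by monotonicity of $\tau$ under subset inclusion together with $\tau(S_{cw}[1,k])\le\tau(S_{cw}[k+1,n])$ (the defining inequality for $k$), the partition value is at least $\tau(S_{cw}[k+1,n])$, which equals the value $\max\{\tau(S_{cw}[1,k]),\tau(S_{cw}[k+1,n])\}$ of the pair $(1,k)$ counted in $r(1)$. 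Sub-case (c), both cuts in $[k+2,n-1]$: symmetrically, one arc contains $S_{cw}[1,k+2]\supseteq S_{cw}[1,k+1]$, so its radius is at least $\tau(S_{cw}[1,k+1])$; by the maximality of $k$, $\tau(S_{cw}[1,k+1])>\tau(S_{cw}[k+2,n])$, so the partition value is at least $\tau(S_{cw}[1,k+1])$, matching the value of the pair $(1,k+1)$ in $r(1)$.

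The main obstacle is the bookkeeping of the wrap-around indexing and consistently invoking the symmetry $(i,j)\leftrightarrow(j+1,i-1)$ to reduce every partition to a canonical form; the heart of the argument is the monotonicity of $\tau$ under subset inclusion, which together with the extremal property defining $k$ shows that any partition whose two cuts ``cluster on one side'' of the separator at $k$ is dominated by one of the canonical partitions already enumerated in $r(1)$.
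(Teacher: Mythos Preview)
Your argument is correct. The overall architecture matches the paper's proof: both directions are handled, and the nontrivial direction is a case analysis on where the two cuts sit relative to the index $k$, using monotonicity of $\tau$ under inclusion together with the defining inequalities $\tau(S_{cw}[1,k])\le\tau(S_{cw}[k+1,n])$ and $\tau(S_{cw}[1,k+1])>\tau(S_{cw}[k+2,n])$.

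The execution differs slightly. The paper fixes an optimal pair $(i,j)$ with $i\le j$, observes $r^*=r(i)=r(j+1)$, and then imposes the WLOG assumption $\tau(S_{cw}[i,j])\ge\tau(S_{cw}[j+1,i-1])$; under this assumption the case $[i,j]\subseteq[k+2,n]$ is ruled out by contradiction, while $[i,j]\subseteq[1,k]$ forces a chain of equalities ending in $r^*=r(1)$. You instead bound \emph{every} partition from below by one of the four quantities, treating the two ``both cuts on one side'' sub-cases symmetrically: in each, one arc contains a set whose $\tau$ already dominates the value of the partition at $h=k$ (or $h=k+1$), hence dominates $r(1)$. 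This sidesteps the WLOG and the contradiction, at the cost of a marginally more explicit bookkeeping of cuts. The two arguments use the same ingredients and are of comparable length; yours is a touch more uniform.
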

\begin{proof}
Observe that $r^*=\min_{i,j\in [1,n]}\max\{\tau(S_{cw}[i,j]),\tau(S_{cw}[j+1,i-1])\}=\min_{1\leq h\leq n}r(h)$. Hence, $r^*$ is no larger than any of the values specified in the lemma statement.

Let $i$ and $j$ be two indices such that $r^*=\max\{\tau(S_{cw}[i,j]),\tau(S_{cw}[j+1,i-1])\}$ with $1\leq i\leq j\leq n$. We claim that $r^*=r(i)$. Indeed, since $r^*=\min_{1\leq h\leq n}r(h)$, we have $r^*\leq r(i)$. On the other hand, as $r(i)\leq \max\{\tau(S_{cw}[i,j]),\tau(S_{cw}[j+1,i-1])\}=r^*$, we obtain $r(i)= r^*$. By a similar argument, $r^*=r(j+1)$ also holds.

Without loss of generality, we assume that $r^*=\tau(S_{cw}[i,j])\geq \tau(S_{cw}[j+1,i-1])$.

If $i\in [1,k]$ and $j\in [k+2,n]$, then the lemma follows. Otherwise, one of the following four cases must hold: $i=k+1$, $j=k+1$, $[i,j]\subseteq [1,k]$, and $[i,j]\subseteq[k+2,n]$.
If $i=k+1$, then $r^*=r(k+1)$. If $j=k+1$, then $r^*=r(k+2)$.
Below we show that $r^*=r(1)$ if $[i,j]\subseteq [1,k]$ and we also show that the case $[i,j]\subseteq[k+2,n]$ cannot happen, which will prove the lemma.


If $[i,j]\subseteq [1,k]$, then $\tau(S_{cw}[j+1,i-1])\geq \tau(S_{cw}[k+1,n])$, for $S_{cw}[k+1,n]\subseteq S_{cw}[j+1,i-1]$. By the definition of $k$, we have $\tau(S_{cw}[k+1,n])\geq \tau(S_{cw}[1,k])$. Because $[i,j]\subseteq [1,k]$, $\tau(S_{cw}[1,k])\geq \tau(S_{cw}[i,j])$. Combining the above three inequalities leads to the following: $\tau(S_{cw}[j+1,i-1])\geq \tau(S_{cw}[k+1,n])\geq \tau(S_{cw}[1,k])\geq \tau(S_{cw}[i,j])$. Because $r^*=\tau(S_{cw}[i,j])\geq \tau(S_{cw}[j+1,i-1])$, we obtain $r^*=\tau(S_{cw}[j+1,i-1])= \tau(S_{cw}[k+1,n])= \tau(S_{cw}[1,k])= \tau(S_{cw}[i,j])$. Notice that $r(1)\leq \max\{\tau(S_{cw}[1,k]),\tau(S_{cw}[k+1,n])\}$. Thus, we derive $r(1)\leq r^*$. Since $r^*\leq r(1)$, we finally have $r^*=r(1)$.

If $[i,j]\subseteq [k+2,n]$, then $\tau(S_{cw}[j+1,i-1])\geq \tau(S_{cw}[1,k+1])$. By the definition of $k$, we have $\tau(S_{cw}[1,k+1])> \tau(S_{cw}[k+2,n])$. Also, since $[i,j]\subseteq [k+2,n]$, $\tau(S_{cw}[k+2,n])\geq \tau(S_{cw}[i,j])$ holds. Therefore, we obtain $\tau(S_{cw}[j+1,i-1])\geq \tau(S_{cw}[1,k+1])> \tau(S_{cw}[k+2,n])\geq \tau(S_{cw}[i,j])$, which incurs contradiction since $r^*=\tau(S_{cw}[i,j])\geq \tau(S_{cw}[j+1,i-1])$. Thus, the case $[i,j]\subseteq [k+2,n]$ cannot happen.
\qed
\end{proof}

Based on the above lemma, our algorithm works as follows.

We first compute $r(1)$ and the index $k$. This can be easily done in $O(n\log n)$ time. Indeed, as $h$ increases in $[1,n-1]$, $\tau(S_{cw}[1,h])$ is monotonically increasing while $\tau(S_{cw}[h+1,n])$ is monotonically decreasing. Therefore, $r^*_1$ and $k$ can be found by binary search on $[1,n-1]$. As both $\tau(S_{cw}[1,h])$ and $\tau(S_{cw}[h+1,n])$ can be computed in $O(n)$ time, the binary search takes $O(n\log n)$ time. For the same reason, we can compute $r(k+1)$ and $r(k+2)$ in $O(n\log n)$ time.

If $r^*\not\in \{r(1),r(k+1),r(k+2)\}$, then $r^*=\max\{\tau(S_{cw}[i,j]),\tau(S_{cw}[j+1,i-1])$ for two indices $i$ and $j$ with $i\in [1,k]$ and $j\in [k+2,n]$. We can compute it as follows.
Let $\ell$ be a line through $v_{k+1}$ and intersecting the interior of $\overline{p_np_1}$.
Let $o$ be any point on $\ell$ in the interior of $\CH(S)$.
Lemma~\ref{lem:partition} implies $\ell$ and $o$ satisfy the property discussed above, i.e., $\ell$ separates the two rays $\rho_1$ and $\rho_2$. Consequently, we can apply our algorithm for Theorem~\ref{theo:optimization} to compute $r^*$ in $O(n\log n\log\log n)$ time.

\begin{theorem}
The planar two-center problem for a set of $n$ points in convex position can be solved in $O(n\log n\log\log n)$ time.
\end{theorem}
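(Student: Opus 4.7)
The plan is to reduce the convex-position case to the algorithm of Theorem~\ref{theo:optimization} via the structural decomposition in Lemma~\ref{lem:partition}. First, I would compute the convex hull of $S$ in $O(n\log n)$ time and label its vertices clockwise as $p_1,\ldots,p_n$. Then I would compute the pivot index $k$, the largest index in $[1,n-1]$ with $\tau(S_{cw}[1,k])\le \tau(S_{cw}[k+1,n])$. Because $\tau(S_{cw}[1,h])$ is nondecreasing and $\tau(S_{cw}[h+1,n])$ is nonincreasing in $h$, a binary search together with $O(n)$-time evaluations of the smallest enclosing disk (by Megiddo's linear-time algorithm) yields $k$ in $O(n\log n)$ time.

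Next, I would compute the three scalar candidates $r(1)$, $r(k+1)$, $r(k+2)$ from Lemma~\ref{lem:partition}. Each $r(i)=\min_{h}\max\{\tau(S_{cw}[i,h]),\tau(S_{cw}[h+1,i-1])\}$ has, again, a monotone structure along $h$ by the nesting of the circular chains, so each can be obtained in $O(n\log n)$ time by a binary search on $h$ combined with linear-time smallest-enclosing-disk evaluations. These contribute $O(n\log n)$ in total, well within the target bound.

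It remains to handle the fourth quantity in Lemma~\ref{lem:partition}, namely $\min\max\{\tau(S_{cw}[i,j]),\tau(S_{cw}[j+1,i-1])\}$ over $i\in[1,k]$ and $j\in[k+2,n]$. Here I would invoke the nearby-case machinery of Theorem~\ref{theo:optimization}. To legitimately apply it, I need a point $o$ inside the hull and a line $\ell$ through $o$ that separates the optimal partition rays $\rho_1,\rho_2$. The key geometric observation is that, by the definition of the ranges for $i$ and $j$, the vertex $p_{k+1}$ always lies on the ``$S_{cw}[i,j]$'' side while $p_n$ and $p_1$ lie on the other, so any line through $p_{k+1}$ that crosses the open edge $\overline{p_np_1}$ places the two partition arcs on opposite sides. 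Choosing $o$ to be any interior point of $\CH(S)$ on such a line $\ell$, I can identify $S^+$ and $S^-$ as the points above and below $\ell$ (after rotating coordinates), sort them radially around $o$, and feed them into the algorithm of Theorem~\ref{theo:optimization} to compute this last candidate in $O(n\log n\log\log n)$ time. Taking the minimum of the four candidates yields $r^*$.

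The main obstacle I anticipate is the separating-line argument: one must verify that with $\ell$ chosen through $p_{k+1}$ and crossing $\overline{p_np_1}$, every partition of the convex chain corresponding to an $(i,j)$ with $i\in[1,k]$, $j\in[k+2,n]$ indeed produces two subsets separated by some pair of rays from $o$ with $\ell$ between them, so that the matrix-monotonicity and sortedness assumptions exploited throughout Sections~\ref{sec:decision} and~\ref{sec:optimization} remain valid. The remainder of the proof is bookkeeping: combining the $O(n\log n)$ cost of computing $\CH(S)$, $k$, $r(1)$, $r(k+1)$, $r(k+2)$ with the $O(n\log n\log\log n)$ cost of the single invocation of Theorem~\ref{theo:optimization} gives the claimed total running time.
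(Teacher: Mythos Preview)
Your proposal is correct and follows essentially the same approach as the paper: compute $k$, $r(1)$, $r(k+1)$, $r(k+2)$ in $O(n\log n)$ time by binary search with linear-time smallest-enclosing-disk evaluations, then handle the remaining case by choosing $\ell$ through $p_{k+1}$ and crossing the interior of $\overline{p_np_1}$, picking $o$ on $\ell$ inside $\CH(S)$, and invoking Theorem~\ref{theo:optimization}. The separating-line concern you flag is exactly what the paper addresses (tersely) by appealing to Lemma~\ref{lem:partition}.
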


\paragraph{Remark.} The randomized result remarked after Theorem~\ref{theo:decisionserial} also applies to the convex position case, i.e., after $O(n\log n)$ deterministic time preprocessing, we can compute $r^*$ in $O(n)$ time with high probability (i.e., $1-2^{-\Omega(n/\log^{14}n)}$).

\section{The Dynamic Circular Hull Problem}
\label{sec:circularhull}


In this section, we give an $O(n)$ time algorithm for the dynamic circular hull problem needed in our decision algorithm in Section~\ref{sec:decision}.

Recall that the points of $S$ are ordered around $o$ cyclically. To simplify the
exposition, we first work on a slightly different problem setting in which points are sorted by their $x$-coordinates; we will show later that the algorithm can be easily adapted to the original problem setting.

Specifically, let $L=\{p_1,p_2,\ldots,p_n\}$  be a set of $n$ points sorted
from left to right and $R=\{q_1,q_2,\ldots,q_n\}$  be a set of $n$ points sorted
from left to right, such that all points of $L$ are strictly to the
left of a vertical line $\ell$ and all points of $R$ are strictly to
the right of $\ell$.  The problem is to maintain a
sublist $Q$ of the sorted list of $L\cup R$, with
$Q=L$ initially, to determine whether $\alpha_r(Q)$ exists under deletions and
insertions, such that a deletion operation deletes the leftmost point
of $Q$ and an insertion operation inserts the point of $R$ after
the rightmost point of $Q$. Further, deletion operations only happen to
points of $L$. In the following, we build a data
structure in $O(n)$ time that can handle each update in $O(1)$
amortized time (i.e., after each update, we know whether
$\alpha_r(Q)$ exists).
We make a general position assumption that no two points of $L\cup R$ have the same $x$-coordinate.

Since initially $Q=L$, we need to compute $\alpha_r(Q)$. Hershberger and Suri~\cite{ref:HershbergerFi91} gave an $O(n\log n)$ time algorithm using divide-and-conquer.
The algorithm of Edelsbrunner et al.~\cite{ref:EdelsbrunnerOn83} can
also compute $\alpha_r(Q)$ in $O(n\log n)$ time by first computing the
farthest Delaunay triangulation of $Q$. Both algorithms still take
$\Theta(n\log n)$ time even if points of $Q$ are sorted (indeed,
the algorithm of \cite{ref:HershbergerFi91} spends $O(n)$ time for
each combine/merge step and the algorithm of
\cite{ref:EdelsbrunnerOn83} needs to compute the farthest Delaunay
triangulation first). We instead exhibit an $O(n)$ time incremental algorithm, which can be considered an extension of Graham's scan for convex hulls, although the extension is not straightforward at all. Before we are able to describe the algorithm, we need to discuss some properties of the circular hulls.

The remainder of this section is organized as follows. In Section~\ref{sec:obser}, we show some properties of circular hulls that will be useful for our algorithm. In Section~\ref{sec:static}, we present our linear-time algorithm for constructing the circular hull for a set of sorted points. In Section~\ref{sec:datastructure}, we elaborate on our data structure for maintaining $\alpha_r(Q)$ for a dynamic set $Q$. Section~\ref{sec:initial} sets up the data structure initially when $Q=L$ (e.g., invokes the algorithm given in Section~\ref{sec:static}). Our algorithms for processing deletions and insertions will be described in Sections~\ref{sec:delete} and \ref{sec:insert}, respectively. Finally in Section~\ref{sec:radial} we adapt the algorithm to our original problem setting where points are sorted radially around the origin $o$.

\subsection{Observations and Properties of Circular Hulls}
\label{sec:obser}

From now on, we assume $r=1$ and thus a disk of radius $r$ is a {\em unit disk} (whose boundary is a {\em unit circle}). We use $\alpha(Q)$ to refer to $\alpha_r(Q)$.
We assume that $Q$ is a subset of $L\cup R$ and $\alpha(Q)$ exists.

Recall that every arc of $\alpha(Q)$ is a minor arc. In the following, unless otherwise stated, an arc refers to a minor arc and a disk refers to a unit disk. For ease of exposition, we make a general position assumption that no point of $L\cup R$ is on a minor arc of two other points of $L\cup R$.

We define the {\em upper hull} of $\alpha(Q)$ as the boundary of $\alpha(Q)$ from the
leftmost vertex to the rightmost vertex. The remaining arcs of $\alpha(Q)$ comprise
the {\em lower hull}. Unlike convex hulls, the upper hull (resp., the
lower hull) of $\alpha(Q)$ may not be $x$-monotone due to that the
leftmost/rightmost arc may not be $x$-monotone.
If the rightmost point $p$ of $\alpha(Q)$ is in the interior of an
arc, then we refer to the arc as the {\em rightmost arc} of $\alpha(Q)$;
otherwise, the rightmost arc is $null$ (and its supporting disk is defined to be $\emptyset$).
We define the {\em leftmost arc} of $\alpha(Q)$ likewise.


\begin{figure}[t]
\begin{minipage}[t]{0.49\textwidth}
\begin{center}
\includegraphics[height=1.5in]{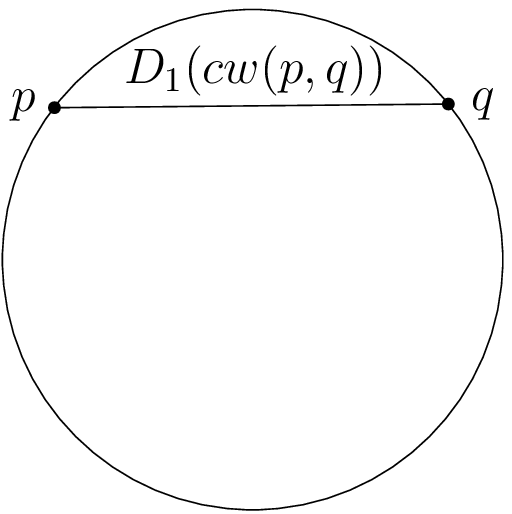}
\caption{\footnotesize Illustrating $D_1(cw(p,q))$.}
\label{fig:chord}
\end{center}
\end{minipage}
\hspace{0.02in}
\begin{minipage}[t]{0.49\textwidth}
\begin{center}
\includegraphics[height=1.5in]{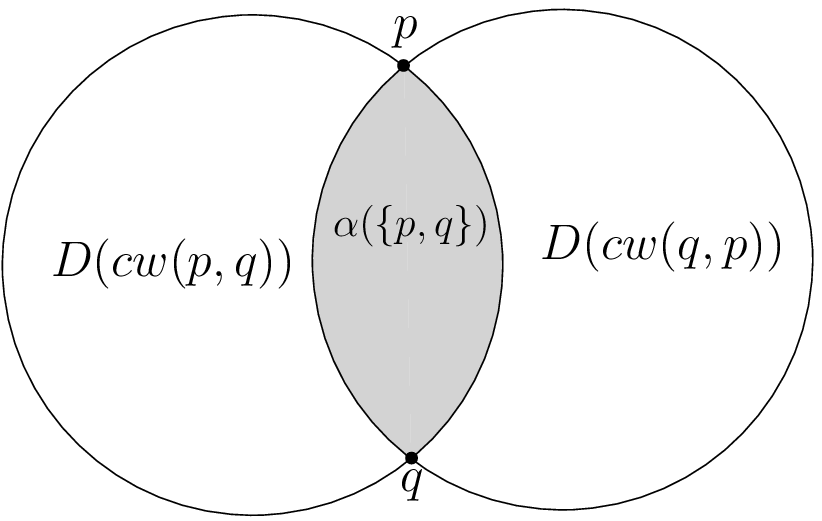}
\caption{\footnotesize Illustrating $\alpha(\{p,q\})$.}
\label{fig:twopointhull}
\end{center}
\end{minipage}
\end{figure}

For a minor arc $w$, recall that $D(w)$ is the supporting disk of $w$.
We further use $D_1(w)$ to denote the region of $D(w)$ bounded by $w$ and the chord of $D(w)$ connecting the two endpoints of $w$ (e.g., see Fig.~\ref{fig:chord}).
Observe that $\alpha(\{p,q\})=D_1(cw(p,q))\cup D_1(ccw(p,q))=D(cw(p,q))\cap D(ccw(p,q))$; e.g., see Fig.~\ref{fig:twopointhull}.
For notational simplicity, we use $\alpha(p,q)$ to refer to $\alpha(\{p,q\})$.
The following observation, which is due to the convexity of the circular hull, was already shown in~\cite{ref:HershbergerFi91}.

\begin{observation}\label{obser:red}{\em \cite{ref:HershbergerFi91}}
Suppose $p$ is a point to the right (resp., left) of all points of $Q$ and $\alpha(\{p\} \cup Q)$ exists. Then, $p$ is not a vertex of $\alpha(\{p\} \cup Q)$ if and only if $p$ is in $D_1(w)$, where $w$ is the rightmost (reps., leftmost) arc of $\alpha(Q)$. We say that $p$ is {\em redundant} (with respect to $\alpha(Q)$) if $p\in D_1(w)$.
\end{observation}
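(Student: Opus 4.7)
The plan is to decouple the observation into two cleaner equivalences and chain them together.

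The first equivalence I will establish is that, under the given hypotheses, $p$ fails to be a vertex of $\alpha(\{p\}\cup Q)$ if and only if $p\in\alpha(Q)$. This falls directly out of the tangent material preceding the observation. If $p\notin\alpha(Q)$, then $\{p\}\cup Q$ being unit-disk coverable forces $p$ to admit two tangents to $\alpha(Q)$, and splicing those tangents into the boundary (as described there) produces $\alpha(\{p\}\cup Q)$ with $p$ as a new vertex. Conversely, if $p\in\alpha(Q)$, then by Observation~\ref{obser:basic}(4) every unit disk containing $Q$ automatically contains $p$, so $\alpha(\{p\}\cup Q)=\alpha(Q)$; and since $p$ is strictly right of every vertex of $\alpha(Q)$, it cannot itself be one of them.

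The second equivalence, which is the geometric heart of the matter, is that $p\in\alpha(Q)$ if and only if $p\in D_1(w)$. The $(\Leftarrow)$ direction is easy: the arc $w$ lies on $\partial\alpha(Q)\subseteq\alpha(Q)$ and its chord $\overline{uv}$ (with $u,v$ the endpoints of $w$) lies in $\alpha(Q)$ by convexity, so the circular segment $D_1(w)$, being the convex hull of $w\cup\overline{uv}$, is contained in the convex set $\alpha(Q)$. If instead $w$ is null (the rightmost point of $\alpha(Q)$ is itself a vertex), then $D_1(w)=\emptyset$ and $p$ being strictly right of every vertex is automatically outside $\alpha(Q)$, so both sides of the equivalence are false.

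For the $(\Rightarrow)$ direction, assume $w$ is non-null and $p\in\alpha(Q)$; I need to push $p$ into $D_1(w)$. Let $x_{\max}$ denote the maximum $x$-coordinate among vertices of $\alpha(Q)$, so $x_p>x_{\max}$ by hypothesis. The chord $\overline{uv}$ splits $\alpha(Q)$ into $D_1(w)$ and a complementary convex region $M$, and it suffices to show $M\subseteq\{x\le x_{\max}\}$. The boundary of $M$ consists of $\overline{uv}$ (whose $x$-coordinates are bounded by $x_{\max}$) together with the arcs of $\alpha(Q)$ other than $w$, so the job reduces to checking that every such other arc lies in $\{x\le x_{\max}\}$. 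For this I will invoke convexity of $\alpha(Q)$: the set $\alpha(Q)\cap\{x>x_{\max}\}$ is convex and hence connected, so the portion of $\partial\alpha(Q)$ with $x>x_{\max}$ forms a single topological arc; it contains no vertex of $\alpha(Q)$ (all vertices satisfy $x\le x_{\max}$), so it is a subarc of a single boundary arc, which must be $w$ because the rightmost point of $\alpha(Q)$, lying in the interior of $w$ by definition, has $x>x_{\max}$.

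The main obstacle I foresee is this last uniqueness step---showing that $w$ is the only boundary arc escaping $\{x\le x_{\max}\}$---which crucially depends on the convexity of $\alpha(Q)$; once that is pinned down, the rest is a bookkeeping chain of implications. The ``leftmost'' variant follows by the obvious mirror reflection across a vertical line.
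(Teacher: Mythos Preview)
Your proof is correct. The paper does not supply its own proof of this observation: it attributes the result to Hershberger and Suri~\cite{ref:HershbergerFi91} and remarks only that it is ``due to the convexity of the circular hull.'' Your argument fleshes out exactly that hint, and the two-step decomposition you chose (first reducing ``$p$ is not a vertex'' to ``$p\in\alpha(Q)$'' via the tangent discussion, then reducing ``$p\in\alpha(Q)$'' to ``$p\in D_1(w)$'' via convexity) is the natural way to do it. The uniqueness step you flagged as the main obstacle---showing that $w$ is the only boundary arc meeting $\{x>x_{\max}\}$---is handled correctly by your connectedness argument, and the null-$w$ case is dispatched cleanly. One small remark: when you conclude $M\subseteq\{x\le x_{\max}\}$ from $\partial M\subseteq\{x\le x_{\max}\}$, the relevant fact is compactness of $M$ (the maximum of $x$ on $M$ is attained on $\partial M$), not convexity; your write-up mentions convexity there, which is true but not the operative property.
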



Recall that in Graham's scan for computing convex hulls, the algorithm uses ``left turn'' and ``right turn''. Here instead we find it more informative to use {\em inner turn} and {\em outer turn}, defined as follows. Note that these concepts are new.
Suppose two points $p$ and $q$ are unit disk coverable.
Consider the minor arc $cw(p,q)$, and a point $t$. We say that $cw(p,q)$ and $t$ form an {\em inner turn} if
$t\in D(cw(p,q))$ and {\em outer turn} otherwise.
The following observation will help prove the correctness of our algorithm.


\begin{figure}[t]
\begin{minipage}[t]{0.49\textwidth}
\begin{center}
\includegraphics[height=1.5in]{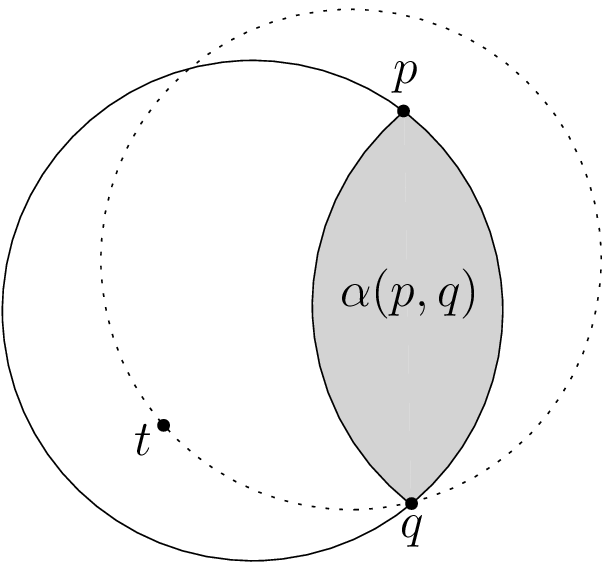}
\caption{\footnotesize Illustrating Observation~\ref{obser:turn}(1). The dotted circle depicts $D(cw(q,t))$.}
\label{fig:innerturn}
\end{center}
\end{minipage}
\hspace{0.02in}
\begin{minipage}[t]{0.49\textwidth}
\begin{center}
\includegraphics[height=1.5in]{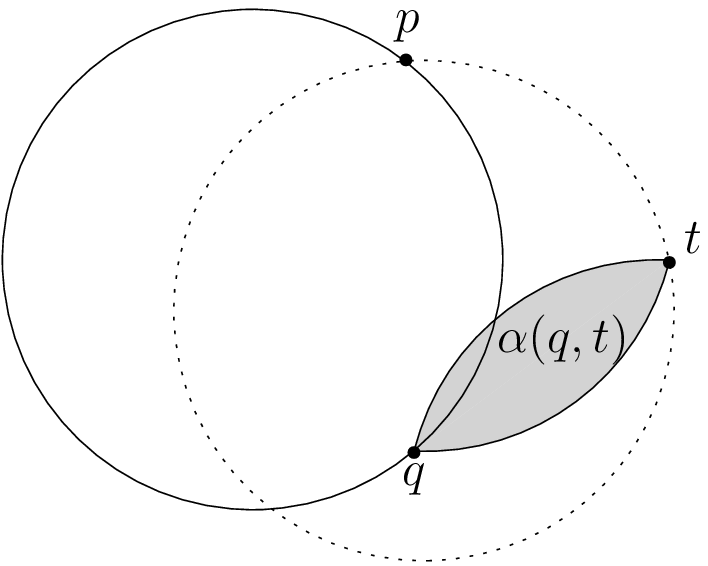}
\caption{\footnotesize Illustrating Observation~\ref{obser:turn}(2). The dotted circle depicts $D(cw(p,t))$.}
\label{fig:outerturn}
\end{center}
\end{minipage}
\end{figure}

\begin{observation}\label{obser:turn}
Consider a minor arc $cw(p,q)$ and a point $t$.
\begin{enumerate}
\item
Suppose $cw(p,q)$ and $t$ form an inner turn. If $t$ is not in the interior of $\alpha(p,q)$, then $p$ is contained in the disk $D(cw(q,t))$; e.g., see Fig.~\ref{fig:innerturn}.
\item
Suppose $cw(p,q)$ and $t$ form an outer turn. If $\{p,q,t\}$ is unit disk coverable and $p$ is not in the interior of $\alpha(q,t)$, then $q$ is contained in the disk $D(cw(p,t))$; e.g., see Fig.~\ref{fig:outerturn}.
\end{enumerate}
\end{observation}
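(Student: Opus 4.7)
The plan is to reduce both parts of the observation to a structural claim about the three-point circular hull $\alpha(\{p,q,t\})$ and then invoke Observation~\ref{obser:basic}(2), which (specialized to $|Q|=3$) says that a minor arc between two points of $Q$ is an arc of $\alpha(Q)$ iff its supporting disk contains the third point. Thus the conclusion ``$p\in D(cw(q,t))$'' of part~(1) is equivalent to showing that $cw(q,t)$ is an arc of $\alpha(\{p,q,t\})$, and the conclusion ``$q\in D(cw(p,t))$'' of part~(2) to showing that $cw(p,t)$ is an arc of $\alpha(\{p,q,t\})$.

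The single auxiliary geometric fact I will need is an orientation statement: when $\partial\alpha(Q)$ is traversed clockwise (so the hull interior lies on the right), each arc is traversed in the clockwise direction around its supporting disk's center. Indeed, by Observation~\ref{obser:basic}(3)--(4) the supporting disk of each arc contains the hull, so its center lies on the hull-interior side of the arc; walking clockwise along $\partial\alpha(Q)$ puts that side on the right, which is exactly the condition for walking clockwise around that center. Consequently, the arc from a vertex $A$ to the next vertex $B$ in the clockwise hull order is always the arc $cw(A,B)$ in our notation.

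For part~(1): from $t\in D(cw(p,q))$ the disk $D(cw(p,q))$ covers $\{p,q,t\}$, so $\alpha(\{p,q,t\})$ exists and $p,q$ are vertices by Observation~\ref{obser:basic}(1). The general position assumption together with $t\notin\mathrm{int}(\alpha(p,q))$ gives $t\notin\alpha(p,q)$, and combined with $\alpha(p,q)\subseteq\alpha(\{p,q,t\})$ (Observation~\ref{obser:basic}(5)) this forces $\alpha(\{p,q,t\})\supsetneq\alpha(p,q)$, hence $t$ must be a vertex as well. Since $\alpha(p,q)=D(cw(p,q))\cap D(ccw(p,q))$ and $t\in D(cw(p,q))\setminus\alpha(p,q)$, we get $t\notin D(ccw(p,q))$, so by Observation~\ref{obser:basic}(2), $ccw(p,q)$ is \emph{not} an arc of the hull while $cw(p,q)$ is. The auxiliary orientation fact then pins the clockwise cyclic order of vertices down to $(p,q,t)$, so the arc from $q$ to $t$ on the hull must be $cw(q,t)$, and Observation~\ref{obser:basic}(2) gives $p\in D(cw(q,t))$.

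For part~(2): $\alpha(\{p,q,t\})$ exists by hypothesis. The same monotonicity argument, using $t\notin D(cw(p,q))\supseteq\alpha(p,q)$ and $p\notin\mathrm{int}(\alpha(q,t))$ (plus general position), makes both $p$ and $t$ vertices. Now I split on whether $q$ is also a vertex. If not, the hull collapses to $\alpha(\{p,t\})$, which contains $q$ and which is in turn contained in $D(cw(p,t))$, so the conclusion is immediate. If $q$ is a vertex, the hull has three vertices, $p$ and $q$ are adjacent, and the arc between them cannot be $cw(p,q)$ (whose supporting disk misses $t$), so it must be $ccw(p,q)=cw(q,p)$. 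The auxiliary fact then yields clockwise cyclic order $(q,p,t)$, making $cw(p,t)$ the arc between $p$ and $t$; Observation~\ref{obser:basic}(2) gives $q\in D(cw(p,t))$. The main obstacle is justifying the auxiliary orientation fact cleanly---one must carefully reconcile ``clockwise around a point'' with ``clockwise around a convex region with interior on the right''---after which both parts reduce to a formal unpacking of Observation~\ref{obser:basic}, plus the small extra case in part~(2).
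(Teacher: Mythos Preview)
Your argument is correct. For part~(1) the paper gives essentially no proof---it simply asserts the conclusion ``can be verified from the figure''---so your reduction to the three-point hull $\alpha(\{p,q,t\})$ and the orientation fact is a genuine strengthening. For part~(2) the paper uses the same underlying idea but organizes it more economically: after observing that $p$ is a vertex of $\alpha(\{p,q,t\})$, it lets $a$ be the clockwise neighbor of $p$ (necessarily $a\in\{q,t\}$), notes that $D(cw(p,a))$ covers $\{p,q,t\}$ by Observation~\ref{obser:basic}(2), and rules out $a=q$ in one line because $t\notin D(cw(p,q))$. This bypasses your need to separately prove $t$ is a vertex, to split on whether $q$ is a vertex, and to pin down the full cyclic order. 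Both routes implicitly rely on the orientation fact you isolate (that the arc from a vertex to its clockwise neighbor is the $cw$ arc); the paper simply takes this as part of the convention, whereas you justify it explicitly, which is a useful clarification.
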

\begin{proof}
For the first statement, since $cw(p,q)$ and $t$ form an inner turn, $t\in D(cw(p,q))$. As $t$ is not in the interior of $\alpha(p,q)$, one can verify from Fig.~\ref{fig:innerturn} that $D(cw(q,t))$ must contain $p$.

We next prove the second statement. Because $\{p,q,t\}$ is unit disk coverable, $\alpha(\{p,q,t\})$ exists.
As $p$ is not in the interior of $\alpha(q,t)$, $p$ must be a vertex of $\alpha(\{p,q,t\})$. Let $a$ be the clockwise neighbor of $p$ on $\alpha(\{p,q,t\})$. Hence, $cw(p,a)$ is an arc of $\alpha(\{p,q,t\})$ and $a$ is either $q$ or $t$. Also, $D(cw(p,a))$ covers $\{p,q,t\}$ by Observation~\ref{obser:basic}(2).
If $a=q$, then $D(cw(p,q))$ contains $t$, which contradicts with that $cw(p,q)$ and $t$ form an outer turn. Thus, $a=t$, and $D(cw(p,t))$ contains $q$.
\qed
\end{proof}


For any two vertices $v_1$ and $v_2$ on $\alpha(Q)$, we use $\partial{\alpha(Q)}[v_1,v_2]$ to denote the set of vertices of $\alpha(Q)$ clockwise from $v_1$ to $v_2$. In particular, if $v_1=v_2$, then we let $\partial_{\alpha(Q)}[v_1,v_2]$ consist of all vertices of $\alpha(Q)$.
Define $\partial_{\alpha(Q)}(v_1,v_2)=\partial_{\alpha(Q)}[v_1,v_2]\setminus\{v_1,v_2\}$.
We use $\overline{\partial_{\alpha(Q)}[v_1,v_2]}$ to refer to the subset of vertices of $\alpha(Q)$ not in $\partial_{\alpha(Q)}[v_1,v_2]$, and define $\overline{\partial_{\alpha(Q)}(v_1,v_2)}$ similarly.


Let $p$ be a point outside $\alpha(Q)$, and $cw(a,p)$ and $ccw(b,p)$ are the upper and lower tangents from $p$ to $\alpha(Q)$, respectively; e.g., see Fig~\ref{fig:tangent}.
Recall that by replacing the arcs of $\alpha(Q)$ clockwise from $a$ to $b$ with the two tangents, we can obtain $\alpha(Q\cup \{p\})$. Hence, $\partial_{\alpha(Q)}(a,b)$ consists of exactly those vertices of $\alpha(Q)$ that are not vertices of $\alpha(Q\cup \{p\})$. We further have the following observation.


\begin{observation}\label{obser:tangentpos}
Suppose $cw(a,p)$ and $ccw(b,p)$ are the upper and lower tangents from a point $p$ to $\alpha(Q)$, respectively; e.g., see Fig.~\ref{fig:tangent}.
\begin{enumerate}
\item
For any vertex $c$ in $\partial_{\alpha(Q)}(a,b)$, there is no disk with $c$ on the boundary that contains $Q\cup\{p\}$.

\item
For any vertex $c$ in $\overline{\partial_{\alpha(Q)}[a,b]}$, any disk tangent to $\alpha(Q)$ at $c$ covers $Q\cup\{p\}$.

\item
If $p$ is strictly to the right of all points of $Q$, then the rightmost vertex of $\alpha(Q)$ must be in $\partial_{\alpha(Q)}[a,b]$.

\item
If there is a line $l$ through a vertex $v$ of $\alpha(Q)$ such that
all vertices of $Q$ are on the same side of $l$ while $p$ is on the other side, then $v$ must be in $\partial_{\alpha(Q)}[a,b]$.
\end{enumerate}
\end{observation}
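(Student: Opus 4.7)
The plan is to prove the four parts in the order (1), (2), (4), (3). The main structural facts used throughout come from the paragraph just before the statement: the vertex set of $\alpha(Q\cup\{p\})$ is $\{p\}$ together with the vertices of $\alpha(Q)$ not in $\partial_{\alpha(Q)}(a,b)$, and for each $c\in \overline{\partial_{\alpha(Q)}[a,b]}$ the two cyclic neighbors of $c$ in $\alpha(Q\cup\{p\})$ coincide with those in $\alpha(Q)$, because $p$ is inserted between $a$ and $b$ while the rest of the cyclic order is preserved.

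For part (1), each $c\in\partial_{\alpha(Q)}(a,b)$ belongs to $Q\subseteq Q\cup\{p\}$ but is not a vertex of $\alpha(Q\cup\{p\})$. Observation~\ref{obser:basic}(1) applied to $Q\cup\{p\}$ then asserts that no unit disk with $c$ on its boundary covers $Q\cup\{p\}$, which is the desired statement. For part (2), a disk $D$ tangent to $\alpha(Q)$ at $c\in \overline{\partial_{\alpha(Q)}[a,b]}$ contains $cw(c)$ and $ccw(c)$ by Observation~\ref{obser:tangent} applied to $\alpha(Q)$; since these are also $c$'s neighbors in $\alpha(Q\cup\{p\})$, the same observation applied to $\alpha(Q\cup\{p\})$ gives that $D$ is tangent to $\alpha(Q\cup\{p\})$ at $c$ and hence contains $\alpha(Q\cup\{p\})\supseteq Q\cup\{p\}$.

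Part (4) is the main step, which I prove by contradiction using convexity. Suppose $v\in \overline{\partial_{\alpha(Q)}[a,b]}$. Then $v$ is a vertex of $\alpha(Q\cup\{p\})$, and by the opening observation its neighbors there agree with those in $\alpha(Q)$, both of which belong to $Q$ and hence lie on one side of $l$ (call it the left side); the same side contains all other vertices of $\alpha(Q)\setminus\partial_{\alpha(Q)}(a,b)$ and of course $a$ and $b$, while $p$ is on the right side. Traversing the strictly convex closed curve $\partial\alpha(Q\cup\{p\})$ clockwise from $v$ we encounter: $v$ on $l$, then a chain of vertices on the left ending at $a$, then the arc $cw(a,p)$ traveling from the left to $p$ on the right, then the arc $cw(p,b)$ returning to $b$ on the left, and finally a chain of left vertices back to $v$. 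Each of $cw(a,p)$ and $cw(p,b)$ is a minor arc whose endpoints lie on opposite sides of $l$, and the signed distance to $l$ along a minor arc (a sine function plus a constant, restricted to an interval of length at most $\pi$) changes sign exactly once in this situation, so each arc meets $l$ at a single interior point. By general position these two points are distinct from one another and from $v$, producing at least three points of $\partial\alpha(Q\cup\{p\})\cap l$. This contradicts the fact that a line meets the boundary of a strictly convex planar region in at most two points.

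For part (3), the very same convexity argument applies with $l$ taken to be the vertical line through the rightmost vertex $v_r$. By general position every vertex of $\alpha(Q)$ other than $v_r$ has strictly smaller $x$-coordinate than $v_r$, so in particular $a$, $b$, and the neighbors of $v_r$ lie strictly to the left of $l$; and $p$, being strictly right of all of $Q$, lies strictly to the right. Since the argument for (4) needs only that the vertices of $\alpha(Q\cup\{p\})$ other than $v_r$ and $p$ lie on one side of $l$ (it never invokes positions of non-vertex points of $Q$), it transfers without change and yields $v_r\in \partial_{\alpha(Q)}[a,b]$. The step that I expect to require the most care is the minor-arc crossing count in (4): verifying that each tangent arc crosses $l$ exactly once, and that the three intersection points in $\partial\alpha(Q\cup\{p\})\cap l$ are truly distinct from $v$ (which follows from the fact that arcs of $\alpha(Q\cup\{p\})$ contain no vertex in their interior, together with a mild general-position assumption that $v$ does not happen to lie on the specific unit circles supporting $cw(a,p)$ and $cw(p,b)$). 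Everything else reduces to the bookkeeping about vertices and cyclic neighbors of $\alpha(Q\cup\{p\})$ that I isolated at the outset.
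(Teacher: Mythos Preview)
Your proof is correct. Parts (1) and (2) match the paper's treatment (the paper dismisses them in one sentence; you spell out the same idea via Observation~\ref{obser:basic}(1) and Observation~\ref{obser:tangent}).

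For parts (3) and (4) you take a genuinely different route. The paper argues as follows for (3): assuming $v\in\overline{\partial_{\alpha(Q)}[a,b]}$, it invokes part (2) to conclude that the two supporting disks $D(cw(v,cw(v)))$ and $D(ccw(v,ccw(v)))$ both cover $Q\cup\{p\}$, so their intersection $Z$ contains $p$; it then shows that the vertical line through $v$ must cut one of the two incident arcs of $\alpha(Q)$ twice, which forces $p$ into the region $D_1$ of that arc and hence into $\alpha(Q)$, contradicting the existence of tangents from $p$. Part (4) is then obtained by rotating coordinates. Your argument instead works globally on $\alpha(Q\cup\{p\})$: you observe that the two tangent arcs $cw(a,p)$ and $cw(p,b)$ must each cross $l$ once (endpoints on opposite sides, a circle meets a line at most twice), and together with the vertex $v\in l$ this gives three distinct boundary points of the convex region $\alpha(Q\cup\{p\})$ on a single line, which is impossible. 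You also reverse the order, deriving (3) from (4). Your approach is a clean convexity argument that avoids the case analysis on which arc the vertical line double-hits; the paper's approach stays closer to the local disk structure and yields the slightly stronger intermediate conclusion that $p\in\alpha(Q)$, which is not needed here but is the kind of statement reused elsewhere in the paper. The distinctness check you flag (that the two arc crossings are interior points of distinct arcs sharing only $p\notin l$, hence distinct from each other and from the vertex $v$) is exactly the point that needs saying, and it goes through under the paper's general-position assumptions.
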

\begin{proof}
The first two statements can be easily seen by knowing that $\alpha(Q\cup \{p\})$ can be obtained by replacing the arcs of $\alpha(Q)$ clockwise from $a$ to $b$ by the two tangents $cw(a,p)$ and $ccw(b,p)$.

For the third statement, assume to the contrary that $v\not\in \partial_{\alpha(Q)}[a,b]$, where $v$ is the rightmost vertex of $\alpha(Q)$. Then, $v\in \overline{\partial_{\alpha(Q)}[a,b]}$, and by the second statement, any disk tangent to $\alpha(Q)$ at $v$ covers $Q\cup\{p\}$. Let $v_1=cw(v)$ and $v_2=ccw(v)$. Since $D(cw(v,v_1))$ and $D(ccw(v,v_2))$ are both tangent to $\alpha(Q)$ at $v$, both disks cover $Q\cup\{p\}$. Hence, $Z=D(cw(v,v_1))\cap D(ccw(v,v_2))$ contains $p$. Since $D(cw(v,v_1))$ covers $Q$, it contains $v_2$. Since $D(ccw(v,v_2))$ covers $Q$, it contains $v_1$. Let $l_v$ be the vertical line through $v$. We claim that $l_v$ must intersect one of $cw(v,v_1)$ and $ccw(v,v_2)$ twice. Indeed, since $l_v$ contains $v$, it intersects each of the two arcs at least once. If $l_v$ does not intersect either arc twice, then since $D(cw(v,v_1))$ contains $v_2$ and $D(ccw(v,v_2))$ contains $v_1$, and both $v_1$ and $v_2$ are to the left of $v$, $Z$ must be to the left of $l_v$. As $p$ is strictly to the right of $l_v$, $p$ cannot be in $Z$, incurring contradiction.
Hence, $l_v$ intersects one of $cw(v,v_1)$ and $ccw(v,v_2)$ twice.
Assume without loss of generality that $l_v$ intersects $cw(v,v_1)$ twice. This implies that the region of $D(cw(v,v_1))$ to the right of $l_v$ is a subset of $D_1(cw(v,v_1))$. Since $p$ is to the right of $l_v$ and $p$ is in $D(cw(v,v_1))$, $p$ must be in the region of  $D(cw(v,v_1))$ to the right of $l_v$ and thus is in $D_1(cw(v,v_1))$. Because $D_1(cw(v,v_1))\subseteq \alpha(Q)$, $p$ is in $\alpha(Q)$. But this means that there are no tangents from $p$ to $\alpha(Q)$, incurring contradiction.

The fourth statement can be proved in the same way as above by rotating the coordinate system so that $l$ is vertical and $p$ is on its right side. \qed
\end{proof}

Let $Q_1$ be the subset of $Q$ to the left of the vertical line $\ell$ and $Q_2=Q\setminus Q_1$.
Let $cw(a_1,a_2)$ and $ccw(b_1,b_2)$ be the upper and lower common tangents of $\alpha(Q_1)$ and $\alpha(Q_2)$, respectively, i.e., $a_1$ and $b_1$ are the tangent points on $\alpha(Q_1)$ and $a_2$ and $b_2$ are the tangent points on $\alpha(Q_2)$; e.g., see Fig.~\ref{fig:commontangent}. Then, the following arcs constitute the boundary of $\alpha(Q)$ in clockwise order: arcs of $\alpha(Q_1)$ clockwise from $b_1$ to $a_1$, $cw(a_1,a_2)$, arcs of $\alpha(Q_2)$ clockwise from $a_2$ to $b_2$, and $cw(b_2,b_1)$.
The following observation generalizes Observation~\ref{obser:tangentpos}.

\begin{observation}\label{obser:tangentposgeneral}
Suppose $cw(a_1,a_2)$ and $ccw(b_1,b_2)$ are the upper and lower common tangents of $\alpha(Q_1)$ and $\alpha(Q_2)$, respectively; e.g., see Fig.~\ref{fig:commontangent}.
\begin{enumerate}
\item
For any vertex $c$ in $\partial_{\alpha(Q_1)}(a_1,b_1)\cup \partial_{\alpha(Q_2)}(b_2,a_2)$, there is no disk with $c$ on the boundary that contains $Q$.

\item
For any vertex $c$ in $\overline{\partial_{\alpha(Q_1)}[a_1,b_1]}$, any disk tangent to $\alpha(Q_1)$ at $c$ contains $Q$. For any vertex $c$ in $\overline{\partial_{\alpha(Q_2)}[b_2,a_2]}$, any disk tangent to $\alpha(Q_2)$ at $c$ contains $Q$.

\item
The rightmost vertex of $\alpha(Q_1)$ must be in $\partial_{\alpha(Q_1)}[a_1,b_1]$.
The leftmost vertex of $\alpha(Q_2)$ must be in $\partial_{\alpha(Q_2)}[b_2,a_2]$.

\end{enumerate}
\end{observation}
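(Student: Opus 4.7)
The plan is to address the three parts separately, relying on the explicit description of $\partial\alpha(Q)$ given immediately above the observation (the clockwise arcs of $\alpha(Q_1)$ from $b_1$ to $a_1$, the upper tangent $cw(a_1,a_2)$, the clockwise arcs of $\alpha(Q_2)$ from $a_2$ to $b_2$, and the lower tangent $cw(b_2,b_1)$).

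For Part~1, this description shows that every vertex $c \in \partial_{\alpha(Q_1)}(a_1, b_1) \cup \partial_{\alpha(Q_2)}(b_2, a_2)$ is a point of $Q$ that is \emph{not} a vertex of $\alpha(Q)$, since it lies strictly on a discarded arc. Observation~\ref{obser:basic}(1) then yields at once that no unit disk can cover $Q$ while having $c$ on its boundary. For Part~2, fix $c \in \overline{\partial_{\alpha(Q_1)}[a_1, b_1]}$, namely a vertex strictly between $b_1$ and $a_1$ in the clockwise-from-$b_1$ direction along $\alpha(Q_1)$. Such $c$ remains a vertex of $\alpha(Q)$, and both arcs of $\alpha(Q_1)$ incident to $c$ are inherited by $\alpha(Q)$, so the $\alpha(Q)$-neighbors of $c$ coincide with the $\alpha(Q_1)$-neighbors. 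Observation~\ref{obser:tangent} characterizes tangency at $c$ purely in terms of containing those two neighbors, so any disk tangent to $\alpha(Q_1)$ at $c$ is simultaneously tangent to $\alpha(Q)$ at $c$; by the defining property of tangency (together with Observation~\ref{obser:basic}(4)) such a disk covers $\alpha(Q) \supseteq Q$. The case $c \in \overline{\partial_{\alpha(Q_2)}[b_2, a_2]}$ is symmetric.

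Part~3 is the delicate piece, and I plan to mirror the contradiction argument in the proof of Observation~\ref{obser:tangentpos}(3). Suppose $v$, the rightmost vertex of $\alpha(Q_1)$, lies in $\overline{\partial_{\alpha(Q_1)}[a_1, b_1]}$; let $v_1=cw(v)$ and $v_2=ccw(v)$ on $\alpha(Q_1)$. By Part~2, both $D(cw(v,v_1))$ and $D(ccw(v,v_2))$ cover $Q$, so their intersection $Z$ contains $Q_2$. Since $v\in Q_1$ lies strictly to the left of $\ell$, the vertical line $l_v$ through $v$ is strictly to the left of $\ell$, placing $Q_2$ strictly to the right of $l_v$. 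If $l_v$ did not cut either of $cw(v,v_1)$, $ccw(v,v_2)$ twice, the geometric argument from Observation~\ref{obser:tangentpos}(3) would force $Z$ to lie on or to the left of $l_v$, contradicting $Q_2\subseteq Z$. Hence $l_v$ cuts one of these arcs, say $cw(v,v_1)$, twice, and the portion of $D(cw(v,v_1))$ to the right of $l_v$ lies inside $D_1(cw(v,v_1))\subseteq\alpha(Q_1)$; therefore $Q_2\subseteq\alpha(Q_1)$, which is precisely the $Q_1$-dominating case, under which common tangents do not exist, contradicting the standing hypothesis. The leftmost vertex of $\alpha(Q_2)$ is handled by the symmetric argument.

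The main obstacle I anticipate is exactly this last step: lifting the single-point geometric lemma of Observation~\ref{obser:tangentpos}(3) to the set-of-points setting (where $Q_2$ replaces the single point $p$), and then closing the loop by invoking the $Q_1$-dominating/$Q_2$-dominating dichotomy articulated earlier to contradict the presumed existence of the common tangents.
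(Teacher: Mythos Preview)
Your proposal is correct and follows essentially the same approach as the paper: Parts~1 and~2 are derived from the explicit description of $\partial\alpha(Q)$ in terms of the two tangents, and Part~3 is proved by the same contradiction argument mirroring Observation~\ref{obser:tangentpos}(3), ending in the $Q_1$-dominating case to contradict the existence of common tangents. The only cosmetic difference is that in Part~3 the paper runs the ``line cuts an arc twice'' argument with the separating line $\ell$ itself rather than with $l_v$, but your choice of $l_v$ (which tracks Observation~\ref{obser:tangentpos}(3) more literally) works just as well.
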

\begin{proof}
The first two statements simply follow from how we can obtain $\alpha(Q)$ from $\alpha(Q_1)$ and $\alpha(Q_2)$ using the two common tangents.

For the third statement, we only show the case for the rightmost vertex of $\alpha(Q_1)$ and the other case can be treated likewise. The proof is similar to that for Observation~\ref{obser:tangentpos} and we briefly discuss it. Let $v$ be the rightmost vertex of $\alpha(Q_1)$. Assume to the contrary that $v$ is not in $\partial_{\alpha(Q_1)}[a_1,b_1]$. Then, by the second statement, both $D(cw(v,v_1))$ and $D(ccw(v,v_2))$ cover $Q$, where $v_1=cw(v)$ and $v_2=ccw(v)$. Hence, $Z=D(cw(v,v_1))\cap D(ccw(v,v_2))$ covers $Q$. Since $Q_1$ is to the left of $\ell$ while $Q_2$ is to the right of $\ell$, by the same analysis as that for Observation~\ref{obser:tangentpos} we can show that $\ell$ must intersect one of $cw(v,v_1)$ and $ccw(v,v_2)$ twice. Assume without loss of generality that $l$ intersects $cw(v,v_1)$ twice. This implies $D_1(cw(v,v_1))$ contains all points of $Q_2$. Since $D_1(cw(v,v_1))\subseteq \alpha(Q_1)$, we obtain that $\alpha(Q_1)$ contains all points of $Q_2$.
But this means that there are no common tangents between $\alpha(Q_1)$ and $\alpha(Q_2)$, incurring contradiction.
\qed
\end{proof}

\subsection{The Static Algorithm}
\label{sec:static}

In this subsection, we assume that $Q=L=\{p_1,p_2,\ldots,p_n\}$ and we provide an $O(n)$ time algorithm for computing $\alpha(Q)$.
The algorithm incrementally processes the points from $p_1$ to $p_n$. Hence, one may either consider it as a static algorithm or a semi-dynamic algorithm for point insertions only.
The algorithm will determine whether $\alpha(Q)$ exists, and if yes, compute and store the vertices of $\alpha(Q)$ in a circular doubly linked list.

The algorithm is similar in spirit to Graham's scan for computing convex hulls.
However, unlike the convex hull case, where it is possible to compute the upper and lower hulls separately, here we need to compute $\alpha(Q)$ as a whole because updating either the upper or the lower hull may end up with updating the other hull. Our algorithm relies on the following lemma.

\begin{lemma}\label{lem:hullexist}
Suppose $p$ is a point outside the circular hull $\alpha(P)$ of a point set $P$. Then, $\{p\}\cup P$ is unit disk coverable if and only if one of the following is true.
\begin{enumerate}
\item
$p$ is in the supporting disk of an arc of $\alpha(P)$.
\item
$\alpha(P)$ has a vertex $v$ such that $\alpha(P)$ is contained in $\alpha(v,p)$. Further, this is true if and only if both $cw(v)$ and $ccw(v)$ are in $\alpha(v,p)$.
 \end{enumerate}
\end{lemma}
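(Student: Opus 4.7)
Plan:

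My plan is to prove the biconditional in both directions and separately verify the equivalent characterization of condition~2.

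For the easy direction (either condition implies coverability), each condition yields an explicit unit disk covering $\{p\}\cup P$. Under condition~1, the supporting disk $D(w)$ of the arc $w$ of $\alpha(P)$ covers $P$ by Observation~\ref{obser:basic}(2) and contains $p$ by hypothesis. Under condition~2, writing $\alpha(v,p) = D(cw(v,p)) \cap D(ccw(v,p))$, the hypothesis $\alpha(P) \subseteq \alpha(v,p)$ gives $D(cw(v,p)) \supseteq \alpha(P)$, so $D(cw(v,p))$ covers $P$ by Observation~\ref{obser:basic}(4) and has $p$ on its boundary. For the equivalent characterization in condition~2, the forward implication is immediate since $cw(v), ccw(v)$ are vertices of $\alpha(P)$; for the reverse, if both $cw(v), ccw(v)$ lie in $D(cw(v,p)) \cap D(ccw(v,p))$, then each of these two disks has $v$ on the boundary and contains both neighbors of $v$, so by Observation~\ref{obser:tangent} each is tangent to $\alpha(P)$ at $v$ and hence contains $\alpha(P)$; intersecting yields $\alpha(P) \subseteq \alpha(v,p)$.

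For the nontrivial direction, I assume $\{p\}\cup P$ is unit disk coverable and $p$ is outside $\alpha(P)$. The tangent discussion preceding Observation~\ref{obser:tangentpos} then guarantees that $p$ has an upper tangent $cw(a,p)$ and a lower tangent $ccw(b,p)$ to $\alpha(P)$ with tangent points $a$ and $b$, and that $\alpha(P\cup\{p\})$ is obtained from $\alpha(P)$ by replacing the arcs clockwise from $a$ to $b$ with these two tangents. I split on whether $a = b$. If $a = b$, set $v = a$: both $D(cw(v,p))$ and $D(ccw(v,p))$ are tangent to $\alpha(P)$ at $v$, hence each contains $\alpha(P)$, so $\alpha(P) \subseteq \alpha(v,p)$ and condition~2 holds. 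If $a \ne b$, then at least one arc $w$ of $\alpha(P)$ going clockwise from $b$ to $a$ exists and survives unchanged as an arc of $\alpha(P\cup\{p\})$. Observation~\ref{obser:basic}(3) then yields $\alpha(P\cup\{p\}) \subseteq D(w)$, and since $p$ is a vertex of $\alpha(P\cup\{p\})$, we conclude $p \in D(w)$, giving condition~1.

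The main obstacle I anticipate is not the case split itself but resisting the temptation to search for a supporting disk containing $p$ among the arcs clockwise from $a$ to $b$ on the ``tangent side''; those arcs are precisely the ones removed when $p$ is added, and their supporting disks typically do not contain $p$. The crucial realization is that every arc on the opposite side survives in $\alpha(P\cup\{p\})$, after which Observation~\ref{obser:basic}(3) forces $p$ into its supporting disk for free.
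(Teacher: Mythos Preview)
Your proof is correct. The ``if'' direction and the equivalent characterization of condition~2 are argued exactly as in the paper. For the ``only if'' direction, however, you take a genuinely different route.

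The paper argues geometrically from scratch: take any unit disk $D$ covering $P\cup\{p\}$, translate it until some $v\in P$ lies on $\partial D$ (so $v$ is a vertex of $\alpha(P)$), then rotate $D$ clockwise around $v$ until $\partial D$ meets another point $z$ of $P\cup\{p\}$. If $z\in P$ then $z=cw(v)$ and $D=D(cw(v,z))$ is the supporting disk of an arc containing $p$; if $z=p$ the paper repeats the rotation counterclockwise and either again lands on an arc disk or concludes $\alpha(P)\subseteq D(cw(v,p))\cap D(ccw(v,p))=\alpha(v,p)$.

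You instead invoke the tangent theory from Section~\ref{sec:pre}: since $\{p\}\cup P$ is coverable and $p\notin\alpha(P)$, the upper and lower tangents $cw(a,p)$, $ccw(b,p)$ exist, and you split on $a=b$ versus $a\neq b$. When $a=b$ both tangent disks witness condition~2 directly; when $a\neq b$ every arc of $\alpha(P)$ on the $b$-to-$a$ side survives in $\alpha(P\cup\{p\})$, and Observation~\ref{obser:basic}(3) forces $p$ into its supporting disk.

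Both arguments are short. The paper's rotation argument is more self-contained, as it does not rely on the (unproved in the paper, only asserted) existence of tangents under coverability. Your argument is more structural: it reuses the tangent machinery already set up in the preliminaries and, as you note, actually shows something slightly stronger in the $a\neq b$ case, namely that \emph{every} surviving arc of $\alpha(P)$ has $p$ in its supporting disk, not just one.
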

\begin{proof}
The ``if'' direction is easy. If $p$ is in the supporting disk $D$ of an arc of $\alpha(P)$, then since $D$ also covers $P$, $D$ covers $P\cup \{p\}$. If $\alpha(P)$ has a vertex $v$ such that $\alpha(P)$ is contained in $\alpha(v,p)$, then $D(cw(v,p))$ contains $\alpha(v,p)$ and thus contains $\alpha(P)$. Hence, $D(cw(v,p))$ covers $P\cup \{p\}$.
In the following, we prove the ``only if'' direction.

Let $D$ be a disk that contains $P\cup \{p\}$. Clearly, it is possible to move $D$ such that $D$ covers $P\cup \{p\}$ and $\partial D$ contains a point $v$ of $P$. By Observation~\ref{obser:basic}(1), $v$ is a vertex of $\alpha(P)$. Now we rotate $D$ around $v$ clockwise (so that $v$ is always on $\partial D$) and keep $D$ covering $P\cup \{p\}$ until $\partial D$ meets another point $z\in P\cup\{p\}$.
If $z\in P$, then $z$ must be the clockwise neighbor of $v$ on $\alpha(P)$ and now $D=D(cw(v,z))$. Since $p$ is in $D$, the first lemma statement holds. Below we assume that $z\not\in P$, i.e., $z=p$.

Since $z=p$, $D$ is $D(cw(v,p))$, and thus $D(cw(v,p))$ covers $P$. By Observation~\ref{obser:basic}(4), $D(cw(v,p))$ also contains $\alpha(P)$. Now, we rotate $D$ around $v$ counterclockwise and keep $D$ containing $P\cup \{p\}$ until $\partial D$ meets another point $z'\in P\cup\{p\}$. Depending on whether $z'\in P$, there are two cases.
If $z'\in P$, then by the same analysis as above, the first lemma statement follows. Otherwise, as above, we can obtain that $D(ccw(v,p))$ contains $\alpha(P)$. Because $\alpha(v,p)=D(cw(v,p))\cap D(ccw(v,p))$ and both $D(cw(v,p))$ and $D(ccw(v,p))$ contain $\alpha(P)$, we obtain that $\alpha(v,p)$ contains $\alpha(P)$. Therefore, the second lemma statement holds.

It remains to show that $\alpha(P)\subseteq \alpha(v,p)$ if and only if both $cw(v)$ and $ccw(v)$ are in $\alpha(v,p)$. If $\alpha(P)$ is contained in $\alpha(v,p)$, then it is obviously true that both $cw(v)$ and $ccw(v)$ are in $\alpha(v,p)$. Now assume that both $cw(v)$ and $ccw(v)$ are in $\alpha(v,p)$.
Since $\alpha(v,p)=D(cw(v,p))\cap D(ccw(v,p))$, both $cw(v)$ and $ccw(v)$ are in $D(cw(v,p))$ and also in  $D(ccw(v,p))$. By Observation~\ref{obser:tangent}, both $D(cw(v,p))$ and $D(ccw(v,p))$ are tangent to $\alpha(P)$ at $v$ and thus both disks contain $\alpha(P)$. Therefore, $\alpha(P)\subseteq D(cw(v,p))\cap D(ccw(v,p))=\alpha(v,p)$. \qed
\end{proof}

We process the vertices of $Q=\{p_1,p_2,\ldots, p_n\}$ incrementally from $p_1$ to $p_n$.
We use a circular doubly linked list $\calL$ to maintain the vertices of the current circular hull that has been computed. Each vertex in the list has a $cw$ pointer and a $ccw$ pointer to refer to its clockwise and counterclockwise neighbors on the current hull, respectively. In addition, we maintain the rightmost vertex $v^*$ of  the current hull, which is also the access we have to $\calL$.
Initially we directly compute $\alpha(q_1,q_2)$ and set up the list $\calL$, with $v^*=q_2$.
For each $i=1,\ldots,n$, let $Q_i=\{p_1,p_2,\ldots,p_{i}\}$.

Consider a general step for processing a new vertex $p_i$ with $i\geq 3$, and suppose $\calL$ now stores the circular hull of $Q_{i-1}$. With $v^*$, we can find the rightmost arc $w$ of the current hull. If $p_i$ is in $D_1(w)$, then $p_i$ is ``redundant'' by Observation~\ref{obser:red}, i.e., $p_i$ does not affect the current circular hull, so we do not need to do anything (i.e., no need to update $\calL$). Otherwise, our goal is to find the two tangents from $p_i$ to the current hull, or decide that they do not exist. Starting from $v^*$, we first run a {\em counterclockwise scanning procedure} to search the upper tangent, as follows (see Algorithm~\ref{algo:ccwscan} for the pseudocode).
Starting with $v=v^*$, we check $v$ in the following way. We first check whether both $cw(v)$ and $ccw(v)$ are in $\alpha(v,p_i)$. If yes, then we stop the procedure and return $cw(v,p_i)$ as the upper tangent. Otherwise, we check whether $cw(ccw(v),v)$ and $p_i$ form an inner turn. If yes, then we stop the procedure and return $cw(v,p_i)$ as the upper tangent. Assume that they form an outer turn. Then, if $ccw(v)\neq v^*$, then we set $v=ccw(v)$ and proceed as above; otherwise, we stop the procedure and conclude that $Q_{i}$ (and thus $Q$) is not unit disk coverable.

\begin{algorithm}[h]
	\caption{The counterclockwise scanning procedure searching the upper tangent}
	\label{algo:ccwscan}
	\SetAlgoNoLine
    $v\leftarrow v^*$\;
    \While{\em true }
    {
         \eIf{\em both $cw(v)$ and $ccw(v)$ are in $\alpha(v,p_i)$}
         {
            return $cw(v,p_i)$ as the upper tangent\;
         }
         {
            \eIf{\em $cw(ccw(v),v)$ and $p_i$ form an inner turn}
            {
                return $cw(v,p_i)$ as the upper tangent\;
            }
            {
                \eIf{\em $ccw(v)\neq v^*$}
                {
                    $v\leftarrow ccw(v)$\;
                }
                {
                    return null and conclude that $\alpha(Q_i)$ (and thus $\alpha(Q)$) does not exist\;
                }
            }
         }
     }		
\end{algorithm}

It is not difficult to see that the algorithm will eventually stop.
The following lemma proves the correctness of the algorithm.

\begin{lemma}
The counterclockwise scanning procedure will decide whether $\alpha(Q_{i})$ exists, and if yes, find the upper tangent from $p_i$ to $\alpha(Q_{i-1})$ unless $p_i$ is redundant.
\end{lemma}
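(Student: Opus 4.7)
The plan is a case analysis on whether $\alpha(Q_i)$ exists, and within the first case, on whether the algorithm triggers at the correct vertex. Termination is immediate since each vertex of $\alpha(Q_{i-1})$ is visited at most once before the loop either returns a tangent or detects that $ccw(v) = v^*$.

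Suppose first that $\alpha(Q_i)$ exists, so $p_i$ admits an upper tangent $cw(a, p_i)$ and a lower tangent $ccw(b, p_i)$. By Observation~\ref{obser:tangentpos}(3), $v^* \in \partial_{\alpha(Q_{i-1})}[a, b]$, so the counterclockwise scan from $v^*$ reaches $a$ (or starts there if $v^* = a$). I will show that (a) the scan cannot return prematurely at any vertex $v \neq a$ visited before $a$, and (b) exactly one of the two stopping conditions holds at $v = a$. For (a), the first check would force $\alpha(Q_{i-1}) \subseteq \alpha(v, p_i)$ by Lemma~\ref{lem:hullexist}(2), which collapses $a = b = v$, contradicting $v \neq a$. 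The second check would give $p_i \in D(cw(ccw(v), v))$; since this disk already covers $Q_{i-1}$, it then covers $Q_i$, so $cw(ccw(v), v)$ becomes an arc of $\alpha(Q_i)$ by Observation~\ref{obser:basic}(2), making $v$ a vertex of $\alpha(Q_i)$. For $v \in \partial_{\alpha(Q_{i-1})}(a, b)$ this contradicts Observation~\ref{obser:tangentpos}(1); the subtle subcase is $v = b = v^*$, which I would handle by the combinatorial fact that the counterclockwise neighbor of $b$ in $\alpha(Q_i)$ is $p_i$, while $p_i \notin Q_{i-1}$ yields a contradiction.

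For (b), if $a = b$ then Lemma~\ref{lem:hullexist}(2) gives the first check. Otherwise $a \neq b$, and since $ccw(a)$ lies in $\overline{\partial_{\alpha(Q_{i-1})}[a, b]}$, the arc $cw(ccw(a), a)$ is preserved in $\alpha(Q_i)$ and its supporting disk contains $\alpha(Q_i) \ni p_i$, giving the second check.

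Now suppose $\alpha(Q_i)$ does not exist. By Lemma~\ref{lem:hullexist}, for every vertex $v$ of $\alpha(Q_{i-1})$ we have $p_i \notin D(cw(ccw(v), v))$ (negation of clause~1) and not both of $cw(v), ccw(v)$ lie in $\alpha(v, p_i)$ (negation of clause~2). Both stopping conditions therefore fail at every vertex, the scan traverses all of $\alpha(Q_{i-1})$, and the algorithm correctly returns null. The hardest part will be the boundary subcase $v = b = v^*$ in (a), where Observation~\ref{obser:tangentpos}(1) does not directly apply and a separate argument from the combinatorial structure of $\alpha(Q_i)$ (uniqueness of the counterclockwise neighbor of $b$) is required.
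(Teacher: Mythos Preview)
Your proof is correct and takes a genuinely different route from the paper's. The paper argues \emph{forward}: for each of the three ways the procedure can stop, it verifies directly that the returned answer is right. In particular, when the inner-turn check fires at some $v$, the paper must show $D(cw(v,p_i))$ contains both $cw(v)$ and $ccw(v)$; this is where Observation~\ref{obser:turn} (the inner/outer turn lemma) is invoked, together with a case split on whether $v=v^*$. Your argument runs \emph{backward}: you fix the correct answer in advance (the tangent point $a$, or nonexistence) and show the scan cannot stop short of it and must trigger there. This bypasses Observation~\ref{obser:turn} entirely: you only need Observation~\ref{obser:tangentpos}(1) to rule out premature stops in $\partial_{\alpha(Q_{i-1})}(a,b)$, and the arc-preservation structure of $\alpha(Q_i)$ to force a trigger at $a$. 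Your approach is arguably cleaner and more conceptual; the paper's is more self-contained in that it never presupposes knowledge of $a$.

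One comment: your handling of the boundary subcase $v=b=v^*$ is phrased somewhat tentatively (``which I would handle by\ldots''), but the argument is sound. Spelling it out: if the inner-turn check fired at $v=b$, then $D(cw(ccw(b),b))$ would cover $Q_i$, so $cw(ccw(b),b)$ would be an arc of $\alpha(Q_i)$, forcing the counterclockwise neighbor of $b$ in $\alpha(Q_i)$ to be $ccw(b)\in Q_{i-1}$; but that neighbor is $p_i\notin Q_{i-1}$. You should state this explicitly rather than leave it as a promise.
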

\begin{proof}
First of all, if $p_i$ is redundant, then our algorithm correctly determines it. Below we assume that $p_i$ is not redundant. Suppose the procedure is checking the vertex $v$. There are three cases for the procedure to stop: $cw(v)$ and $ccw(v)$ are in $\alpha(v,p_i)$; $cw(ccw(v),v)$ and $p_i$ form an inner turn; $cw(ccw(v),v)$ and $p_i$ form an outer turn and $v^*=ccw(v)$. In the first two cases, we will show that $cw(v,p_i)$ is the upper tangent. In the third case, we will show that $Q_{i}$ is not unit disk coverable.

If $cw(v)$ and $ccw(v)$ are in $\alpha(v,p_i)$, then by Lemma~\ref{lem:hullexist}(2), $\alpha(Q_{i-1})\subseteq \alpha(v,p_i)$. Hence, $\alpha(v,p_i)=\alpha(Q_{i})$. Since $cw(v,p_i)$ is an arc of $\alpha(v,p_i)$,  $D(cw(v,p_i))$ contains $\alpha(v,p_i)$ and thus $\alpha(Q_{i-1})$.
Therefore, $cw(v,p_i)$ is the upper tangent from $p_i$ to $\alpha(Q_{i-1})$.

If $cw(ccw(v),v)$ and $p_i$ form an inner turn, to show that $cw(v,p_i)$ is tangent to $\alpha(Q_{i-1})$ at $v$, by Observation~\ref{obser:tangent} it suffices to show that $D(cw(v,p_i))$ contains both $cw(v)$ and $ccw(v)$.
Since $p_i$ is not redundant and $p_i$ is to the right of both $ccw(v)$ and $v$, $p_i$ is not in $\alpha(ccw(v),v)$.
Because $cw(ccw(v),v)$ and $p_i$ form an inner turn, by Observation~\ref{obser:turn}(1), $D(cw(v,p_i))$ contains $ccw(v)$. Next we prove $cw(v)\in D(cw(v,p_i))$. Depending on whether $v=v^*$, there are two subcases.

\begin{itemize}
\item
If $v\neq v^*$, then according to our algorithm, $cw(v,cw(v))$ and $p_i$ form an outer turn.
Because $cw(ccw(v),v)$ and $p_i$ form an inner turn, $p_i\in D(cw(ccw(v),v))$. Since $cw(ccw(v),v)$ is an arc of $\alpha(Q_{i-1})$, $D(cw(ccw(v),v))$ contains $Q_{i-1}$ and thus $cw(v)$. Hence, $D(cw(ccw(v),v))$ contains $\{v,cw(v),p_i\}$, and thus, $\{v,cw(v),p_i\}$ is unit disk coverable.

We claim that $v$ is not in the interior of $\alpha(p_i,cw(v))$. Indeed, assume to the contrary this is not true. Then, since $v$ is on the boundary of $D(cw(ccw(v),v))$, one of $p_i$ and $cw(v)$, as two vertices of $\alpha(p_i,cw(v))$ must be outside $D(cw(ccw(v),v))$. However, we have proved above that $D(cw(ccw(v),v))$ contains both $p_i$ and $cw(v)$, incurring contradiction.

Since $v$ is not in the interior of $\alpha(p_i,cw(v))$, by Observation~\ref{obser:turn}(2), $D(cw(v,p_i))$ contains $cw(v)$.

\item
If $v=v^*$, then in the same way as the above case we can show that $D(cw(ccw(v),v))$ contains $\{v,cw(v),p_i\}$, and thus, $\{v,cw(v),p_i\}$ is unit disk coverable.


We claim that $cw(v,cw(v))$ and $p_i$ form an outer turn.
Assume to the contrary that they form an inner turn. Then, $p_i\in D(cw(v,cw(v)))$. As $p_i\in D(cw(ccw(v),v))$, we obtain that $p_i\in D(cw(v,cw(v)))\cap D(cw(ccw(v),v))$. Since $cw(v)$ and $ccw(v)$ are to the left of $v$ and $p_i$ is to the right of $v$, by a similar argument as in the proof of Observation~\ref{obser:tangentpos}(3), we can show that $p_i$ is inside $\alpha(Q_{i-1})$, implying that $p_i$ is redundant, which incurs contradiction because $p_i$ is not redundant.


Further, using the same analysis as the above subcase $v\neq v^*$, we can show that $v$ is not in the interior of $\alpha(p_i,cw(v))$. Consequently, by Observation~\ref{obser:turn}(2), $cw(v)$ is in $D(cw(v,p_i))$.

\end{itemize}

It remains to discuss the third case where
$cw(ccw(v),v)$ and $p_i$ form an outer turn and $v^*=ccw(v)$. According to our algorithm, this case happens only if both of the followings are true: (1) for each vertex $v$ of $\alpha(Q_{i-1})$, $\alpha(v,p_i)$ does not contain both $cw(v)$ and $ccw(v)$;
(2) for each arc $cw(ccw(v'),v')$ of $\alpha(Q_{i-1})$, it does not form an inner turn with $p_i$ (i.e., $p_i\not\in D(cw(ccw(v'),v'))$), implying that $p_i$ is not in the supporting disk of any arc of $\alpha(Q_{i-1})$. According to Lemma~\ref{lem:hullexist}, $Q_{i}$ is not unit disk coverable.
\qed
\end{proof}

If the above procedure finds the upper tangent, then we run a symmetric {\em clockwise scanning procedure} to find the lower tangent (which guarantees to exist, for the upper tangent exists).
Next, we replace the vertices in $\calL$ clockwise strictly from the upper tangent point to the lower tangent point by $p_i$, and then reset $v^*$ to $p_i$. The runtime of the two procedures is $O(1+k)$, where $k$ is the number of vertices removed from $\calL$. After a point is removed from $\calL$, it will never appear in $\calL$ again. Hence the total time of the algorithm for processing all points $\{p_1,\ldots,p_n\}$ is $O(n)$.

\begin{theorem}\label{theo:insertions}
We can maintain the circular hull of a set $Q$ of points such that if a new point to the right of all points of $Q$ is inserted, in $O(1)$ amortized time we can decide whether $\alpha(Q)$ exists, and if yes, update $\alpha(Q)$.
\end{theorem}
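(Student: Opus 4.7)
The plan is to observe that Theorem~\ref{theo:insertions} essentially repackages the incremental algorithm developed in this subsection, so the proof reduces to a charging argument bounding the amortized cost per insertion.

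First I would describe the data structure invariants: the circular doubly linked list $\calL$ stores the vertices of $\alpha(Q)$ in clockwise order, each equipped with its $cw$ and $ccw$ pointers, and we also keep a pointer $v^*$ to the rightmost vertex. Given a new point $p$ to the right of all points of $Q$, in $O(1)$ time I access the rightmost arc $w$ of $\alpha(Q)$ via $v^*$ and test whether $p\in D_1(w)$. By Observation~\ref{obser:red}, if the test succeeds then $p$ is redundant and $\alpha(Q\cup\{p\})=\alpha(Q)$, so no update is needed.

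Second, if $p$ is not redundant, I run the counterclockwise scanning procedure (Algorithm~\ref{algo:ccwscan}) to search for the upper tangent. The correctness lemma already proved in this subsection guarantees that either the procedure discovers that $\alpha(Q\cup\{p\})$ does not exist (and we report so), or it returns the upper tangent point. In the latter case, I run the symmetric clockwise scanning procedure to obtain the lower tangent point (whose existence is implied by that of the upper tangent). I then update $\calL$ by deleting all vertices strictly between the two tangent points, splicing in $p$, and resetting $v^*\leftarrow p$, which is valid since $p$ is now the rightmost vertex.

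Third, I would perform the amortized analysis via the accounting method. Each iteration of either scanning procedure, except possibly the terminating one, corresponds to a vertex that is subsequently removed from $\calL$ during the splice; consequently, if $k$ vertices are deleted during this insertion, the combined cost of the two scans and the splice is $O(1+k)$. Assigning each point a constant number of credits at the moment it is inserted into $\calL$ — enough to pay for the constant-time checks at its eventual deletion — suffices, because each point enters $\calL$ at most once and leaves at most once. Summing over any sequence of insertions yields total work linear in the number of insertions, hence $O(1)$ amortized time per insertion. There is no real obstacle here: all geometric delicacies (correctness of the two scanning procedures, existence of tangents, redundancy characterization) were handled by the lemmas and observations above, so the proof of the theorem is simply the combination of these facts with the charging argument.
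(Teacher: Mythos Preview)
Your proposal is correct and follows essentially the same approach as the paper: the paper's argument is precisely that each insertion costs $O(1+k)$ where $k$ is the number of vertices removed from $\calL$, and since a removed vertex never re-enters $\calL$, the total cost over all insertions is linear. Your accounting-method phrasing is simply a more explicit version of the same charging argument.
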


\begin{corollary}
Given a set of points in the plane sorted by $x$-coordinates, there exists a linear time algorithm that can decide whether its circular hull exists, and if yes, compute the circular hull.
\end{corollary}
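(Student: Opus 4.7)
The plan is to invoke Theorem~\ref{theo:insertions} directly. Let $p_1,p_2,\ldots,p_n$ denote the input points sorted by $x$-coordinate. I would process them incrementally from left to right, initializing $\calL$ with $\alpha(p_1,p_2)$ (or reporting that $\alpha(\{p_1,p_2\})$ does not exist if $|p_1p_2|>2r$), and then for each $i=3,\ldots,n$ calling the insertion procedure of Theorem~\ref{theo:insertions} on the point $p_i$. Because the input is $x$-sorted, $p_i$ is strictly to the right of every point currently stored in $\calL$, so $p_i$ qualifies as a valid insertion for the theorem.

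For correctness, at each step either the insertion procedure successfully updates $\calL$ to represent $\alpha(Q_i)$ (possibly without any change if $p_i$ is redundant), or it concludes that $\alpha(Q_i)$ does not exist, in which case, by Observation~\ref{obser:basic}(5), $\alpha(Q)=\alpha(Q_n)$ does not exist either, and we halt and return ``no circular hull.'' Otherwise, after processing $p_n$, the list $\calL$ stores $\alpha(Q)$.

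For the running time, Theorem~\ref{theo:insertions} gives $O(1)$ amortized cost per insertion. The amortization relies on the fact that each point, once deleted from $\calL$ by the counterclockwise or clockwise scanning procedure, never re-enters $\calL$; thus the total work across all $n$ insertions telescopes to $O(n)$. The initialization of $\alpha(p_1,p_2)$ is $O(1)$.

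There is essentially no main obstacle here: the corollary is an immediate packaging of Theorem~\ref{theo:insertions} in the special case where the insertion order is given by the sorted $x$-coordinates. The only point worth double-checking is that the general-position assumption (no two points share an $x$-coordinate, and no point lies on a minor arc of two others) is consistent with the assumptions of Theorem~\ref{theo:insertions}, which it is, as both assumptions were made earlier in Section~\ref{sec:circularhull}.
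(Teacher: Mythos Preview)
Your proposal is correct and matches the paper's approach exactly: the corollary is stated immediately after Theorem~\ref{theo:insertions} without a separate proof, since it follows directly by running the incremental insertion procedure on the $x$-sorted points (the paper already describes the initialization with $\alpha(p_1,p_2)$ and the $O(1+k)$ amortized analysis in the paragraphs preceding the theorem).
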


\subsection{The Data Structure for Dynamically Maintaining  $\alpha(Q)$}
\label{sec:datastructure}

In this subsection, we explain our data structure for maintaining $\alpha(Q)$ under both insertions and deletions on $Q$. Recall that $Q$ is a subset of $L\cup R$ and the vertical line $\ell$ separates $L$ and $R$. Let $Q_1=Q\cap L$ and $Q_2=Q\cap R$. Our data structure will maintain $\alpha(Q_1)$ and $\alpha(Q_2)$ separately. Recall that each insertion happens to a point in $R$ and each deletion happens to a point in $L$. Our goal is determine whether $\alpha(Q)$ exists after each update.

For $Q_2$, we use a circular doubly linked list to maintain
$\alpha(Q_2)$, in the same way as in the static algorithm. As such, from
any vertex $v$ of $\alpha(Q_2)$, we can visit its two neighbors $cw(v)$
and $ccw(v)$ in constant time.
If a point is inserted, then we update $\alpha(Q_2)$ as in the static
algorithm. In addition, we also store explicitly the leftmost arc of
$\alpha(Q_2)$ whenever it is updated, which introduces only a constant
time to the previous algorithm. If $\alpha(Q_2)$ does not exist after
an insertion, then since $Q_2\subseteq Q$ and no point from $Q_2$ will
be deleted, $\alpha(Q)$ will not exist after any update in future and
thus we can halt the entire algorithm. Without loss of generality, we assume that $\alpha(R)$ exists and thus $\alpha(Q_2)$ always exists.

For $Q_1$, because points of $Q_1$ are deleted in order from left to right, initially when $Q_1=L$, we build the circular doubly linked list by processing points of $L$ {\em from right to left}, i.e., from $p_n$ to $p_1$.
Further, in order to maintain some historical information,
we have each vertex $v$ of $\alpha(Q_2)$ associated with two stacks $S_{cw}(v)$ and $S_{ccw}(v)$, which are empty initially. Specifically, initially we process the points of $L$ incrementally from $p_n$ to $p_1$. Consider a general step of the algorithm processing a point $p_i$. Suppose $cw(v_1,p_i)$ and $ccw(v_2,p_i)$ are the two tangents found by using our static algorithm. Then, in addition to the processing in the static algorithm, we push $v_1$ into $S_{ccw}(p_i)$, push $v_2$ into $S_{cw}(p_i)$, and push $p_i$ into both $S_{cw}(v_1)$ and $S_{ccw}(v_2)$.
Note that this does not change the time complexity of our previous static algorithm asymptotically.
Later when $p_i$ is deleted, we simply pop $p_i$ out of both $S_{cw}(v_1)$ and $S_{ccw}(v_2)$.
In this way, at any moment during processing the deletions of $Q_1$, for any vertex $v$ in the current circular hull $\alpha(Q_1)$,
the top of $S_{cw}(v)$ (resp., $S_{ccw}(v)$) is always the clockwise (resp., counterclockwise) neighbor of $v$ on $\alpha(Q_1)$, which can be accessed in constant time from the vertex $v$. So we can use these stacks to replace the circular doubly linked list, and we call it {\em the stack data structure}. In addition, for handling insertions, we also explicitly store, say in an array $A$, the rightmost arc of the current circular hull after processing each point of $L$ (i.e., given $i$, $A[i]$ stores the rightmost arc of the circular hull of $\{p_i,p_{i+1},\ldots,p_n\}$).
These only introduces constant time to our original static algorithm. If during processing a new point $p_i$ we find that the circular hull of $\{p_i,\ldots,p_n\}$ does not exist, then we stop the algorithm and set $start=i$. In this way, whenever we process a deletion on $L$, if the index of the deleted point is smaller than or equal to $start$, then we know that $\alpha(Q_1)$ and thus $\alpha(Q)$ does not exist and we do not need to do anything. Without loss of generality, we assume that $\alpha(L)$ exists and thus $\alpha(Q_1)$ always exists (so the variable $start$ is not needed any more).

The above describes our data structure for maintaining $\alpha(Q_1)$ and $\alpha(Q_2)$.
We also need to maintain other information. To explain them,
we first show a property, as follows.

Although $Q_1$ is to the left of $\ell$, $\alpha(Q_1)$ may cover some region of the plane to the right of $\ell$, denoted by $\alpha'(Q_1)$, and if $w$ is the rightmost arc of $\alpha(Q_1)$, then $\alpha'(Q_1)$ is exactly the portion of $D_1(w)$ to the right of $\ell$ due to the convexity of $\alpha(Q_1)$~\cite{ref:HershbergerFi91}. Symmetrically, we define $\alpha'(Q_2)$ as the region of $\alpha(Q_2)$ to the left of $\ell$. The following lemma shows that as points are deleted from $Q_1$, $\alpha'(Q_1)$ becomes monotonically smaller, and as points are inserted into $Q_2$, $\alpha'(Q_2)$ becomes monotonically larger.

\begin{figure}[t]
\begin{minipage}[t]{\textwidth}
\begin{center}
\includegraphics[height=2.0in]{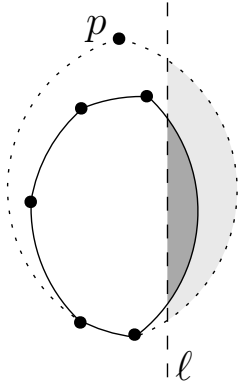}
\caption{\footnotesize Illustrating Lemma~\ref{lem:extremearcs}, where $Q_1=Q_1'\cup \{p\}$. The light (resp., dark) gray region is $\alpha'(Q_1)$ (resp., $\alpha'(Q_1')$).}
\label{fig:contain}
\end{center}
\end{minipage}
\vspace{-0.15in}
\end{figure}

\begin{lemma}\label{lem:extremearcs}
If $Q'_1\subseteq Q_1$, then $\alpha'(Q_1')\subseteq \alpha'(Q_1)$; e.g., see Fig.~\ref{fig:contain}. Similarly, if $Q'_2\subseteq Q_2$, then $\alpha'(Q_2')\subseteq \alpha'(Q_2)$.
\end{lemma}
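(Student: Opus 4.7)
The plan is to derive the lemma directly from Observation~\ref{obser:basic}(5), which gives the monotonicity $\alpha(Q') \subseteq \alpha(Q)$ whenever $Q' \subseteq Q$ and the hulls exist. First I would unfold the definitions: $\alpha'(Q_1)$ is defined as the portion of $\alpha(Q_1)$ lying in the half-plane $H^+$ strictly to the right of $\ell$, i.e.\ $\alpha'(Q_1) = \alpha(Q_1) \cap H^+$, and likewise $\alpha'(Q_1') = \alpha(Q_1') \cap H^+$. (The alternative characterization as the portion of $D_1(w)$ to the right of $\ell$, for $w$ the rightmost arc of $\alpha(Q_1)$, is a consequence of the convexity of $\alpha(Q_1)$ and is not needed for the proof.)

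Next, since $Q_1' \subseteq Q_1 \subseteq L$ and $\alpha(L)$ is assumed to exist, both $\alpha(Q_1)$ and $\alpha(Q_1')$ exist. Observation~\ref{obser:basic}(5) then gives $\alpha(Q_1') \subseteq \alpha(Q_1)$, and intersecting both sides with $H^+$ preserves inclusion, yielding $\alpha'(Q_1') \subseteq \alpha'(Q_1)$. The statement for $Q_2$ is completely symmetric: let $H^-$ be the half-plane strictly to the left of $\ell$, so $\alpha'(Q_2) = \alpha(Q_2) \cap H^-$ and $\alpha'(Q_2') = \alpha(Q_2') \cap H^-$, and Observation~\ref{obser:basic}(5) applied to $Q_2' \subseteq Q_2$ yields the desired inclusion.

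There is no real obstacle here; the lemma is an immediate structural consequence of the fact that circular hulls are monotone under subset inclusion, combined with the trivial fact that intersection with a fixed half-plane is monotone. The only small subtlety worth mentioning explicitly in the write-up is that existence of $\alpha(Q_1')$ (respectively $\alpha(Q_2')$) is automatic once $\alpha(Q_1)$ (respectively $\alpha(Q_2)$) exists, since any unit disk covering the larger set also covers the smaller one, so Observation~\ref{obser:basic}(5) truly applies in our setting.
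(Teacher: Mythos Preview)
Your argument is correct, and it is in fact cleaner than the paper's own proof. You use Observation~\ref{obser:basic}(5) directly: $\alpha(Q_1')\subseteq\alpha(Q_1)$, then intersect with the half-plane $H^+$. The paper instead argues via the explicit arc characterization: it takes the rightmost arcs $w$ of $\alpha(Q_1)$ and $w'$ of $\alpha(Q_1')$, handles the degenerate cases $w=null$ or $w'=null$ separately, and in the main case observes that $D(w)$ covers $Q_1'$ (hence, by Observation~\ref{obser:basic}(4), contains $\alpha(Q_1')$ and in particular the arc $w'$), from which it deduces that the region of $D_1(w')$ to the right of $\ell$ sits inside the region of $D(w)$ to the right of $\ell$. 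Your route bypasses this case analysis entirely; the paper's route has the minor advantage of making the structure of $\alpha'(Q_1)$ (as the right-of-$\ell$ portion of a single lune $D_1(w)$) visible, which is used elsewhere in the algorithm, but for the lemma itself your argument is the more economical one.
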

\begin{proof}
We only prove the case for $Q_1$, and the other case for $Q_2$ can be treated likewise.
Indeed, let $w$ and $w'$ be the rightmost arcs of $Q_1$ and $Q_1'$, respectively. If $w=null$, then $w'$ must be $null$ due to $Q_1'\subseteq Q_1$, and thus we have $\alpha'(Q_1')=\alpha'(Q_1)=\emptyset$. Assume that $w\neq null$. If $w'=null$, then since $\alpha'(Q_1')=\emptyset$ and $\alpha'(Q_1)\neq \emptyset$, $\alpha'(Q_1')\subseteq \alpha'(Q_1)$ holds. Assume that $w'\neq null$ (e.g., see Fig.~\ref{fig:contain}).
Since $w$ is an arc of $\alpha(Q_1)$, $D(w)$ contains $Q_1$ and thus $Q_1'$.
By Observation~\ref{obser:basic}(4), $D(w)$ contains $\alpha(Q_1')$, and thus, $D(w)$ contains the arc $w'$.
Note that $\alpha'(Q_1')$ is bounded from the left by $\ell$ and bounded from the right by the portion of $w'$ to the right of $\ell$. Since $\alpha'(Q_1)$ is the region of $D(w)$ to the right of $\ell$ and $D(w)$ contains $w'$, it must hold that $\alpha'(Q_1')\subseteq \alpha'(Q_1)$.\qed
\end{proof}

In addition to the data structures for $\alpha(Q_1)$ and $\alpha(Q_2)$ described above, our dynamic algorithm also maintains the following information. Recall that based on our assumption both $\alpha(Q_1)$ and $\alpha(Q_2)$ always exist.

\begin{enumerate}
\item
If $Q_2$ is contained in $\alpha(Q_1)$, i.e., the {$Q_1$-dominating case}, then our algorithm will detect it, and in this case $\alpha(Q)=\alpha(Q_1)$ and $\alpha(Q)$ exists.
\item
If $Q_1$ is contained in $\alpha(Q_2)$, i.e., the {$Q_2$-dominating case}, then our algorithm will detect it, and in this case $\alpha(Q)=\alpha(Q_2)$ and $\alpha(Q)$ exists. Further, because in future deletions will only happen to $Q_1$ and insertions will only happen to $Q_2$,
Lemma~\ref{lem:extremearcs} implies that $\alpha(Q)=\alpha(Q_2)$ always holds. Therefore, in future we can ignore all deletions and only handle insertions, which can be done by simply applying the static algorithm on $Q_2$.
\item
If neither of the above cases happens, then our algorithm will detect whether $\alpha(Q)$ exists, and if yes, the two common tangents of $\alpha(Q_1)$ and $\alpha(Q_2)$ will be explicitly maintained.
\end{enumerate}

\subsection{Initialization}
\label{sec:initial}

Initially, $Q=Q_1=L$, so we build the data structure for $\alpha(Q_1)$
as discussed before. This takes $O(n)$ time. Since there are
$2n$ update operations, the amortized cost is $O(1)$.

One annoying issue is to check whether $Q_1$- or $Q_2$-dominating case
will happen after each update. We show how to resolve the issue. We discuss the $Q_1$-dominating case first.

Recall $R=\{q_1,q_2,\ldots,q_n\}$ is sorted from left to right.
When $q_1$ is inserted into $Q$ (i.e., this is the first insertion), it is quite trivial to determine whether the $Q_1$-dominating case happens, which can be done in constant time by checking whether $q_1$ is contained in the supporting circle of the rightmost arc of $\alpha(Q_1)$ (which is maintained after each deletion). However, the problem becomes challenging after more points are inserted. We use the following strategy to resolve the problem ``once for all''.

An easy observation is that once the $Q_1$-dominating case does not
happen for the first time after an update (which may be either an insertion or a
deletion), in light of Lemma~\ref{lem:extremearcs}, it will not never
happen in future, because $Q_1$ will become smaller while $Q_2$ will
become larger. Also, before that particular update,
$\alpha(Q)=\alpha(Q_1)$ holds and thus $\alpha(Q)$ exists.
Lemma~\ref{lem:q1dominating} gives an $O(n)$ time algorithm to find
that particular update. Note that this procedure is only performed once in the entire algorithm.

\begin{lemma}\label{lem:q1dominating}
The first update after which the $Q_1$-dominating case does not happen can be determined in $O(n)$ time.
\end{lemma}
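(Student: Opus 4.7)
The plan is to precompute, for each $q_j \in R$, a threshold $i^*(j) := \max\{i : q_j \in D(w(i))\}$, where $w(i)$ is the rightmost arc of $\alpha(\{p_i,\ldots,p_n\})$, and then to locate the first failure by a single pass over the update sequence. Each $w(i)$ is already available from the array $A[i]$ built during initialization. Because Lemma~\ref{lem:extremearcs} guarantees that the region $D(w(i)) \cap \{x \ge \ell\}$ shrinks monotonically as $i$ grows, and every $q_j$ lies strictly to the right of $\ell$, the predicate ``$q_j \in D(w(i))$'' is monotone in $i$, so $i^*(j)$ is well defined. The $Q_1$-dominating case then holds at time $t$ iff $i_t \le \min_{j \le k_t} i^*(j)$, where $i_t$ is the current leftmost index of $Q_1$ and $k_t$ is the number of insertions so far. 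Once the thresholds are available, the first failure time is located by one sweep through the at most $2n$ updates: maintain $i_t$, $k_t$, and the running minimum $M_t := \min_{j \le k_t} i^*(j)$ in $O(1)$ per update, and return the first $t$ with $i_t > M_t$; this part is clearly $O(n)$.

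The main work is therefore to compute all thresholds $i^*(j)$ in $O(n)$ total time. My plan is to sweep $i$ from $n$ down to $1$, piggybacked on the incremental construction of $\alpha(\{p_i,\ldots,p_n\})$ already carried out during initialization. At each step either $w(i)$ stays the same or it changes; when it changes, the newly covered strip $D(w(i)) \setminus D(w(i+1))$ (restricted to $\{x \ge \ell\}$) is bounded by a constant number of circular arcs, and the $q_j$'s newly covered there are exactly those whose threshold should be set to $i$. Keeping the as-yet-unassigned $q_j$'s in left-to-right sorted order and walking along the boundary of the new strip, the newly covered $q_j$'s are enumerated and assigned; since each $q_j$ is assigned at most once, the total enumeration work is $O(n)$.

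The main obstacle will be showing that the per-step enumeration really takes time proportional to the number of newly assigned thresholds plus $O(1)$ overhead. Two structural facts should drive this: (i) since $p_i$ is the leftmost of the newly added points, the rightmost arc $w(i)$ usually does not change at all as $i$ decreases, and whenever it does change, the update is local and can be read off from the same stack data structure developed in Section~\ref{sec:insert}; and (ii) because the $q_j$'s are sorted by $x$-coordinate and the strip $D(w(i)) \setminus D(w(i+1))$ is a lune bounded by two circular arcs, a linear scan cleanly separates newly covered points from still-uncovered ones. Once this amortization is justified, the preprocessing runs in $O(n)$, and combined with the linear update scan the lemma follows.
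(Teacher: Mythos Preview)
Your high-level decomposition is sound: the $Q_1$-dominating case holds at time $t$ iff the current leftmost index $i_t$ of $Q_1$ satisfies $i_t\le\min_{j\le k_t} i^*(j)$, and once the running minimum $M_t$ is available a single linear sweep finishes the job. The monotonicity you invoke from Lemma~\ref{lem:extremearcs} is exactly right.

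The gap is in your Step~1, the precomputation of \emph{all} individual thresholds $i^*(j)$. Your enumeration claim---``because the $q_j$'s are sorted by $x$-coordinate and the strip $D(w(i))\setminus D(w(i+1))$ is a lune bounded by two circular arcs, a linear scan cleanly separates newly covered points from still-uncovered ones''---does not hold. The lune $\alpha'[i,n]\setminus\alpha'[i+1,n]$ is not characterized by an $x$-threshold: two unassigned points with nearly equal $x$-coordinates but different $y$-coordinates can straddle the lune boundary, so the newly covered points need not form any contiguous block in $x$-order. Walking the boundary does not help either, since that is a continuous curve and you still have to decide, per unassigned $q_j$, on which side it lies; there is no amortization that charges this to a one-time assignment. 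In short, identifying the points inside a lune from an $x$-sorted list is a circular range query, not a linear scan. (A minor side issue: the correct containment test is $q_j\in\alpha'[i,n]$, i.e.\ $q_j\in D_1(w(i))$, not the full disk $D(w(i))$; the monotonicity from Lemma~\ref{lem:extremearcs} is stated for $\alpha'$.)

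The paper avoids this difficulty by realizing that Step~1 is unnecessary: you never need the individual thresholds, only their running minimum. Maintain a single index $i^*$ (initialized to $n$), and process the updates in order. On a deletion of $p_i$, if $i\ge i^*$ report this update as the answer. On an insertion of $q_j$, test whether $q_j\in\alpha'[i^*,n]$ (constant time via the stored rightmost arc $A[i^*]$); if not, decrement $i^*$ until $q_j\in\alpha'[i^*,n]$ or $i^*=0$, and then compare $i^*$ against the current leftmost index of $Q_1$. Since $i^*$ only decreases and starts at $n$, the total number of decrements across all insertions is at most $n$, giving the desired $O(n)$ bound with no precomputation of thresholds at all. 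Your Step~2 already contains the germ of this idea; the fix is to fold the threshold computation into the sweep rather than doing it up front.
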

\begin{proof}
For each $i=1,2,\ldots,n$, we use $\alpha'[i,n]$ to refer to $\alpha'(\{p_i,p_{i+1},\ldots,p_n\})$. As discussed before, each $\alpha'[i,n]$ is the part of a unit disk on the right side of the line $\ell$. By Lemma~\ref{lem:extremearcs}, it holds that $\alpha'[i,n]\subseteq \alpha'[i-1,n]$ for all $i=2,3,\ldots,n$.
Recall that the rightmost arc is maintained by our algorithm after each deletion of $L$. Thus, given $i$, $\alpha'[i,n]$ can be obtained in $O(1)$ time.

%
%

From the outset, we process insertions and deletions as follows. During the algorithm, we maintain a variable $i^*$, which is the first deletion after which the $Q_1$-dominating case will not happen for the points in the current set $Q_2$. Initially before any deletion or insertion, $Q_1=L$ and $Q_2=\emptyset$, and thus we set $i^*=n$. For each deletion of a point $p_i$, if $i<i^*$, then we proceed on the next update; otherwise we return the deletion of $p_i$ as the answer to the problem. Consider an insertion of a point $q_j$. We first check whether $q_j$ is in $\alpha'[i^*,n]$. If yes, we proceed on the next update. Otherwise, we keep decrementing $i^*$ until $q_j\in \alpha'[i^*,n]$ or $i^*=0$.
Then we check whether $i^*<i$, where $i$ is the index of the leftmost point of the current set $Q_1$ (i.e.,
$Q_1=\{p_i,\ldots,p_n\}$). If $i^*< i$, then we return the insertion of $q_j$ as the answer to the problem. Otherwise, we proceed on the next update.

The correctness of the algorithm is based on Lemma~\ref{lem:extremearcs}.
It is not difficult to see that the algorithm runs in $O(n)$ time.
\qed
\end{proof}

Lemma~\ref{lem:q1dominating} finds the update after which the $Q_1$-dominating case does not happen for the first time. Regardless of whether it is an insertion or a deletion, let $Q_1$ and $Q_2$ be the two subsets right after the update. So we know that both $\alpha(Q_1)$ and $\alpha(Q_2)$ exist, and the $Q_1$-dominating case does not happen.

Next, we discuss how to detect whether the $Q_2$-dominating case
happens after each update in future (starting from the update found in Lemma~\ref{lem:q1dominating}), by
a {\em $Q_2$-dominating case detection procedure}, as follows. As discussed before, once we find the $Q_2$-dominating case happens for the first time after an update, we can simply use our static algorithm to handle the deletions only in future.
Starting from $j^*=n$, we check whether $p_{j^*}$ is in the supporting
disk $D$ of the leftmost arc of the current $\alpha(Q_2)$.
Recall that the leftmost arc of $\alpha(Q_2)$ is explicitly stored (and if it is $null$, then its supporting disk is $\emptyset$). If yes,
we decrement $j^*$ until $j^*=0$ or $p_{j^*}\not\in D$ (thus all points of $L$ from $p_{j^*+1}$ to $p_n$ are in $D$).
Now consider an insertion to $Q_2$. If the leftmost
arc of $\alpha(Q_2)$ gets updated, then by
Lemma~\ref{lem:extremearcs}, all points of $L$ from $p_{j^*+1}$ to $p_n$ are
still contained in the supporting disk $D$ of the new leftmost arc. We
further check whether $p_{j^*}$ is in $D$.
If yes, we decrement $j^*$ until $j^*=0$ or $p_{j^*}\not\in D$. Let $i^*$
be the index of the leftmost point of the current set $Q_1$. Whenever $j^*$
decrements as above, if $i^*>j^*$, then we know the $Q_2$-dominating case
happens and then we only need to process the insertions using the static algorithm in future.
Similarly, when $p_{i^*}$ is deleted, we increment $i^*$ by one, and if
$i^*>j^*$, and we again run into the $Q_2$-dominating case.

In the following discussion on processing updates,
before actually processing each update, we run the above procedure to
check whether the $Q_2$-dominating case happens. If yes, then the rest
of the algorithm is trivial. Otherwise, we will perform the
corresponding algorithm (to be discussed below) for processing the update. Hence, the $Q_2$-dominating case detection procedure is actually part of the update processing algorithm. In the following discussion whenever
we process an insertion or a deletion, we assume that the
$Q_2$-dominating case will not happen after the operation.
It is easy to see that the procedure takes $O(n)$ time in the entire algorithm for processing all $2n$ updates.

According to the above discussion, we start from the update found by Lemma~\ref{lem:q1dominating}, and neither dominating case will happen. This implies that the common tangents of $\alpha(Q_1)$ and $\alpha(Q_2)$ exist if and only if $\alpha(Q)$ exists. Next, we present an $O(n)$ time procedure to decide whether $\alpha(Q)$ exists, and if yes, find the two common tangents. Note that this procedure is performed only once, e.g., after the update of Lemma~\ref{lem:q1dominating}, which does not affect the $O(1)$ amortized time performance per update.

Because we do not know whether $\alpha(Q)$ exists, we apply our static algorithm processing the points of $Q$ from right to left. If during processing a point we determine the current circular hull does not exist, then
we stop the algorithm and let $p$ refer to the point; otherwise let $p=null$.
If $p=null$, then $\alpha(Q)$ exists and we compute the common tangents of $\alpha(Q_1)$ and $\alpha(Q_2)$ by an algorithm given below.
Assume that $p\neq null$. Since $\alpha(Q_2)$ exists, $p$ must be from $Q_1$. Observe that before $p$ is deleted, $\alpha(Q)$ cannot exist. Suppose we consider the next update. If it is a deletion of a point to the left of $p$, then we do nothing but we know $\alpha(Q)$ does not exist. If it is an insertion of a point $q_j$, then we know that $\alpha(Q)$ does not exist, but instead of immediately inserting $q_j$ to our data structure for $Q_2$,
we hold $q_j$ in a first-in-first-out queue $\calQ$, which is $\emptyset$ initially.
If it is the deletion of $p$, then we know that $\alpha(Q)$ exist, where $Q$ does not include the points held in $\calQ$. In this case (and also the case $p=null$),
we find the two common tangents of $\alpha(Q_1)$ and $\alpha(Q_2)$, as follows.

The algorithm is similar to that for finding common tangents of two convex hulls.
Hershberger and Suri gave a linear time algorithm for that~\cite{ref:HershbergerFi91} (see Lemma 4.12). To make the paper self-contained, we sketch a slightly different algorithm.
We first find the upper common tangent as follows. Starting from the leftmost vertex, we consider the vertices of $\alpha(Q_2)$ in the clockwise order. For each vertex, we find its upper tangent to $\alpha(Q_1)$ by using the counterclockwise scanning procedure in our static algorithm. Once we find the upper tangent, we check whether it is also tangent to $\alpha(Q_2)$. If yes, we have found the upper common tangent. Otherwise, we consider the next vertex of $\alpha(Q_2)$, but start the counterclockwise scanning procedure from the current tangent point on $\alpha(Q_1)$. As the upper common tangent exists, the algorithm will eventually find it. We find the lower common tangent in a similar way using the clockwise scanning procedure of our static algorithm. The time is linear in the total number of vertices of $\alpha(Q_1)$ and $\alpha(Q_2)$.

%

After the common tangents are found, if $\calQ\neq \emptyset$ (which only happens if $p\neq null$), then we need to process the insertions on the points in $\calQ$ in order to know whether $\alpha(Q)$ exists after the deletion of $p$. For this, we will apply on these points the insertion algorithm to be given below.

The above describes our initialization procedure, which takes $O(n)$
time. In the following, we present our algorithm for handling
future insertions (including those in $\calQ$)
and deletions. Our algorithm maintains an invariant that is
stated in the following observation.

\begin{observation}\label{obser:invariant}
Suppose the algorithm is about to process an update.
\begin{enumerate}
  \item Before the update, the $Q_1$-dominating case does not happen,
  \item Before the update, the two common tangents of $\alpha(Q_1)$ and $\alpha(Q_2)$ exist and are explicitly computed.
  \item After the update, the $Q_2$-dominating case does not happen.
\end{enumerate}
\end{observation}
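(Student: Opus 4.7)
The plan is to establish this invariant by induction on the number of updates the algorithm processes after the initialization phase, using Lemma~\ref{lem:extremearcs} as the main structural lever and treating the three conditions separately.

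For the base case, I would appeal to the initialization of Section~\ref{sec:initial}. Lemma~\ref{lem:q1dominating} pinpoints the first update $u_0$ after which the $Q_1$-dominating case fails, so condition (1) holds just after $u_0$ by construction. Immediately after $u_0$ the initialization runs the static hull algorithm on the current $Q$, replays any insertions queued in $\calQ$, and then explicitly computes the two common tangents of $\alpha(Q_1)$ and $\alpha(Q_2)$; this establishes condition (2) before the first update is dispatched to the insertion/deletion subroutines. Condition (3) is enforced by the $Q_2$-dominating case detection procedure, which is run as the prefix of processing every future update: if the update would put the algorithm into the $Q_2$-dominating case, the detection procedure intercepts it and routes the algorithm to the static fallback on $Q_2$, so the main update subroutines are invoked only when condition (3) is already guaranteed to hold after the update.

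For the inductive step, assume the three conditions hold immediately before the $k$-th update. Condition (1) is preserved automatically: a deletion shrinks $\alpha(Q_1)$ while leaving $Q_2$ fixed, and by Lemma~\ref{lem:extremearcs} $\alpha'(Q_1)$ can only get smaller, while an insertion leaves $\alpha(Q_1)$ unchanged and only enlarges $Q_2$; in either case any point of $Q_2$ that previously witnessed failure of $Q_1$-dominating continues to witness it. Condition (2) is the obligation of the insertion and deletion subroutines to be described in Sections~\ref{sec:insert} and~\ref{sec:delete}: given the explicitly stored old tangent points as starting handles, they must locally walk around $\alpha(Q_1)$ and $\alpha(Q_2)$ using Observations~\ref{obser:tangent}, \ref{obser:turn}, and~\ref{obser:tangentpos} to identify the new tangent points (or conclude that $\alpha(Q)$ has ceased to exist). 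Condition (3) at the start of the next update is re-established exactly as in the base case, by invoking the detection procedure before any action is taken.

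The main obstacle is condition (2) in the inductive step: it does not follow from Lemma~\ref{lem:extremearcs} alone but is entangled with the detailed behavior of the not-yet-described update subroutines, so the proof of the invariant is really a corollary of their correctness. The argument will proceed by a case analysis on whether the old tangent points survive the update, and the amortized $O(1)$ work will come from a potential-style argument analogous to Theorem~\ref{theo:insertions}, charging scanning steps to vertices that are permanently removed from a hull or stack. A secondary subtlety appears in the initialization branch where $p\neq null$: I must check that deferring insertions into $\calQ$ while $\alpha(Q)$ is already known not to exist, and then replaying them after $p$ is finally deleted, produces exactly the same data-structure state as processing them in their original order would, which follows from the fact that both $\alpha(Q)$ and its pair of common tangents depend only on the current set $Q$ and not on the insertion history.
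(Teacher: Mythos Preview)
Your proposal is correct and takes essentially the same approach as the paper. The paper's justification (in the paragraph immediately following the observation) is more terse---it simply attributes (1) to Lemma~\ref{lem:q1dominating} together with the earlier remark that Lemma~\ref{lem:extremearcs} makes the failure of $Q_1$-domination permanent, attributes (3) to the $Q_2$-dominating detection procedure, and explicitly defers (2) to the correctness of the deletion and insertion subroutines---whereas you recast the same three points as an explicit induction; the content is identical.
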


The first invariant is established due to that we always process updates after the update computed in Lemma~\ref{lem:q1dominating}.
The third invariant is established by our $Q_2$-dominating case
detection procedure. More precisely, once the procedure detects that
the $Q_2$-dominating case happens after an update, then we will apply
our static algorithm on $\alpha(Q_2)$ with insertions only. The second
invariant has been established above for the moment, and we will show later
that it will be re-established after each future update is processed. We
are able to do so because our insertion processing algorithm may also
involve performing point deletions. For this reason, in the
following we discuss the deletion processing algorithm first.


\subsection{Deletions}
\label{sec:delete}

Suppose a point $p_i$ is deleted from $Q_1$. Let $Q_1'=Q_1\setminus\{p_i\}$ and let $Q_1$ still be the original set before the deletion. Let $Q=Q_1\cup Q_2$ and $Q'=Q_1'\cup Q_2$. Since $\alpha(Q)$ exists (due to Observation~\ref{obser:invariant}(2)), $\alpha(Q')$ exists. Thus, our task is to update the common tangents if they are changed. We show that we can do so in $O(1)$ amortized time.
Let $cw(a_1,a_2)$ and $cw(b_1,b_2)$ denote
the upper and lower common tangents of $\alpha(Q_1)$ and
$\alpha(Q_2)$, respectively, which have been computed by Observation~\ref{obser:invariant}(2).

First of all, if $p_i$ is not the leftmost vertex of $\alpha(Q_1)$ (which has been explicitly stored when we build the data structure for $Q_1=L$ initially), then $p_i$ is in the interior of $\alpha(Q_1)$ and thus
nothing needs to be done (i.e., the common tangents do not change).
Otherwise, let $p=cw(p_i)$, which can be accessed in $O(1)$ time using our stack data structure for $Q_1$.
According to our stack data structure, $p_i$ is at the top of the stack $S_{ccw}(p)$.
We pop $p_i$ out of $S_{ccw}(p)$. We also pop $p_i$ out of $S_{cw}(p')$, where $p'=ccw(p_i)$.
If $p_i\not\in \{a_1,b_1\}$, then the common tangents do not change and thus we are done with the deletion.
Otherwise, we assume that $p_i=a_1$ (the other case can be treated likewise).
Depending on whether $a_1=b_1$, there are two cases.

If $b_1\neq a_1$, then after $a_1$ is deleted, $b_1$ is still a vertex of $\alpha(Q_1')$ and thus $ccw(b_1,b_2)$ is still the lower common tangent. To find the new upper common tangent, we move $p$ counterclockwise on $\alpha(Q_1')$ and simultaneously move $a_2$ counterclockwise on $\alpha(Q_2)$.
This procedure is similar in spirit to finding common
tangent for two convex hulls separated by a vertical line, and we sketch it below (e.g., see Fig.~\ref{fig:tangentupdate}).

We first check whether $cw(p,a_2)$ is tangent to $\alpha(Q'_1)$ at $p$. Recall that by Observation~\ref{obser:tangent} this is can be done by checking whether $D(cw(p,a_2))$ contains $ccw(p)$ and $cw(p)$ (which can be accessed from $p$ in constant time using our stack data structure). If not, then we move $p$ counterclockwise on $\alpha(Q_1')$ until $cw(p,a_2)$ is tangent to $\alpha(Q'_1)$ at $p$. Then, we check whether $cw(p,a_2)$ is tangent to $\alpha(Q_2)$ at $a_2$. If not, then we move $a_2$ counterclockwise on $\alpha(Q_2)$ until $cw(p,a_2)$ is tangent to $\alpha(Q_2)$ at $a_2$. If the new $cw(p,a_2)$ is not tangent to $\alpha(Q'_1)$ at $p$, then we move $p$ counterclockwise again. We repeat the algorithm until $cw(p,a_2)$ is both tangent to $\alpha(Q'_1)$ at $p$ and tangent to $\alpha(Q_2)$ at $a_2$. As the upper common tangent exists, the procedure will eventually find it.

\begin{figure}[t]
\begin{minipage}[t]{\textwidth}
\begin{center}
\includegraphics[height=2.0in]{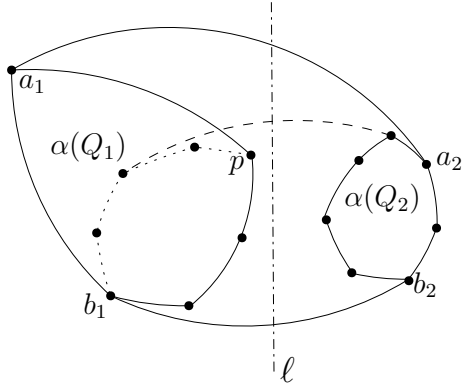}
\caption{\footnotesize Illustrating the new upper common tangent (the dashed one) after $a_1$ is deleted. The dotted curves are arcs on $\alpha(Q_1')$ but not on $\alpha(Q)$. To find the new upper common tangent, one can simultaneously rotate $p$ counterclockwise on $\alpha(Q_1')$ and rotate $a_2$ counterclockwise on $\alpha(Q_2)$.}
\label{fig:tangentupdate}
\end{center}
\end{minipage}
\vspace{-0.15in}
\end{figure}

We then consider the case where $a_1= b_1$. In this case, the lower common tangent is also changed and we need to compute it as well.
As the $Q_2$-dominating case does not happen, both upper and lower common tangents exist.
Thus, we can use the same algorithm as above to find the upper common tangent and use a symmetric algorithm to find the lower common tangent.

In either case above, we call the procedure for finding the upper common tangent the {\em deletion-type upper common tangent searching procedure}, which takes $O(1+k_1+k_2)$ time, where $k_1$ is the number of vertices
of $\alpha(Q'_1)$ strictly counterclockwise from the original $p$ to its new position when the algorithm finishes and $k_2$ is the number of vertices of $\alpha(Q_2)$ strictly counterclockwise from the original $a_2$ to its new position after the algorithm finishes (we say that these vertices are {\em involved} in the procedure). If the lower common tangent is also updated, we call it the {\em deletion-type lower common tangent searching procedure}. Lemma~\ref{lem:cost05} shows that each point can involve in at most one such procedure in the entire algorithm, and thus the amortized cost of the two procedures is $O(1)$.

\begin{lemma}\label{lem:cost05}
Each point of $L\cup R$ can involve in at most one deletion-type upper
tangent searching procedure and at most one deletion-type lower
tangent searching procedure in the entire algorithm (for processing all $2n$ updates).
\end{lemma}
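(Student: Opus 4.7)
The plan is to establish a monotonicity invariant for the common tangent points: throughout the algorithm, $a_1$ moves only counterclockwise on the evolving hull $\alpha(Q_1)$, and $a_2$ moves only counterclockwise on $\alpha(Q_2)$ (symmetrically, $b_1$ and $b_2$ move only clockwise). The lemma then follows by a sweep-style charging argument, since the involved vertices of each deletion-type upper tangent procedure are exactly those vertices newly crossed by $a_1$ on the $\alpha(Q_1)$ side and by $a_2$ on the $\alpha(Q_2)$ side.

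First I would verify a structural stability property for vertex lifetimes. By Observation~\ref{obser:basic}(1), a unit disk witnessing that $v$ is a vertex of $\alpha(Q_1)$ continues to witness $v$ after any point is deleted from $Q_1$; so once $v \in Q_1$ becomes a vertex of $\alpha(Q_1)$, it remains a vertex until $v$ itself is deleted. Symmetrically, if $v \in Q_2$ is not a vertex of $\alpha(Q_2)$, it cannot become one after further insertions, because any witness disk for the larger set would also witness $v$ in the smaller set. Hence each point of $L \cup R$ has at most one contiguous ``vertex period'' on its respective hull.

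Next I would verify the monotonicity of the tangent points. Within a single deletion-type upper tangent procedure, $a_1$ walks counterclockwise along $\alpha(Q_1')$ from $cw(p_i)$ and $a_2$ walks counterclockwise along $\alpha(Q_2)$ from its current position, with the involved vertices being exactly those strictly passed by these walks. For events that do not trigger the procedure: a deletion with $p_i \neq a_1$ leaves the common tangent unchanged; an insertion into $Q_2$, whose processing is described in Section~\ref{sec:insert}, modifies $\alpha(Q_2)$ only on its right side (since the new point lies to the right of all existing points of $Q_2$) and therefore can only advance $a_1$ and $a_2$ counterclockwise along their respective hulls. Combined with the vertex-period invariant, each vertex can be crossed by $a_1$ (respectively $a_2$) at most once during its vertex period, so each point of $L \cup R$ is involved in at most one deletion-type upper tangent procedure. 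A fully symmetric argument using the clockwise monotonicity of $b_1$ and $b_2$ bounds involvement in deletion-type lower tangent procedures.

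The main obstacle is rigorously verifying the counterclockwise monotonicity of $a_1$ and $a_2$ across insertions into $Q_2$: this requires a careful case analysis of the insertion algorithm in Section~\ref{sec:insert}, ruling out any configuration in which inserting a new rightmost point induces a structural change near $a_2$ that slides it clockwise rather than counterclockwise. The vertex-period invariant handles the issue of new vertices appearing on $\alpha(Q_1)$ (from hidden interior points) or disappearing from $\alpha(Q_2)$ (becoming redundant under insertion), ensuring that ``crossing'' is well-defined for each vertex over its single lifetime.
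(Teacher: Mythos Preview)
Your plan differs from the paper's, and the step you flag as ``the main obstacle'' is genuinely unresolved and likely not resolvable in the form you propose.

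For the $Q_1$ side your vertex-lifetime idea matches the paper: the paper observes that any vertex of $\alpha(Q_1')$ involved in the procedure was \emph{not} a vertex of $\alpha(Q_1)$ before the deletion (it was newly revealed), and once revealed it stays a vertex until it is itself deleted; hence it can be newly revealed---and therefore involved---only once.

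For the $Q_2$ side the paper does \emph{not} attempt to track the position of $a_2$ across intervening insertions at all. Instead it argues directly by tangency and containment: if $q$ is involved now, then every unit disk tangent to $\alpha(Q_2)$ at $q$ fails to contain $Q_1$ (by Observation~\ref{obser:tangentposgeneral}(1)) but does contain the post-deletion set $Q_1'$ (since $q$ lies on the new merged hull). At any future deletion with sets $Q_1''\subseteq Q_1'$ and $Q_2''\supseteq Q_2$, a disk tangent to $\alpha(Q_2'')$ at $q$ automatically covers $Q_2$ and hence is also tangent to $\alpha(Q_2)$ at $q$, so it contains $Q_1'\supseteq Q_1''$; but if $q$ were involved again, some such disk would have to miss $Q_1''$. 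Contradiction.

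This argument is strictly more robust than your monotone-sweep plan because it uses only the monotone nesting $Q_1''\subseteq Q_1'$ and $Q_2\subseteq Q_2''$ and makes no claim about where $a_2$ sits between the two deletions. Your counterclockwise-monotonicity claim for $a_2$ across insertions is not merely unverified but dubious: in the situation of Lemma~\ref{lem:correct}(2), the new upper tangent point on $\alpha(Q_2)$ jumps to $q_j$, the freshly inserted rightmost vertex, which lies \emph{clockwise} of the old $a_2$ along the hull. One might try to rescue the sweep by arguing that every vertex ``jumped over'' in that case simultaneously leaves $\alpha(Q_2)$ (its vertex period ends), but that is a different and more delicate argument than the one you sketched, and the paper's direct approach avoids the difficulty altogether.
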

\begin{proof}
We only discuss the upper tangent case, for the lower tangent case is similar.
Let $v$ be a vertex on $\alpha(Q'_1)$ involved in the procedure.
We show that $v$ cannot involve in the procedure again.
Indeed, $v$ was not a vertex of $\alpha(Q_1)$ before $p_i$
is deleted. After $p_i$ is delete, since $v$ is involved in the
procedure, $v$ must be a vertex of $\alpha(Q_1')$.
As only deletions will happen on $Q_1$, $v$ will always be a vertex of the
circular hull of $Q_1$ until it is deleted.
Hence, $v$ will never be involved in the procedure again (because
to involve in the procedure, $v$ cannot be a vertex of the circular hull
of $Q_1$).

Let $q$ be a vertex on $\alpha(Q_2)$ involved in the procedure. Let $a_2$ and $a_2'$ be the old and new upper common tangent points on $\alpha(Q_2)$, respectively. Let $b_2$ and $b_2'$ be the old and new lower common tangent points on $\alpha(Q_2)$, respectively.
Then, $q\in \partial_{\alpha(Q_2)}(a_2',a_2)$. Notice that $\partial_{\alpha(Q_2)}(a_2',a_2)\subseteq \partial_{\alpha(Q_2)}(b_2,a_2)$.
By Observation~\ref{obser:tangentposgeneral}(1), any disk tangent to $\alpha(Q_2)$ at $q$ does not contain $Q_1$. On the other hand, since $q$ is involved in the procedure, we have
$q\in \partial_{\alpha(Q_2)}(a_2',b_2')$ because $a_2$ is moving counterclockwise to $a_2'$ while $b_2$ is moving clockwise to $b_2'$ according to our algorithm. Thus, any disk tangent to $\alpha(Q_2)$ at $q$ must contain the new set $Q_1'$ after the deletion.

Now consider another deletion operation later. We argue that $q$ will not be involved in the same procedure for the deletion. Let $Q''$ be the set of $Q$ right before the deletion, and let $Q_1''=Q''\cap L$ and  $Q_2''=Q''\cap R$. Clearly, $Q_1''\subseteq Q_1'$ and $Q_2\subseteq Q_2''$. Assume to the contrary that $q$ involves in the procedure again. Then, $q$ is a vertex of $\alpha(Q_2'')$. Let $D$ be a disk tangent to $\alpha(Q_2'')$ at $q$. Hence, $D$ covers $Q_2''$ and thus $Q_2$. This implies that $D$ is also tangent to $\alpha(Q_2)$ at $q$. Thus, $D$ contains $Q_1'$. On the other hand, because $q$ is involved in the procedure, as discussed above, any disk tangent to $\alpha(Q_2'')$ at $q$ does not contain $Q_1''$. Hence, $D$ does not contain $Q_1''$. Because $Q_1''\subseteq Q_1'$, we obtain that $D$ does not contain $Q_1'$, incurring contradiction. \qed
\end{proof}

This finishes the description of our deletion algorithm, which takes $O(1)$ amortized time.
Note that the second invariant in Observation~\ref{obser:invariant} is established.

\subsection{Insertions}
\label{sec:insert}

Consider an insertion of a point $q_j$ into $Q_2$.
We first update the hull $\alpha(Q_2)$ as in our static algorithm. If $q_j$ is redundant, then we are done for the insertion because $\alpha(Q)$ still exists (by Observation~\ref{obser:invariant}(2)) and the common tangents do not change.
Otherwise, $q_j$ appears as the rightmost vertex in the new $\alpha(Q_2)$ (recall that we have assumed that $\alpha(R)$ exists and thus $\alpha(Q_2)$ always exists).
Let $Q_2'$ be the set of $Q_2$ before $q_j$ is inserted and $Q_2$ the set after the insertion.
Let $Q'=Q_1\cup Q_2'$ and $Q=Q_1\cup Q_2$.
Let $z_1$ and $z_2$ be the counterclockwise and clockwise neighbors of
$q_j$ in the $\alpha(Q_2)$, or equivalently, they are the upper and lower
tangent points from $q_j$ to $\alpha(Q'_2)$.
For a purpose that will be clear later, we temporarily
keep the circular hull of $\alpha(Q'_2)$ unaltered.

Since the $Q_2$-dominating case does not happen, one of the following two
cases will happen: (1) the common tangents of $\alpha(Q_1)$ and
$\alpha(Q_2)$ exist; (2) $\alpha(Q)$ does not exist. Our algorithm
will detect which case happens. In the first case, the algorithm will
find the new common tangents. In the second case,
some further processing that involves deleting points from $Q_1$ will follow (the deletion processing algorithm in Section~\ref{sec:delete} will be invoked).
Before describing our algorithm, we give two lemmas that will help demonstrate the correctness of our algorithm. 
Let $cw(a_1,a_2)$ and $ccw(b_1,b_2)$ be the upper and lower common
tangents of $\alpha(Q_1)$ and $\alpha(Q_2')$, respectively, which are already
known by Observation~\ref{obser:invariant}(2).
We use $\beta(a_2,b_2)$ denote the subset of vertices of $\alpha(Q_2')$ clockwise from $a_2$ to $b_2$ excluding $a_2$ and $b_2$, and $\beta(a_2,b_2)=\emptyset$ if $a_2=b_2$. In fact,  $\beta(a_2,b_2)=\overline{\partial_{\alpha(Q_2')}[b_2,a_2]}$. Let $\beta[a_2,b_2]=\beta(a_2,b_2)\cup\{a_2,b_2\}$.

\begin{lemma}\label{lem:rightmostvertex}
\begin{enumerate}
\item
The rightmost vertex of $\alpha(Q')$ is also the rightmost vertex of $\alpha(Q_2')$, which must be in $\beta[a_2,b_2]$.
\item
The rightmost arc of $\alpha(Q')$ is one of the following three arcs: the rightmost arc of $\alpha(Q_2')$, $cw(a_1,a_2)$, and $ccw(b_1,b_2)$.
\end{enumerate}
\end{lemma}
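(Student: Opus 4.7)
My plan is to attack the two statements separately, using the following structural fact throughout: the boundary of $\alpha(Q')$ in clockwise order consists of the arcs of $\alpha(Q_1)$ from $b_1$ to $a_1$, the upper tangent $cw(a_1,a_2)$, the arcs of $\alpha(Q_2')$ from $a_2$ to $b_2$ (whose vertices are exactly $\beta[a_2,b_2]$), and the lower tangent $cw(b_2,b_1)$; in particular the vertices of $\alpha(Q')$ coming from $Q_2'$ are precisely $\beta[a_2,b_2]$.

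For statement 1, I would first exploit the separation by $\ell$: every vertex of $\alpha(Q')$ lies in $Q_1\cup Q_2'$, and since $Q_1$ is strictly to the left of $\ell$ while $Q_2'$ is strictly to the right, the rightmost vertex $v^*$ of $\alpha(Q')$ must lie in $Q_2'$; combined with the observation above, $v^*\in\beta[a_2,b_2]$, so in particular $v^*$ is a vertex of $\alpha(Q_2')$. Next I would show that the rightmost vertex $u^*$ of $\alpha(Q_2')$ itself lies in $\beta[a_2,b_2]$; this is essentially the ``right-side analog'' of Observation~\ref{obser:tangentposgeneral}(3), proved by the same technique: assuming $u^*\in\partial_{\alpha(Q_2')}(b_2,a_2)$, both supporting disks of the two arcs incident to $u^*$ cover $Q'$ (by Observation~\ref{obser:tangentposgeneral}(2)), $\ell$ must cross one of these arcs twice (otherwise the intersection region could not reach $Q_1$ on the opposite side of $\ell$), and the usual lens-containment argument then forces $\alpha(Q_1)\subseteq\alpha(Q_2')$, which is the $Q_2'$-dominating case, contradicting the existence of common tangents. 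Since $\beta[a_2,b_2]$ is exactly the set of vertices of $\alpha(Q_2')$ that are also vertices of $\alpha(Q')$, and since both $v^*$ and $u^*$ live in this set, comparing $x$-coordinates and using general position yields $v^*=u^*$.

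For statement 2, let $\rho$ denote the rightmost point of $\alpha(Q')$; we may assume it is interior to some arc $w'$ of $\alpha(Q')$, since otherwise there is no rightmost arc. The arc $w'$ is one of four types. If $w'$ is the upper or lower tangent we are done. If $w'$ is a right-side arc of $\alpha(Q_2')$, then because $\alpha(Q_2')\subseteq\alpha(Q')$ and $\rho$ is on $\partial\alpha(Q_2')$, the inequalities rightmost point of $\alpha(Q_2')\le\rho.x$ and $\rho\in\alpha(Q_2')$ together force $\rho$ to be the rightmost point of $\alpha(Q_2')$, so $w'$ is the rightmost arc of $\alpha(Q_2')$. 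It remains to rule out the possibility that $w'$ is a top arc of $\alpha(Q_1)$. Here I would use convexity of $\alpha(Q_1)$: since all vertices of $\alpha(Q_1)$ lie in $Q_1\subseteq L$ and are strictly left of $\ell$, the portion of $\partial\alpha(Q_1)$ strictly to the right of $\ell$ contains no vertex and is therefore a single arc, which must be the (unique) rightmost arc of $\alpha(Q_1)$. Coupling this with Observation~\ref{obser:tangentposgeneral}(3) (the rightmost vertex $w_1^*$ of $\alpha(Q_1)$ lies in $\partial_{\alpha(Q_1)}[a_1,b_1]$) and the easy fact that $w_1^*$ must be an endpoint of the rightmost arc of $\alpha(Q_1)$ (since clockwise from its rightmost point the $x$-coordinate is monotone), one sees that no arc clockwise from $b_1$ to $a_1$ can be the rightmost arc of $\alpha(Q_1)$ and hence no such arc extends strictly right of $\ell$. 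Since $\rho.x$ is strictly larger than $\ell$'s $x$-coordinate (the rightmost point of $\alpha(Q_2')\subseteq\alpha(Q')$ is strictly right of $\ell$), $\rho$ cannot lie on any top arc of $\alpha(Q_1)$.

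The main obstacle is the last sub-step: generically $w_1^*$ lies in the open interval $\partial_{\alpha(Q_1)}(a_1,b_1)$, in which case both arcs incident to $w_1^*$ are on the right side and the conclusion is immediate; but the boundary cases $w_1^*\in\{a_1,b_1\}$ need a short additional argument (invoking the tangency of the common tangent at $a_1$ or $b_1$ together with general position) to guarantee that the rightmost arc of $\alpha(Q_1)$ sits on the right side rather than hugging the tangent point from the top. I expect this small geometric case analysis to be the most delicate part of the write-up.
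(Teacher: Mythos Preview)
Your argument for statement~1 contains a genuine error. You correctly show that the rightmost vertex $v^*$ of $\alpha(Q')$ lies in $\beta[a_2,b_2]$. But your attempt to separately prove $u^*\in\beta[a_2,b_2]$ by mimicking the proof of Observation~\ref{obser:tangentposgeneral}(3) does not work: under your assumption $u^*\in\partial_{\alpha(Q_2')}(b_2,a_2)$, Observation~\ref{obser:tangentposgeneral}(2) does \emph{not} apply---that observation gives ``disks tangent at $c$ cover $Q'$'' only for $c\in\overline{\partial_{\alpha(Q_2')}[b_2,a_2]}=\beta(a_2,b_2)$, which is exactly the complement of where you have placed $u^*$. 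Indeed, Observation~\ref{obser:tangentposgeneral}(1) says the opposite under your assumption: no unit disk through $u^*$ covers $Q'$. So the lens-containment argument never gets off the ground and there is no route to the $Q_2'$-dominating contradiction. The ``right-side analog'' you invoke is not a symmetric statement and does not admit the same proof.

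The paper avoids this trap by reversing the logical order. Once $v^*\in Q_2'$ is established (which you also do, and more cleanly), the paper notes that $v^*$ is a vertex of $\alpha(Q_2')$ and then argues directly from $Q_2'\subseteq Q'$ that $v^*$ is already the rightmost vertex of $\alpha(Q_2')$; membership in $\beta[a_2,b_2]$ then falls out for free because $\beta[a_2,b_2]$ is precisely the set of vertices of $\alpha(Q_2')$ that survive in $\alpha(Q')$. For statement~2 the paper is also much shorter than your case analysis: since the rightmost arc of $\alpha(Q')$ must be incident to the rightmost vertex $v^*\in\beta[a_2,b_2]$, the only arcs of $\alpha(Q')$ incident to such a vertex are the two common tangents and arcs of $\alpha(Q_2')$---no $\alpha(Q_1)$-arc case ever arises, so your delicate boundary analysis at $w_1^*\in\{a_1,b_1\}$ is unnecessary.
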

\begin{proof}
Let $v$ be the rightmost vertex of $\alpha(Q')$. We first show that $v$ must be in $Q_2'$. Assume to the contrary that this is not true. Then, $v\in Q_1$.
Since all points of $Q_2'$ are to the right of $\ell$ and all points of $Q_1$ are to the left of $\ell$, none of the points of $Q_2'$ is a vertex of $\alpha(Q')$, which implies that all points of $Q_2'$ are in $\alpha(Q')$, and thus $\alpha(Q')=\alpha(Q_1)$. Therefore, we obtain that all points of $Q_2'$ are in $\alpha(Q_1)$, which is the $Q_1$-dominating case. This contradicts Observation~\ref{obser:invariant}(1) that the $Q_1$-dominating case does not happen. Hence, $v$ is in $Q_2'$.

Since $Q_2'\subseteq Q'$, it is not difficult to see that $v$ is also the rightmost vertex of $\alpha(Q_2')$.
Since $\beta[a_2,b_2]$ consists of all vertices of $\alpha(Q_2')$ that are also vertices of $\alpha(Q')$, $v$ must be in $\beta[a_2,b_2]$.

The above proves the first statement of the lemma. The second statement follows from $v\in \beta[a_2,b_2]$, which consists of all vertices of $\alpha(Q_2')$ that are also vertices of $\alpha(Q')$.
\qed
\end{proof}


If $q_j$ is in the supporting disk of the rightmost arc of $\alpha(Q')$, i.e., $q_j$ is redundant with respect to $\alpha(Q')$, then $\alpha(Q)$ exists and $cw(a_1,a_2)$ and $ccw(b_1,b_2)$ are still the common tangents of $\alpha(Q_1)$ and $\alpha(Q_2)$. Otherwise, $\alpha(Q)$ exists if and only if the tangents from $q_j$ to $\alpha(Q')$ exist. If $\alpha(Q)$ exists, we use $a$ and $b$ to denote the upper and lower tangent points from $q_j$ to $\alpha(Q')$, respectively.


%
%
%

\begin{figure}[t]
\begin{minipage}[t]{0.49\textwidth}
\begin{center}
\includegraphics[height=2.0in]{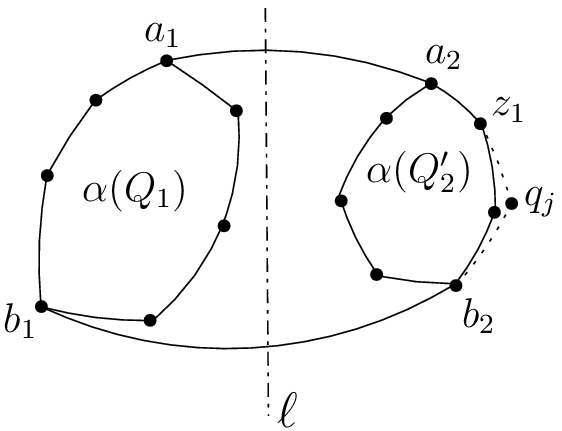}
\caption{\footnotesize Illustrating Lemma~\ref{lem:correct}(1).}
\label{fig:insertionlemma}
\end{center}
\end{minipage}
\hspace{0.05in}
\begin{minipage}[t]{0.49\textwidth}
\begin{center}
\includegraphics[height=2.0in]{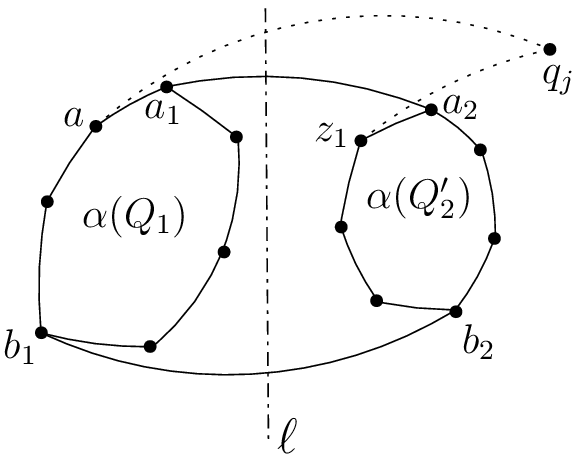}
\caption{\footnotesize Illustrating Lemma~\ref{lem:correct}(2).}
\label{fig:insertionlemma10}
\end{center}
\end{minipage}
\vspace{-0.15in}
\end{figure}

\begin{lemma}\label{lem:correct}
Assume that $q_j$ is not in the supporting disk of the rightmost arc of $\alpha(Q')$ and $\alpha(Q)$ exists. 
\begin{enumerate}
    \item If $z_1\in \beta(a_2,b_2)$, or if $z_1=a_2$ and $cw(a_1,a_2)$ and $q_j$ form an inner turn, then $cw(a_1,a_2)$ is still the upper tangent of $\alpha(Q_1)$ and $\alpha(Q_2)$; e.g., see Fig.~\ref{fig:insertionlemma}.

  \item
  If $z_1\not\in \beta[a_2,b_2]$, or $z_1=a_2$ and $cw(a_1,a_2)$ and $q_j$ form an outer turn, then
     $cw(a,q_j)$ is the new upper common tangent of $\alpha(Q_1)$ and $\alpha(Q_2)$ as well as the upper tangent from $q_j$ to $\alpha(Q_1)$, and further, $a\in \overline{\partial_{\alpha(Q_1)}(a_1,b_1)}$;  e.g., see Fig.~\ref{fig:insertionlemma10}.

  \item If $z_2$ is in $\beta(a_2,b_2)$, or if $z_2=b_2$ and $ccw(b_1,b_2)$ and $q_j$ form an inner turn, then 
      $ccw(b_1,b_2)$ is still the upper tangent of $\alpha(Q_1)$ and $\alpha(Q_2)$.

  \item
  If $z_2\not\in \beta[a_2,b_2]$, or $z_2=b_2$ and $ccw(b_1,b_2)$ and $q_j$ form an outer turn, then
    $ccw(b,q_j)$ is the new lower common tangent of $\alpha(Q_1)$ and $\alpha(Q_2)$ as well as the lower tangent from $q_j$ to $\alpha(Q_1)$, and further, $b\in \overline{\partial_{\alpha(Q_1)}(a_1,b_1)}$.
\end{enumerate}
\end{lemma}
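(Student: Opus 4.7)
The plan is to prove statements~1 and 2; statements~3 and 4 will follow from a symmetric argument obtained by swapping upper with lower (so $a_i\leftrightarrow b_i$, $z_1\leftrightarrow z_2$, and clockwise $\leftrightarrow$ counterclockwise). Throughout I rely on the fact that inserting $q_j$ into $\alpha(Q_2')$ removes exactly the vertices strictly clockwise between $z_1$ and $z_2$ on $\alpha(Q_2')$, replacing them by $q_j$, which becomes the rightmost vertex of $\alpha(Q_2)$.

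For statement~1, my plan is to reverify via Observation~\ref{obser:tangent} that $D(cw(a_1,a_2))$ remains tangent to the new $\alpha(Q_2)$ at $a_2$; tangency to $\alpha(Q_1)$ at $a_1$ is automatic since $Q_1$ is unchanged. Under either sub-hypothesis the vertex $a_2$ survives the update (either $a_2$ is strictly counterclockwise from $z_1$, or $a_2 = z_1$). Its counterclockwise neighbor in the new $\alpha(Q_2)$ coincides with that in $\alpha(Q_2')$ and hence already lies in $D(cw(a_1,a_2))$. Its new clockwise neighbor is either another surviving vertex of $\alpha(Q_2')$ (still covered) or equals $q_j$ (exactly when $a_2 = z_1$), in which case the inner-turn hypothesis provides $q_j \in D(cw(a_1,a_2))$.

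For statement~2, write the new upper common tangent of $\alpha(Q_1)$ and $\alpha(Q_2)$ as $cw(x,y)$. My plan is to first show $y = q_j$ by contradiction. If $y \ne q_j$, then $y$ is a surviving vertex of $\alpha(Q_2')$ and the disk $D = D(cw(x,y))$ covers $Q \supseteq Q_1 \cup Q_2'$, so by Observation~\ref{obser:tangent} it is tangent to $\alpha(Q_1)$ at $x$ and to $\alpha(Q_2')$ at $y$. Uniqueness of the upper common tangent of $\alpha(Q_1)$ and $\alpha(Q_2')$ then forces $(x,y) = (a_1, a_2)$. In the outer-turn sub-case ($z_1 = a_2$) this contradicts $q_j \notin D(cw(a_1,a_2))$, and in the other sub-case it contradicts the geometric claim (addressed below) that $a_2$ is not a vertex of $\alpha(Q_2)$ at all. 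Given $y = q_j$, the arc $cw(x,q_j)$ is the upper tangent from $q_j$ to $\alpha(Q_1)$ and so $x = a$. The containment $a \in \overline{\partial_{\alpha(Q_1)}(a_1,b_1)}$ then follows by applying Observation~\ref{obser:tangentposgeneral}(1) to $\alpha(Q_1)$ and $\alpha(Q_2')$: the disk $D(cw(a,q_j))$ covers $Q_1 \cup Q_2'$ with $a$ on its boundary, which rules out $a \in \partial_{\alpha(Q_1)}(a_1,b_1)$.

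The hard part is the geometric claim used above: when $z_1 \in \partial_{\alpha(Q_2')}(b_2, a_2)$ strictly, the vertex $a_2$ is removed by the insertion of $q_j$. My plan is to argue that because $q_j$ is strictly to the right of $\alpha(Q_2')$, the visible arc from $q_j$ (i.e., the clockwise arc from $z_1$ to $z_2$ on $\alpha(Q_2')$) must include the rightmost vertex of $\alpha(Q_2')$; consequently $z_2$ lies strictly clockwise past $a_2$ inside the outer portion $\partial_{\alpha(Q_2')}[a_2, b_2]$, and the arc from $z_1$ to $z_2$ therefore sweeps through $a_2$, placing $a_2$ in the removed set.
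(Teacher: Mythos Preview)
Your approach is genuinely different from the paper's: the paper argues via the scanning procedures on $\alpha(Q')$ and repeatedly invokes the invariant that the $Q_2$-dominating case does not occur after the update (Observation~\ref{obser:invariant}(3)), whereas you attempt a direct verification through Observation~\ref{obser:tangent} and uniqueness of common tangents. Your route is appealing, but it has a real gap.

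The problem is your ``hard part.'' You want to show that when $z_1\in\partial_{\alpha(Q_2')}(b_2,a_2)$, the vertex $a_2$ is removed from $\alpha(Q_2)$. Your argument says: the clockwise arc $[z_1,z_2]$ contains the rightmost vertex of $\alpha(Q_2')$, and since $a_2$ sits strictly between $z_1$ and the rightmost vertex (going clockwise), $a_2$ lands in the removed set $(z_1,z_2)$. But this fails when $a_2$ \emph{is} the rightmost vertex of $\alpha(Q_2')$ and $z_2=a_2$: then $a_2$ is an endpoint of the removed interval and survives. In that scenario your contradiction ``$a_2$ is not a vertex of $\alpha(Q_2)$'' evaporates, and nothing you have written rules out $(x,y)=(a_1,a_2)$. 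The same edge case $a_2=z_2$ also breaks your statement~1 argument, since then the counterclockwise neighbor of $a_2$ in $\alpha(Q_2)$ is $q_j$, not the old neighbor, and you are back to needing $q_j\in D(cw(a_1,a_2))$ --- which is exactly what you were trying to conclude.

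The missing ingredient is precisely what the paper exploits: Observation~\ref{obser:invariant}(3). The paper shows that if the upper tangent point $a$ from $q_j$ to $\alpha(Q')$ were in $Q_2'$, then every vertex of $\alpha(Q)$ would lie in $Q_2$, forcing the $Q_2$-dominating case --- a contradiction. This forces $a\in Q_1$, which is the real reason the new tangent touches $\alpha(Q_1)$ at $a$. You can salvage your uniqueness argument for statement~2 by replacing the fragile claim ``$a_2$ is removed'' with the claim ``$q_j\notin D(cw(a_1,a_2))$'' under $z_1\notin\beta[a_2,b_2]$, but proving that still requires the $Q_2$-dominating invariant (or an equivalent global argument), which you never invoke.
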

\begin{proof}
We only prove (1) and (2), since (3) and (4) can be proved analogously.

Assume that $z_1\in \beta(a_2,b_2)$. Then, by the definition of $\beta(a_2,b_2)$,
$D(cw(z_1,q_j))$ is tangent to $\alpha(Q')$ at $z_1$, and thus
$cw(z_1,q_j)$ is also the upper tangent from $q_j$ to $\alpha(Q')$ and $z_1=a$. To
show that $cw(a_1,a_2)$ is still the upper common tangent, it suffices
to show that both $a_1$ and $a_2$ are still vertices of $\alpha(Q)$. Assume to
the contrary this is not true. Then, because $z_1\in \beta(a_2,b_2)$,
$cw(z_1,q_j)$ is the upper tangent from $q_j$ to $\alpha(Q')$, and the
rightmost vertex of $\alpha(Q')$ is in $\beta[a_2,b_2]$ by
Lemma~\ref{lem:rightmostvertex}, if we apply the clockwise scanning
procedure on $\alpha(Q')$ to search the lower tangent $ccw(b,q_j)$,
then at least one of $a_1$ and $a_2$ will be removed from the vertex list of $\alpha(Q)$
during procedure. As at least one of $a_1$ and $a_2$ is not a vertex of
$\alpha(Q)$ and the scanning procedure starts from the rightmost vertex of $\alpha(Q_2')$,
$a_1$ cannot be a vertex of $\alpha(Q)$ and $b$ must be in $Q_2'$, and further, $ccw(b,q_j)$ must cross the vertical line $\ell$ twice because both $b$ and $q_j$ are to the right of $\ell$ while $a_1$ is to the left of $\ell$. Hence, $ccw(b,q_j)$ is the leftmost arc of $\alpha(Q)$. In addition, since $b\in Q_2'$, $\alpha(Q)$ is actually $\alpha(Q_2)$, implying that all points of $Q_1$ are in $\alpha(Q_2)$.
Therefore, we obtain that this is the $Q_2$-dominating case, contradicting with  Observation~\ref{obser:invariant}(3) that the $Q_2$-dominating case does not happen after $q_j$ is inserted. Hence, $cw(a_1,a_2)$ is still the upper tangent of $\alpha(Q_1)$ and $\alpha(Q_2)$.

Assume that $z_1=a_2$ and $cw(a_1,a_2)$ and $q_j$ form an inner turn. Then, because by Lemma~\ref{lem:rightmostvertex} the rightmost vertex of $\alpha(Q')$ is also the rightmost vertex of $\alpha(Q_2')$, which is in $\beta[a_2,b_2]$, if we apply the counterclockwise scanning procedure on $\alpha(Q')$ to search the upper tangent from $q_j$ to $\alpha(Q')$, then the procedure will return $cw(z_1,q_j)$, and thus $z_1=a$. Consequently, following the same proof as above, we can show that $cw(a_1,a_2)$ is still the upper tangent of $\alpha(Q_1)$ and $\alpha(Q_2)$.
The proves the lemma statement (1).

Next we prove the lemma statement (2).


Assume $z_1\not\in \beta[a_2,b_2]$.
Consider the counterclockwise scanning procedure on $\alpha(Q'_2)$ for searching $cw(z_1,q_j)$ and the counterclockwise scanning procedure on $\alpha(Q')$ for searching $cw(a,q_j)$. As the rightmost vertex $v$ of $\alpha(Q')$ is also the rightmost vertex of $\alpha(Q_2')$, the two procedures both start from $v$. Further, since $v\in \beta[a_2,b_2]$ and $z_1\not\in \beta[a_2,b_2]$, the counterclockwise scanning procedure on $\alpha(Q'_2)$ for $cw(z_1,q_j)$ will process vertices of $\beta[a_2,b_2]$ counterclockwise from $v$ to $a_2$, after which the counterclockwise neighbor of $a_2$ on $\alpha(Q_2')$ will be processed. This means that the counterclockwise scanning procedure on $\alpha(Q')$ for  $cw(a,q_j)$ will also process vertices of $\beta[a_2,b_2]$ counterclockwise from $v$ to $a_2$, after which the counterclockwise neighbor of $a_2$ on $\alpha(Q')$ will be processed, which is $a_1$.
We claim that $a$ is not in $Q_2$, since otherwise by the similar analysis as above the $Q_2$-dominating case would happen, incurring contradiction. Hence, $a$ is a vertex on $\alpha(Q_1)$. As $cw(a,q_j)$ is the upper tangent of from $q_j$ to $\alpha(Q')$, $D(cw(a,q_j))$ contains $Q'$ and thus $Q_1$. Hence, $D(cw(a,q_j))$ is tangent to $\alpha(Q_1)$ at $a$, and thus $cw(a,q_j)$ is an upper tangent from $q_j$ to $\alpha(Q_1)$. On the other hand, since $D(cw(a,q_j))$ contains $Q'$ and also $q_j$, $cw(a,q_j)$ is an arc of $\alpha(Q)$. Since $a\in Q_1$ and $q_j\in Q_2$, $cw(a,q_j)$ is the upper common tangent of $\alpha(Q_1)$ and $\alpha(Q_2)$.

We next discuss the case where $z_1=a_2$ and $cw(a_1,a_2)$ and $q_j$ form an outer turn. As above, we consider the two counterclockwise scanning procedures. Since $z_1=a_2$, the two procedures will both process vertices on $\beta[a_2,b_2]$ from $v$ until $a_2$. As $cw(a_1,a_2)$ and $q_j$ form an outer turn, according to our counterclockwise searching procedure on $\alpha(Q')$ for $cw(a,q_j)$, when we process $a_2$, we need to further check whether the two neighbors of $a_2$ in $\alpha(Q')$ are both in $\alpha(a_2,q_j)$. We claim that this is not true. Indeed, assume to the contrary that this is true. Then, we obtain that $\alpha(a_2,q_j)=\alpha(Q)$. But this means that the $Q_2$-dominating case happens since both $a_2$ and $q_j$ are in $Q_2$, incurring contradiction. Because the two neighbors of $a_2$ in $\alpha(Q')$ are not both in $\alpha(a_2,q_j)$, according to our counterclockwise searching procedure, we will proceed on processing the counterclockwise neighbor of $a_2$ on $\alpha(Q')$, which is $a_1$. Then, following the same analysis as the above case, we can show that $cw(a,q_j)$ is the upper tangent from $q_j$ to $\alpha(Q_1)$ and also the upper common tangent of $\alpha(Q_1)$ and $\alpha(Q_2)$.


It remains to show that $a\in \overline{\partial_{\alpha(Q_1)}(a_1,b_1)}$.
Since $cw(a,q_j)$ is the upper tangent from $q_j$ to $\alpha(Q')$ and also the upper tangent from $q_j$ to $\alpha(Q_1)$, $a$ must be a vertex of both $\alpha(Q')$ and $\alpha(Q_1)$. Because  $\overline{\partial_{\alpha(Q_1)}(a_1,b_1)}$ consists of all points that are vertices of both $\alpha(Q')$ and $\alpha(Q_1)$, it must contain $a$. This proves the lemma statement (2)
\qed
\end{proof}

In light of Lemma~\ref{lem:correct}, our algorithm works as follows.
We first check whether $q_j$ is in the supporting circle of the rightmost arc of $\alpha(Q')$. By Lemma~\ref{lem:rightmostvertex}, this can be done in constant time. If yes, then $cw(a_1,a_2)$ and $ccw(b_1,b_2)$ are still the common tangents of $\alpha(Q_1)$ and $\alpha(Q_2)$, and we are done with the insertion. In the following, we assume otherwise. Depending on whether $z_1$ satisfies the condition in Lemma~\ref{lem:correct}(1) or Lemma~\ref{lem:correct}(2), and whether $z_2$ satisfies the condition in Lemma~\ref{lem:correct}(3) or Lemma~\ref{lem:correct}(4), there are four cases.

If $z_1$ satisfies Lemma~\ref{lem:correct}(1) and $z_2$ satisfies Lemma~\ref{lem:correct}(3), then $cw(a_1,a_2)$ and $ccw(b_1,b_2)$ are still the common tangents of $\alpha(Q_1)$ and $\alpha(Q_2)$ . So $\alpha(Q)$ exists and we are done with the insertion.

If $z_1$ satisfies Lemma~\ref{lem:correct}(2) and $z_2$ satisfies
Lemma~\ref{lem:correct}(3), then $ccw(b_1,b_2)$ is still the lower
common tangent but $cw(a_1,a_2)$ is not the upper common tangent any
more. This also implies that $\alpha(Q)$ exists.
Next, we find the new upper common
tangent, as follows.
We apply the counterclockwise scanning procedure on $\alpha(Q_1)$ as
in the static algorithm, but it is sufficient for the scanning
procedure to start from $a_1$ (as discussed in the proof of
Lemma~\ref{lem:correct}). As the upper common tangent exists,
this procedure will find it. We call the procedure {\em the
insertion-type upper common tangent searching procedure}. The running
time of the procedure is $O(1+k)$, where $k$ is the number of vertices
of $\alpha(Q_1)$ counterclockwise strictly from $a_1$ to
the new upper tangent point (we say that these vertices are {\em
involved} in the procedure). By the following lemma, the amortized
cost of the procedure is $O(1)$.

\begin{lemma}\label{lem:cost10}
Each point of $L\cup R$ can involve in the insertion-type upper common tangent searching procedure at most once in the entire algorithm.
\end{lemma}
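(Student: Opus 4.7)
The plan is to follow the same spirit as Lemma~\ref{lem:cost05}: argue that involvement in the procedure causes an irreversible transition of $v$'s position on $\alpha(Q_1)$, so $v$ cannot satisfy the condition for involvement a second time. Points $v\in R$ are disposed of trivially: the insertion-type upper common tangent searching procedure only scans vertices on $\alpha(Q_1)\subseteq L$, so no point of $R$ can ever be involved. Hence, I fix $v\in L$ and show it is involved at most once.

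The key invariant I plan to establish is the following: once $v$ is involved in the procedure at some time $t$, then for every later time $t'>t$ at which $v$ still belongs to $Q_1$, we have $v\in \partial_{\alpha(Q_1^{t'})}[a_1^{t'},b_1^{t'}]$. Since being involved at time $t'$ requires $v$ to be a vertex of $\alpha(Q_1^{t'})$ strictly counterclockwise from $a_1^{t'}$, i.e., to lie in $\overline{\partial_{\alpha(Q_1^{t'})}[a_1^{t'},b_1^{t'}]}$, the invariant rules out a second involvement. To establish the invariant at time $t+1$: the counterclockwise scan on $\alpha(Q_1)$ starts at $a_1^{t}$ and terminates at $a=a_1^{t+1}$, which by Lemma~\ref{lem:correct}(2) lies strictly counterclockwise from $a_1^{t}$, and the involved vertex $v$ lies strictly between them. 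Hence $v$ becomes strictly clockwise from the new $a_1^{t+1}$ on $\alpha(Q_1^{t+1})=\alpha(Q_1^{t})$; coupled with the observation that $b_1^{t+1}$ either equals $b_1^{t}$ or has moved clockwise from $b_1^{t}$, this forces $v\in\partial_{\alpha(Q_1^{t+1})}(a_1^{t+1},b_1^{t+1})$.

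Next I would verify that the invariant is preserved by each subsequent operation. For any future insertion into $Q_2$, Lemma~\ref{lem:correct} guarantees that $a_1$ can only move counterclockwise and $b_1$ only clockwise on $\alpha(Q_1)$ (which does not itself change), so the clockwise arc $[a_1,b_1]$ only grows and continues to contain $v$. For a deletion from $Q_1$, there are three subcases: if the deleted point is not a vertex of $\alpha(Q_1)$, the hull and both tangent points are unchanged; if it is a vertex different from $a_1$ and $b_1$, then by the algorithm of Section~\ref{sec:delete} the tangent points still do not change and $v$ (being on the right side of $\alpha(Q_1)$, hence not the leftmost vertex) is not the deleted point; and if the deleted point is $a_1$ (the $b_1$ case is symmetric), then the deletion-type upper common tangent searching procedure starts at $cw(a_1)$ on the new hull and moves $a_1$ counterclockwise, from which one can check that every old right-side vertex distinct from the deleted one lies on the new right-side arc $\partial_{\alpha(Q_1')}[a_1',b_1']$ or coincides with $a_1'$ itself.

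The main obstacle I expect is the last subcase of the deletion analysis, because removing a vertex from a circular hull may expose previously-interior points as new vertices of $\alpha(Q_1')$, complicating the comparison between old and new arcs. The essential point to verify is that, since the new $a_1'$ is obtained by scanning counterclockwise starting at $cw(a_1)$, every vertex strictly clockwise from $cw(a_1)$ along the original right-side chain (including $v$) lies clockwise from $a_1'$ on the new hull by construction; together with the symmetric analysis for $b_1$, this rules out $v$ migrating to the left side via a deletion. With the invariant thus verified in all cases, the lemma follows.
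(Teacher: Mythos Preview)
Your invariant-based approach (``once $v$ has been involved it lies permanently in the right-side arc $\partial_{\alpha(Q_1)}[a_1,b_1]$'') is plausible but genuinely different from the paper's argument, and it forces you into a case analysis over every future operation. You already flag the deletion case as incomplete, and you do not treat the fourth insertion case at all: there the procedure may return $p^*$, the algorithm then processes deletions with tangents taken relative to $Q_2'$ rather than $Q_2$, and afterwards resumes the insertion of $q_j$. Tracking your invariant through that sequence is possible but fiddly, and nothing in your sketch covers it.

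The paper sidesteps this entirely with a local certificate. When $v$ is scanned, the outer-turn test records that $q_j\notin D(cw(v,v_2))$ with $v_2=cw(v)$. Because $v$ is involved it lies in $\overline{\partial_{\alpha(Q_1)}[a_1,b_1]}$ and hence is not the rightmost vertex of $\alpha(Q_1)$ (Observation~\ref{obser:tangentposgeneral}(3)), so one neighbor---say $v_2$---is to the right of $v$. Now suppose $v$ were involved again at a later insertion of $q_k$ with $k>j$. Deletions proceed left to right, so both $v$ and $v_2$ remain in $Q_1''$; since $Q_1''\subseteq Q_1$, the minor arc $cw(v,v_2)$ is still an arc of $\alpha(Q_1'')$. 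Involvement again means $v\in\overline{\partial_{\alpha(Q_1'')}[a_1'',b_1'']}$, so $cw(v,v_2)$ is an arc of $\alpha(Q'')$ and its supporting disk covers $Q_2''\ni q_j$, contradicting the recorded outer turn.

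What the paper's route buys is that it never needs to know how $a_1,b_1$ evolve between the two moments: the only fact used about the intermediate history is that $v_2$ cannot be deleted before $v$. Your route proves (in effect) the same invariant, but by checking every mechanism that could move $a_1$ or $b_1$; the paper instead extracts a single persistent obstruction from the moment of involvement.
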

\begin{proof}
Let $v$ be a point involved in the procedure, which is a vertex of $\alpha(Q_1)$. Let $v_1$ and $v_2$ be $v$'s counterclockwise and clockwise neighbors on $\alpha(Q_1)$, respectively. According to our counterclockwise scanning procedure, $cw(v,v_2)$ and $q_j$ form an outer turn, and thus the disk $D(cw(v,v_2))$ does not contain $q_j$, and similarly, $cw(v_1,v)$ and $q_j$ form an outer turn and $D(cw(v_1,v))$ does not contain $q_j$.

We claim that at least one of $v_1$ and $v_2$ are to the right of $v$.
To prove the claim, it is sufficient to show that $v$ is not the
rightmost vertex of $\alpha(Q_1)$. Indeed, since $v$ is involved in
the procedure, $v$ is in $\overline{\partial_{\alpha(Q_1)}[a_1,b_1]}$.
By Observation~\ref{obser:tangentposgeneral}(3), the rightmost vertex
of $\alpha(Q_1)$ is in ${\partial_{\alpha(Q_1)}[a_1,b_1]}$. Therefore,
$v$ is not the rightmost vertex of $\alpha(Q_1)$.
The claim is thus proved.
Without loss of generality, we assume that $v_2$ is to the right of $v$.



We argue that $v$ will not be involved in the same procedure again in future. Assume to the contrary that $v$ is involved in the same procedure again during another insertion of $q_k$, with $k>j$.
Let $Q''_1$,  $Q''_2$, and $Q''$ refer to the corresponding sets right before the insertion.
Since $v$ is involved in the procedure, $v$ has not been deleted
and thus is in $Q''_1$. Since $v_2$ is to the right of $v$, $v_2$ has
also not been deleted and thus is in $Q''_1$ as well. As $cw(v,v_2)$ is an arc of $\alpha(Q_1)$ and $Q_1''\subseteq Q_1$, $cw(v,v_2)$ is also an arc of $\alpha(Q_1'')$.

Let $a_1''$ (resp., $b_1''$) be the tangent point on $\alpha(Q_1'')$ of the upper (resp., lower) common tangent of $\alpha(Q_1'')$ and $\alpha(Q_2'')$.
Since $v$ is involved in the procedure for inserting $q_k$, $v$ must
be a vertex of $\alpha(Q''_1)$ in $\overline{\partial_{\alpha(Q_1'')}[a''_1,b''_1]}$.
As $cw(v,v_2)$ is an arc of $\alpha(Q''_1)$ and $v\in \overline{\partial_{\alpha(Q_1'')}[a''_1,b''_1]}$, $cw(v,v_2)$ must be an arc of $\alpha(Q'')$ and thus
the disk $D(cw(v,v_2))$
must cover $Q''$. Hence, $D(cw(v,v_2))$ covers $Q_2''$. Notice that $q_j$ is in
$Q_2''$, for $j<k$. Therefore, $q_j$ is contained in $D(cw(v,v_2))$. But we have obtained above
that $D(cw(v,v_2))$ does not contain $q_j$. Hence, we obtain contradiction.
\qed
\end{proof}

If $z_1$ satisfies Lemma~\ref{lem:correct}(1) and $z_2$ satisfies Lemma~\ref{lem:correct}(4), then $cw(a_1,a_2)$ is still the upper common tangent but $ccw(b_1,b_2)$ is not the lower common tangent any more.
This is a symmetric case to the above case, and we can apply the clockwise scanning procedure on $\alpha(Q_1)$  (starting from $b_1$) to find the new lower common tangent. We call this  {\em the insertion-type lower common tangent searching procedure}, which takes $O(1)$ amortized time by a similar analysis as Lemma~\ref{lem:cost10}.

If $z_1$ satisfies Lemma~\ref{lem:correct}(2) and $z_2$ satisfies Lemma~\ref{lem:correct}(4), e.g., see Fig.~\ref{fig:insertionlemma20}, then neither $cw(a_1,a_2)$ nor $ccw(b_1,b_2)$ is a common tangent any more.
Indeed, this is the most challenging case.
One reason is that we do not know whether $\alpha(Q)$ exists. Therefore,
our algorithm needs to determine whether $\alpha(Q)$ exists, and if yes, find the new common tangents, which are the tangents from $q_j$ to $\alpha(Q_1)$ by Lemma~\ref{lem:correct}.
Further, if $\alpha(Q)$ does not exist, then our algorithm will find a special vertex $p^*$ on $\alpha(Q_1)$ such that there is no unit disk that can cover $Q_2$ and the points of $Q_1$ to the right of $p^*$ including $p^*$.
As such, before $p^*$ is deleted, $\alpha(Q)$ always does not exist (but $\alpha(Q)$ may still not exist even after $p^*$ is deleted). The following lemma will be useful later.

\begin{lemma}\label{lem:qjtangent}
Assume that $\alpha(Q)$ does not exist.
If $P$ is a subset of $Q_1$ such that $\alpha(P\cup Q_2)$ exists, then there is a unit disk tangent to $\alpha(Q_2)$ at $q_j$ that contains all points of $P\cup Q_2$.
\end{lemma}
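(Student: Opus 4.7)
The plan is to reformulate the conclusion in terms of unit-disk centers. Writing $\calI(S)=\bigcap_{p\in S}D_1(p)$ for the set of valid centers of unit disks covering a finite point set $S$, a disk $D_1(c)$ is tangent to $\alpha(Q_2)$ at $q_j$ if and only if $c\in\calI(Q_2)\cap\partial D_1(q_j)$, by Observation~\ref{obser:tangent} together with Observation~\ref{obser:basic}(4). Hence the lemma reduces to showing that $\calI(P\cup Q_2)\cap\partial D_1(q_j)\neq\emptyset$, i.e., that $q_j$ is a vertex of $\alpha(P\cup Q_2)$.

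The natural approach is by contradiction: assume $\calI(P\cup Q_2)\subseteq\mathrm{int}(D_1(q_j))$. Since $|c-q_j|^2$ is convex and $\calI(P\cup Q_2)$ is a nonempty compact convex set, the quantity $\beta=\max\{|c-q_j|:c\in\calI(P\cup Q_2)\}$ is attained at some extreme point $c^*$ with $\beta<1$, and a standard first-order (KKT) condition yields a decomposition
\[
2(c^*-q_j)=\sum_i\lambda_i(c^*-p_i),\qquad \lambda_i\ge 0,
\]
where each $p_i\in P\cup Q_2$ is tight at $c^*$, i.e., $|c^*-p_i|=1$. Because $|c^*-q_j|=\beta<1$, $q_j$ itself is not among the $p_i$, so all the $p_i$ belong to $P\cup (Q_2\setminus\{q_j\})$.

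Next I would invoke the hypothesis that $\alpha(Q)$ does not exist, equivalently $\calI(Q_1)\cap\calI(Q_2)=\emptyset$. Since $\calI(P\cup Q_2)\neq\emptyset$ while $\calI(P\cup Q_2)\cap\calI(Q_1\setminus P)=\calI(Q)=\emptyset$, we have $P\subsetneq Q_1$ and the convex sets $\calI(P\cup Q_2)$ and $\calI(Q_1\setminus P)$ are disjoint. The geometric setup -- all points of $Q_1\setminus P$ lie strictly left of the separator $\ell$ while $q_j$ is the rightmost vertex of $\alpha(Q_2)$ and lies strictly right of $\ell$ -- places the tangent arc $\calI(Q_2)\cap\partial D_1(q_j)$ on the ``left'' boundary of $\calI(Q_2)$, i.e.\ precisely on the side of $\calI(Q_2)$ facing $Q_1\setminus P$. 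Combined with the tightness decomposition above, this is used to derive a quantitative lower bound $|c^*-q_j|\ge 1$, contradicting $\beta<1$.

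I expect the main obstacle to be this last quantitative step: turning the qualitative disjointness $\calI(Q_1)\cap\calI(Q_2)=\emptyset$ and the rightmost-vertex property of $q_j$ into the inequality $|c^*-q_j|\ge 1$, given only that the tight constraints at $c^*$ come from $P\cup(Q_2\setminus\{q_j\})$. A potentially cleaner alternative is to skip the extreme-point analysis and instead continuously deform a witness center $c_0\in\calI(P\cup Q_2)$ inside $\calI(Q_2)$ along a path that monotonically increases $|c-q_j|$. One then argues that the only way such a deformation can be stopped before reaching the tangent arc $\calI(Q_2)\cap\partial D_1(q_j)$ is via some $P$-constraint $\partial D_1(p)$, $p\in P$, and that such a blockage -- combined with the gap enforced by $\calI(Q_1)\cap\calI(Q_2)=\emptyset$ -- would contradict the assumed non-emptiness of $\calI(P\cup Q_2)$, completing the proof.
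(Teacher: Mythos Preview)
Your opening reduction is exactly right: the conclusion is equivalent to showing that $q_j$ is a vertex of $\alpha(P\cup Q_2)$. But from that point on you are working much too hard, and the KKT/deformation outline does not close. You yourself flag the gap: you never explain how the disjointness $\calI(Q_1)\cap\calI(Q_2)=\emptyset$ forces $|c^*-q_j|\ge 1$ when the tight constraints at $c^*$ come only from $P\cup(Q_2\setminus\{q_j\})$. The ``rightmost vertex'' heuristic you invoke does not by itself yield that inequality, and the alternative deformation argument ends with the same unproved assertion (``would contradict the assumed non-emptiness of $\calI(P\cup Q_2)$'') without a mechanism.

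The missing idea is that this lemma is stated inside the insertion routine, where the algorithm invariant (Observation~\ref{obser:invariant}(2)) guarantees that $\alpha(Q')=\alpha(Q_1\cup Q_2')$ exists, with $Q_2'=Q_2\setminus\{q_j\}$. Once you use this, the proof is a two-line containment argument: if $q_j$ were \emph{not} a vertex of $\alpha(P\cup Q_2)$, then $q_j$ lies in the interior of $\alpha(P\cup Q_2)=\alpha(P\cup Q_2')$; since $P\subseteq Q_1$, Observation~\ref{obser:basic}(5) gives $\alpha(P\cup Q_2')\subseteq\alpha(Q_1\cup Q_2')$, so $q_j$ lies in the interior of $\alpha(Q_1\cup Q_2')$; hence $\alpha(Q)=\alpha(Q_1\cup Q_2'\cup\{q_j\})=\alpha(Q_1\cup Q_2')$ exists, contradicting the hypothesis. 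Your approach never invokes this invariant, which is why you cannot finish; indeed, without it the statement need not hold, so no amount of extreme-point analysis on $\calI(P\cup Q_2)$ alone will succeed. (A minor point: your $D_1(c)$ clashes with the paper's notation $D_1(w)$ for the chord-arc region; the unit disk centered at $c$ is $D_r(c)$ with $r=1$.)
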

\begin{proof}
If $q_j$ is a vertex of $\alpha(P\cup Q_2)$, then by Observation~\ref{obser:basic}(1) there is a disk $D$ with $q_j$ on its boundary and covering $P\cup Q_2$. Since $D$ covers $Q_2$ and has $q_j$ on its boundary, $D$ is tangent to $\alpha(Q_2)$ at $q_j$. This proves the lemma.
Below we show that the case where $q_j$ is not a vertex of $\alpha(P\cup Q_2)$ cannot happen.

Assume to the contrary $q_j$ is not a vertex of $\alpha(P\cup Q_2)$. Then $q_j$ is in the interior of $\alpha(P\cup Q_2)$. Thus, removing $q_j$ from $Q_2$ will not affect $\alpha(P\cup Q_2)$, i.e., $\alpha(P\cup Q_2')=\alpha(P\cup Q_2)$, where $Q_2'=Q_2\setminus\{q_j\}$. Recall that by our algorithm invariant Observation~\ref{obser:invariant}(2), $\alpha(Q_1\cup Q_2')$ exists. Since $P\subseteq Q_1$, $\alpha(P\cup Q_2')\subseteq \alpha(Q_1\cup Q_2')$. Since $q_j$ is in the interior of $\alpha(P\cup Q_2')$, $q_j$ must be in the interior of $\alpha(Q_1\cup Q_2')$, and thus $\alpha(Q_1\cup Q_2')=\alpha(Q_1\cup Q_2'\cup\{q_j\})$. But this implies that
$\alpha(Q)$ exists as $Q=Q_1\cup Q_2'\cup\{q_j\}$, which contradicts with the fact that $\alpha(Q)$ does not exist.
\qed
\end{proof}

We next elaborate on the algorithm.
It is possible that $a_1$ is not in the upper hull or $b_1$ is not in the lower hull of $\alpha(Q_1)$. We first consider the case where $a_1$ is in the upper hull and $b_1$ is in the lower hull; other cases can be handled in a similar (and easier) way and will be discussed later.

\begin{figure}[t]
\begin{minipage}[t]{\textwidth}
\begin{center}
\includegraphics[height=2.0in]{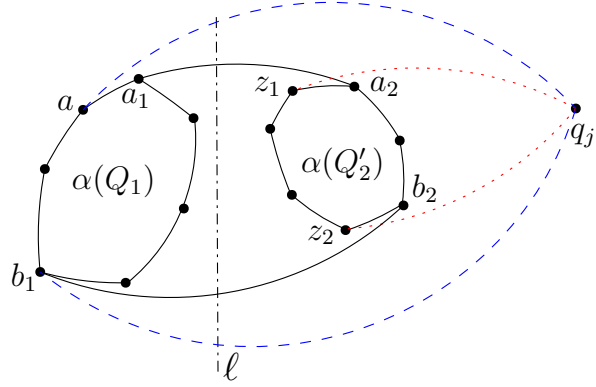}
\caption{\footnotesize Illustrating the case where $z_1$ satisfies Lemma~\ref{lem:correct}(2) and $z_2$ satisfies Lemma~\ref{lem:correct}(4). The two new tangents $cw(a,q_j)$ and $ccw(b,q_j)$ are also shown, with $b=b_1$.}
\label{fig:insertionlemma20}
\end{center}
\end{minipage}
\vspace{-0.15in}
\end{figure}

If $\alpha(Q)$ exists, then as those previous cases, we could find the upper tangent from $q_j$ to $\alpha(Q_1)$ by a counterclockwise scanning procedure on $\alpha(Q_1)$, starting from $a_1$, and similarly, find the lower tangent from $q_j$ to $\alpha(Q_1)$ by a clockwise scanning procedure on $\alpha(Q_1)$, starting from $b_1$.
The two procedures could run independently. However, since we do not know whether $\alpha(Q)$ exists and in the case where $\alpha(Q)$ does not exist we need to find a particular vertex $p^*$,
we will coordinate the two scanning procedures by processing vertices in order of decreasing $x$-coordinate.
Specifically, starting from $p_a=a_1$, we will process $p_a$ and scan $\alpha(Q_1)$ counterclockwise, and simultaneously, starting from $p_b=b_1$, we will process $p_b$ and scan $\alpha(Q_1)$ clockwise, in the same way as the static algorithm. We coordinate the two scanning procedures by the following rule: if $p_a$ is to the right of $p_b$, then we process $p_a$ first; otherwise we process $p_b$ first. In addition, our algorithm maintains the following invariant:
There is a unit disk with $q_j$ on the boundary covering both $z_2$ and $cw(p_a)$, and there is a unit disk with $q_j$ on the boundary covering both $z_1$ and $ccw(p_b)$. For the purpose of describing our algorithm, we temporarily set $cw(a_1)$ to $a_2$ and set $ccw(b_1)$ to $b_2$\footnote{One could consider that we are working on $\alpha(Q')$, and thus $cw(a_1)$ is indeed $a_2$ and $ccw(b_1)$ is indeed $b_2$.}.
The above invariant holds initially when $p_a=a_1$ and $p_b=b_1$, because $cw(p_a)=a_2\in Q_2\subseteq D(cw(q_j,z_2))$ and $ccw(p_b)=b_2\in Q_2\subseteq D(cw(q_j,z_1))$.

%

Without loss of generality, we assume that $p_a$ is to the right of $p_b$.
So we process $p_a$, as follows.
We first check whether there is a unit disk with $q_j$ on the boundary covering both $p_a$ and $z_2$. If not, then we stop the algorithm and return $p^*=p_a$. If yes, we proceed as follows.

We check whether $cw(ccw(p_a),p_a)$ and $q_j$ form an inner turn.
If yes, then $cw(p_a,q_j)$ is the upper tangent from $q_j$ to $\alpha(Q_1)$ and thus is the new upper common tangent by Lemma~\ref{lem:correct}. Then, we proceed to find the lower tangent, which is guaranteed to exist, by running the clockwise scanning procedure.
If it is an outer turn, then we check whether $\alpha(p_a,q_j)$ contains $cw(p_a)$ and $ccw(p_a)$. If yes, then we return $cw(p_a,q_j)$ as the upper common tangent and also return $ccw(p_a,q_j)$ as the lower common tangent.
Otherwise, if $ccw(p_a)$ is to the left of $p_a$ (i.e., $p_a$ is not the leftmost vertex of $\alpha(Q_1)$), then we set $p_a=ccw(p_a)$ and proceed as above; otherwise, we set $p^*=p_a$ and stop the algorithm.

The above describes our algorithm.
For the correctness, in addition to Lemma~\ref{lem:correct}, it is sufficient to show that if the algorithm returns $p^*$, then $p^*$ is correctly computed, as proved in Lemma~\ref{lem:end}.

\begin{lemma}\label{lem:end}
Suppose the algorithm returns $p^*$. Then, there is no unit disk that can cover all points of $Q_2$ and the points of $Q_1$ to the right of $p^*$ including $p^*$.
\end{lemma}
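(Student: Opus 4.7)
The plan is a proof by contradiction. Let $P$ denote the set of points of $Q_1$ lying to the right of $p^*$ and including $p^*$ itself, and suppose some unit disk covers $P\cup Q_2$. Since the algorithm has returned $p^*$, $\alpha(Q)$ does not exist, so Lemma~\ref{lem:qjtangent} (applied with the subset $P\subseteq Q_1$) supplies a unit disk $D$ that is tangent to $\alpha(Q_2)$ at $q_j$ and contains $P\cup Q_2$. By Observation~\ref{obser:tangent}, $D$ has $q_j$ on its boundary and contains both neighbors $z_1,z_2$ of $q_j$ on $\alpha(Q_2)$.

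I would then case-split on why the algorithm returned $p^*$; by symmetry it suffices to consider $p^*=p_a$. \emph{Case (i):} $p^*$ was returned because the opening check of the counterclockwise scanning step failed, i.e., no unit disk with $q_j$ on its boundary covers both $p_a$ and $z_2$. The disk $D$ itself is exactly such a disk --- it has $q_j$ on its boundary and covers both $p^*\in P$ and $z_2$ --- a direct contradiction. \emph{Case (ii):} $p_a$ has reached the leftmost vertex of $\alpha(Q_1)$ while the inner-turn test (check 2) and the $\alpha$-containment test (check 3) both fail at $p^*$. Since $p^*$ is the leftmost vertex of $\alpha(Q_1)$, both $cw(p^*)$ and $ccw(p^*)$ lie strictly to the right of $p^*$ and hence also lie in $P$, so $D$ covers $\{p^*,cw(p^*),ccw(p^*)\}$ with $q_j$ on its boundary.

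Case~(ii) is the main obstacle. The plan is to invoke Observation~\ref{obser:turn}(2) with $(p,q,t)=(ccw(p^*),p^*,q_j)$: the failure of check 2 supplies the required outer turn $q_j\notin D(cw(ccw(p^*),p^*))$, $D$ certifies unit-disk coverability of $\{ccw(p^*),p^*,q_j\}$, and the sub-case of check 3's failure in which $ccw(p^*)\notin\alpha(p^*,q_j)$ supplies that $ccw(p^*)$ is not in the interior of $\alpha(p^*,q_j)$. The observation then gives $p^*\in D(cw(ccw(p^*),q_j))$. Combined with $D$'s coverage of $cw(p^*)$ and the extremality of $p^*$ (which places both neighbors on the same side as $q_j$ of the vertical line through $p^*$), a short geometric argument about the two extremal disks $D(cw(p^*,q_j))$ and $D(ccw(p^*,q_j))$ through $\{p^*,q_j\}$ forces both $cw(p^*)$ and $ccw(p^*)$ into $\alpha(p^*,q_j)$, contradicting the failure of check 3. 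The other sub-case of check 3's failure ($cw(p^*)\notin\alpha(p^*,q_j)$) is handled symmetrically using the arc of $\alpha(Q_1)$ on the clockwise side of $p^*$.

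The hard part of the proof is case~(ii): although the algebraic ingredients (a covering disk $D$ with $q_j$ on its boundary, the outer-turn witness from check 2, the exclusion from $\alpha(p^*,q_j)$ from check 3, and the extremality of $p^*$) are all in hand, weaving them together via Observation~\ref{obser:turn} and verifying the auxiliary outer-turn condition on the clockwise-side arc --- which ultimately rests on $p^*$ being the leftmost vertex of $\alpha(Q_1)$ and $\ell$ separating $Q_1$ from $q_j$ --- requires careful bookkeeping across the two sub-cases.
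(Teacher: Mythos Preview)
Your Case~(i) is fine and matches the paper's first case. There is, however, a minor circularity at the outset: you assert ``since the algorithm has returned $p^*$, $\alpha(Q)$ does not exist'' and then invoke Lemma~\ref{lem:qjtangent}, but in Case~(ii) (where $P=Q_1$) the non-existence of $\alpha(Q)$ is exactly what you are trying to prove. This is easy to patch (if $\alpha(Q)$ exists, $q_j$ is a vertex of $\alpha(Q)$ since it is not redundant with respect to $\alpha(Q')$, and you obtain $D$ directly), so I would not call it a real obstacle.

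The genuine gap is in Case~(ii). Your plan uses only local information at $p^*$ from the counterclockwise scan: the failed inner-turn test on $cw(ccw(p^*),p^*)$, the failed containment test for $\alpha(p^*,q_j)$, the extremality of $p^*$, and the covering disk $D$. But these ingredients do \emph{not} force both neighbors of $p^*$ into $\alpha(p^*,q_j)$. If $\alpha(Q)$ exists, the upper tangent point $a$ from $q_j$ to $\alpha(Q_1)$ is only known to lie in $\overline{\partial_{\alpha(Q_1)}(a_1,b_1)}$ (Lemma~\ref{lem:correct}(2)); nothing in your list prevents $a$ from lying on the portion of that set counterclockwise \emph{past} $p^*$, i.e.\ between $ccw(p^*)$ and $b_1$. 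In that situation the scan at $p^*$ legitimately sees an outer turn and a failed containment test, yet $\alpha(Q)$ still exists --- so no contradiction arises from your data alone. The ``short geometric argument'' you sketch via Observation~\ref{obser:turn}(2) yields $p^*\in D(cw(ccw(p^*),q_j))$, which does not translate back into $ccw(p^*)\in\alpha(p^*,q_j)$.

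The paper closes this gap by using the \emph{other} scanning procedure. Because $p_a$ and $p_b$ are processed in decreasing $x$-order, when $p_a$ sits at the leftmost vertex $p^*$, the clockwise scan has already driven $p_b$ all the way from $b_1$ to $p^*$ without finding the lower tangent. Hence $b$ cannot lie counterclockwise between $ccw(p^*)$ and $b_1$, which forces $b$ into the segment counterclockwise from $a_1$ to $p^*$. Since $a$ must precede $b$ in that counterclockwise order (both lie in $\overline{\partial_{\alpha(Q_1)}(a_1,b_1)}$ and $c_1\in\partial_{\alpha(Q_1)}[a,b]$), $a$ is also trapped in the $a_1$-to-$p^*$ segment --- contradicting the ccw scan's failure to find it there. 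Your argument never invokes $p_b$, and without it Case~(ii) cannot be completed.
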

\begin{proof}
Suppose we are processing a vertex $p_a$. There are two ways that $p^*$ is returned: (1) when there is no unit disk with $q_j$ on the boundary covering both $p_a$ and $z_2$; (2) when $p_a$ is the leftmost vertex of $\alpha(Q_1)$ and we still attempt to set $p_a=ccw(p_a)$. In both cases, $p^*=p_a$.
Our goal is to show that $P\cup Q_2$ is not
unit disk coverable, where $P$ is the subset of points of $Q_1$ to the right of $p_a$ including $p_a$.

In the first case,
assume to the contrary that $P\cup Q_2$ are unit disk coverable. Then, by Lemma~\ref{lem:qjtangent}, there is a unit disk with $q_j$ on the boundary covering $P\cup Q_2$. Thus, we obtain contradiction since $p_a\in P$ and $z_2\in Q_2$.

In the second case, we have $P=Q_1$ and $Q=P\cup Q_2$. So it suffices to show that $\alpha(Q)$ does not exist.
Assume to the contrary that $\alpha(Q)$ exists. By Lemma~\ref{lem:correct}, the tangents from $q_j$ to $\alpha(Q_1)$ are the tangents from $q_j$ to $\alpha(Q')$, and $a\in \overline{\partial_{\alpha(Q_1)}(a_1,b_1)}$.
According to our algorithm, $a$ cannot be a vertex of $\alpha(Q_1)$ counterclockwise from $a_1$ to $p_a$.
Thus, $a$ is a vertex of $\alpha(Q_1)$ counterclockwise from $ccw(p_a)$ to $b_1$.
Further, $a$ must be a vertex $\alpha(Q_1)$ counterclockwise from $a_1$ to $b$.
On the other hand, since $p_a$ is the leftmost vertex of $\alpha(Q_1)$ and $p_a$ is currently being processed, it must be the case that $p_b=p_a$ and $v$ has already been processed, where $v=ccw(p_b)$.
This means that $b$ cannot be a vertex of $\alpha(Q_1)$ counterclockwise from $v$ to
$b_1$, and thus $b$ must be a vertex of $\alpha(Q_1)$ counterclockwise from $a_1$ to $p_a$.
Since $a$ is a vertex $\alpha(Q_1)$ counterclockwise from $a_1$ to $b$, we obtain that $a$ must be a vertex of $\alpha(Q_1)$ counterclockwise from $a_1$ to $p_a$.
But this contradicts with that $a$ is a vertex of $\alpha(Q_1)$ counterclockwise from $ccw(p_a)$ to $b_1$.
\qed
\end{proof}

As in the third case, we also call the above algorithm {\em the insertion-type common tangent points searching procedure}, and its runtime is $O(1+k)$ time, where $k$ is the number the vertices of $\alpha(Q_1)$ counterclockwise strictly from $a_1$ to the final position of $p_a$ when the algorithm stops and the number of vertices $\alpha(Q_1)$ clockwise strictly from $b_1$ to the final position of $p_b$ when the algorithm stops (we say that those vertices are involved in the procedure).
We can use literally the same proof as Lemma~\ref{lem:cost10} to show that each point of $L\cup R$ can involve in the procedure at most once in the entire algorithm. In fact, the proof of Lemma~\ref{lem:cost10} shows that each point of $L\cup R$ can involve in the insertion-type common tangent points searching procedure in both this case and the above third case at most once in the entire algorithm. Hence, the amortized cost is $O(1)$.


%
%
%

One of the following cases happens after the above algorithm: (1) the two common tangents of $Q_1$ and
$Q_2$ are found; (2) a vertex $p^*$ (which is either $p_a$ or $p_b$) is returned.
In the first case, we are done with the insertion, and Observation~\ref{obser:invariant}(2) is established.
In the second case, $\alpha(Q)$ does not exist and we
further perform the following processing.
Without loss of generality, we assume that $p^*=p_a$.
According to our algorithm, $p_b$ is either $p_a$ or to the left of $p_a$, and $ccw(p_b)$ must be to the right of $p_b$ because it was processed before $p_a$.

We perform deletions to delete points from $Q_1$ in order from left to right until $p_a$. By the definition of $p^*$, after each deletion except the last deletion of $p_a$, $\alpha(Q)$ does not exist. Note that these deletions actually have not been invoked yet, so we perform them ahead of time in the sense that when they are actually invoked in future we know that $\alpha(Q)$ does not exist.

To process these deletions efficiently, the key idea is that we
process the deletions by pretending $q_j$ has not been inserted yet,
or equivalently, we process the deletions with respect to $Q_2'$.
Because $\alpha(Q')$ exists before any deletion, we know that it
still exists after each deletion. After all these deletions are
completed, we will insert $q_j$ again (by ``resuming'' our previous work on processing the insertion; see below for the details).
This is the reason we temporarily kept the circular hull
$\alpha(Q_2')$ unaltered before.

We again assume that the $Q_2$-dominating case does not happen (with
respect to $Q_2'$) after each deletion, which can be determined by our
$Q_2$-dominating case detection procedure by changing $j^*$ back to
its value before $q_j$ was inserted. Note that we also need to store
the current value $j^*$ in another variable so that when we resume processing
the insertion of $q_j$ again (which will be discuss below) we simply
reset $j^*$ to that value, which only introduces a constant time.

For each deletion, we update the common tangents of $\alpha(Q_1)$ and
$\alpha(Q_2')$ by using the algorithm in Section~\ref{sec:delete}.
Once $p_a$ is deleted, we insert $q_j$ again by ``resuming'' our
previous work of the insertion of $q_j$, as follows. Let $Q_1'$ refer
to the set of $Q_1$ after $p_a$ is deleted.
Let $cw(a_1',a_2')$ and $ccw(b_1',b_2')$ be the common tangents of $\alpha(Q_1')$ and $\alpha(Q_2')$. Let $\beta(a'_2,b'_2)$ denote the set of vertices $\alpha(Q_2')$ clockwise from $a_2'$ to $b_2'$ excluding $a_2'$ and $b_2'$, and $\beta(a_2',b_2')=\emptyset$ if $a_2'=b_2'$. Let $\beta[a_2',b_2']=\beta(a_2',b_2')\cup\{a_2',b_2'\}$.
Recall that $p_a$ and $p_b$ refer to the vertices of $\alpha(Q_1)$ when our earlier algorithm for processing the insertion of $q_j$ stops (and returns $p^*$).
Depending on whether $p_a=a_1$ and whether $p_b=b_1$, there are four cases.

\begin{itemize}
  \item If $p_a\neq a_1$ and $p_b\neq b_1$, then $cw(p_a)$ is to the
  left of or at $a_1$ and $ccw(p_b)$ is to the left of or at $b_1$. In this case, $cw(a_1,a_2)$ and
  $ccw(b_1,b_2)$ are still the common tangents of $\alpha(Q'_1)$ and $\alpha(Q_2')$, i.e., $cw(a_1,a_2)=cw(a'_1,a'_2)$ and $cw(b_1,b_2)=cw(b'_1,b'_2)$. So $\beta(a_2',b_2')=\beta(a_2,b_2)$. If we apply the same algorithm as before for
  processing the insertion of $q_j$, we are still at the fourth case, i.e., $z_1$ satisfies Lemma~\ref{lem:correct}(2) and $z_2$ satisfies Lemma~\ref{lem:correct}(4).
  But the crux of the idea is that instead of starting over the two scanning procedures from $a_1$ and $b_1$, respectively, we ``resume'' the previous work by starting the counterclockwise scanning procedure from $cw(p_a)$ on $\alpha(Q_1')$
  and starting the clockwise scanning procedure from $ccw(p_b)$ on $\alpha(Q_1')$.
  In this way, we avoid processing a vertex twice except $cw(p_a)$ and $ccw(p_b)$,
  for which we can charge the time to the deletion of $p_a$.

  \item If $p_a= a_1$ but $p_b\neq b_1$, then $ccw(p_b)$ is to the left of or at $b_1$ and $cw(b_1,b_2)$ is still the lower common tangent of $\alpha(Q'_1)$ and $\alpha(Q_2')$, i.e., $cw(b_1,b_2)=cw(b'_1,b'_2)$, but the upper one changes, i.e., $cw(a_1,a_2)\neq cw(a'_1,a'_2)$. Consequently, it is possible that $z_1$ now satisfies Lemma~\ref{lem:correct}(1), which can be determined when we process the deletion of $p_a$. We resume the same algorithm as before for the insertion of $q_j$, i.e., regardless of which case happens, when we search the lower common tangent point on $\alpha(Q_1')$ by running the clockwise scanning procedure, we start from $ccw(p_b)$. However, in the counterclockwise scanning procedure for searching the upper common tangent point, we need to start from the new upper tangent point $a_1'$ because $a_1$ has been deleted.

  \item If $p_a\neq a_1$ but $p_b= b_1$, then this is symmetric to the above second case. We start the clockwise scanning procedure from $b_1'$ and start the counterclockwise scanning procedure from $cw(p_a)$.

  \item If $p_a= a_1$ and $p_b= b_1$, then both $a_1$ and $b_1$ have been deleted since $p_a$ is deleted and  $p_b$ is either $p_a$ or to the left of $p_a$. Hence, both upper and lower common tangents get changed, i.e., $cw(a_1,a_2)\neq cw(a'_1,a'_2)$ and $cw(b_1,b_2)\neq cw(b'_1,b'_2)$. We start the new algorithm exactly the same as before, i.e., start the two scanning procedures from $a_1'$ and $b_1'$, respectively.
\end{itemize}

Other than the time for computing the new common tangents after each deletion (whose amortized time is $O(1)$ as shown in Section~\ref{sec:delete}), the amortized cost of processing the insertion of $q_j$ is $O(1)$. After the above processing, if $\alpha(Q_1'\cup Q_2)$ exists, then we are done with the insertion of $q_j$ (and Observation~\ref{obser:invariant}(2) is established). Otherwise, the algorithm will return a new vertex $p^*$ and we will repeat the same algorithm. As more and more points are deleted from $Q_1'$, eventually we will encounter a situation where $\alpha(Q_1'\cup Q_2)$ exists since $\alpha(Q_2)$ exists (e.g., when all points of $Q'_1$ are deleted).

Recall that the above algorithm is for the situation where $a_1$ is on the upper hull and $b_1$ is on the lower hull of $\alpha(Q_1)$. If this is not the case, then $a_1$ and $b_1$ are either both on the upper hull or both on the lower hull. Without loss of generality, assume that they are both on the upper hull. Then, we can change the algorithm in the following way. We only perform the counterclockwise scanning procedure on the upper hull, starting from $p_a=a_1$. The algorithm for processing each vertex is the same as before except the following: if $p_a$ arrives at $b_1$ and we still want to set $p_a=ccw(p_a)$, then we stop the algorithm and return $p^*=p_a$.
If the procedure finds the new upper common tangent, then the lower common tangent exists and we find it by running the clockwise scanning procedure starting from $b_1$. If the procedure returns $p^*$, then we perform deletions as above until $p_a$. Note that the lower common tangent must get changed, i.e., $cw(b_1,b_2)\neq cw(b'_1,b'_2)$, because $b_1$ is to the left of $p^*$ and thus must be deleted. So we run into either the third or the four case as above (i.e., the two cases with $p_b= b_1$). The correctness is still based on Lemma~\ref{lem:correct} and a similar proof for Lemma~\ref{lem:end}.
The amortized cost analysis of Lemmas~\ref{lem:cost10} still applies.

\subsection{Adapting the Algorithm to the Radially Sorted Case}
\label{sec:radial}

The above gives our algorithm in the problem setting where points in $L\cup R$ are sorted from left to right. We show that we can adapt the algorithm easily to the radially sorted case where points in $L\cup R$ are radially sorted around a point $o$ such that $L$ and $R$ are still separated by a line $\ell$ through $o$ (this is actually our original problem setting on $S=S^+\cup S^-$).

Without loss of generality, we assume that $\ell$ is vertical, and
$L=\{q_1,\ldots,q_n\}$ and $R=\{p_1,\ldots,p_n\}$ are sorted
clockwise around $o$ such that all points of $L$ are to the left of
$\ell$ and all points of $R$ are to the right of $\ell$. We first
discuss how to update the circular hull of $Q_2$ under insertions when
$Q_1=\emptyset$ (i.e., extending the static algorithm to the radially
sorted case). We still consider the points of
$R$ following their index order. To handle each insertion of $q_j$, we
still run a counterclockwise scanning procedure to find the upper
tangent from $q_j$ to the current $\alpha(Q_2)$ and a clockwise
scanning procedure to find the lower tangent. Recall that our previous
algorithm starts the two procedure from the rightmost vertex of
$\alpha(Q_2)$. Here, the difference is that we start the two
procedures from the vertex $v$, where $v$ has the largest index among
all vertices of $\alpha(Q_2)$. This will be consistent with our previous
algorithm. Indeed, because the points of $R$ are right of $\ell$ and radially sorted
around $o$, all vertices of $\alpha(Q_2)$ are on one side of the line
$l$ through $o$ and $v$ while $q_j$ is on the other side of $l$.
Based on Observation~\ref{obser:tangentpos}(4), searching the two tangents from
$v$ will be successful, and we can use the similar analysis to prove the correctness of
this adapted algorithm.  Note
that this requires our algorithm to keep track of the vertex
of the largest index of $\alpha(Q_2)$,
which only introduces $O(n)$ overall time for all points of $R$,
just like in the previous algorithm where we need to keep track of the
rightmost vertex (which is actually also the vertex of the largest
index in the previous problem setting where points of $R$ are sorted
from left to right; this means that if we describe the algorithm as maintaining the vertex of $\alpha(Q_2)$ with the largest index then the same algorithm works on both problem settings without any change).

Further, exactly the same as before,
we maintain the leftmost arc of $\alpha(Q_2)$ after each insertion.
This is for handling the case where $Q_1\neq\emptyset$. We still need
the leftmost arc because $Q_1$ and $Q_2$ are still separated by a
vertical line, in the same way as before, so we can use the same
method as before to handle the interactions between $\alpha(Q_1)$
and $\alpha(Q_2)$, such as computing their common tangents, determining
dominating cases, etc. For example,
when the common tangents of $\alpha(Q_1)$ and $\alpha(Q_2)$ exist,
after $q_j$ is inserted, we need to update the common tangents. To
this end, we first compute the two tangent points $z_1$ and $z_2$ from
$q_j$ to $\alpha(Q_2)$ in the way described above, and then we follow exactly
the same algorithm as before, i.e.,, there are four cases depending the
locations of $z_1$ and $z_2$ with respect to Lemma~\ref{lem:correct}.

For computing $\alpha(Q_1)$ initially when $Q_1=L$, we consider the points of $L$ in the inverse index order, in a similar way as the above for $R$, but now we also need to associate stacks with vertices as in the previous algorithm. The rest of the algorithm follows the same as before.

In summary, we can solve the dynamic circular hull problem on $S=S^+\cup S^-$ in $O(n)$ time, and thus Theorem~\ref{theo:decisionserial} is proved.

\section{Computing Common Tangents of Two Circular Hulls in $O(\log n)$ Time}
\label{sec:commontangent}

In this section, we prove Lemma~\ref{lem:commontangent}. Without loss of generality, let $|L|=|R|=n$ and assume that $L$ and $R$ are separated by a vertical line $\ell$ with $L$ on the left side. Let $\alpha_1$ and $\alpha_2$ denote the circular hulls of $L$ and $R$, respectively. Also, we assume that the vertices of $\alpha_1$ in {\em counterclockwise} order starting from the {\em rightmost} vertex $c_1$ of $\alpha_1$ are stored in a balanced binary search tree $T_1$, and each vertex of $\alpha_1$ is associated with its two neighbors (so that given a node of $T_1$ storing a vertex $v$ of $\alpha_1$ we can access $cw(v)$ and $ccw(v)$ in $O(1)$ time). Similarly,
vertices of $\alpha_2$ in {\em clockwise} order starting from the {\em leftmost} vertex $c_2$ of $\alpha_2$ are stored in another balanced binary search tree $T_2$.

In the following, we present an $O(\log n)$ time algorithm for Lemma~\ref{lem:commontangent}, i.e., determine whether $\alpha(L\cup R)$ exists; if yes, then determine whether the $L$-dominating case or the $R$-dominating case happens; if neither dominating case happens, then compute the two common tangents of $\alpha_1$ and $\alpha_2$. Our algorithm is similar in spirit to the binary search algorithm given by Overmars and van Leeuwen~\cite{ref:OvermarsMa81} for finding common tangents of two convex hulls separated by a line, but the technical crux is in finding the criteria on which the binary search is based.

\subsection{A Special Case}
\label{sec:special}

We first consider a special case where $R$ has only one point $q$, but $L$ has $n$ vertices.
We first check whether the $L$-dominating case happens, by checking whether $q$ is in the supporting disk of the rightmost arc of $\alpha_1$. Using $T_1$, the rightmost arc can be found in $O(\log n)$ time. In the following, we assume that $q$ is outside the disk. Next, we will determine whether $\alpha(L\cup\{q\})$ exists, and if yes, find the two tangents from $q$ to $\alpha_1$. To this end, we first assume that the tangents exist and give an algorithm to find them. Later we will show that the algorithm can be slightly modified to determine whether the tangents exist (i.e., whether $\alpha(L\cup\{q\})$ exists).

We only show how to find the upper tangent point $a$, and the lower tangent point can be found in a similar way.
If we order the vertices of $\alpha_1$ counterclockwise starting from $c_1$ as a sequence $\calL_1$, then we partition  the sequence into three subsequences: $A, B, C$, defined as follows. If $c_1\neq a$, then $A$ consists of all vertices from $c_1$ to $cw(a)$; otherwise $A=\emptyset$. $B=\{a\}$. $C$ consists of the rest of vertices.
By Observation~\ref{obser:tangent}, a vertex $v$ of $\alpha_1$ is $a$ if and only if $D(cw(v,q))$ contains both $cw(v)$ and $ccw(v)$.
Lemma~\ref{lem:binarysearch} provides a criteria on which our binary search algorithm is based to find $a$.

\begin{lemma}\label{lem:binarysearch}
Assume that $a\neq c_1$.
Consider a vertex $v\in A\cup C$. If $v=c_1$, then $v\in A$. Otherwise,
$v$ is in $A$ if and only if the four vertices $cw(v),v,c_1,ccw(c_1)$ are all in $D(cw(v,q))$ or all in $D(cw(c_1,q))$.
\end{lemma}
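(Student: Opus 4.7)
The case $v=c_1$ is immediate from the definition of $A$: the hypothesis $a\neq c_1$ forces $c_1$ to be the first element of the subsequence $A$, so $v\in A$. In what follows I treat $v\neq c_1$ and prove the biconditional.

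The proof would parameterize the family of unit disks with $q$ on the boundary by the angular position of the disk center on the unit circle around $q$. Each vertex $u$ of $\alpha_1$ corresponds to a specific disk $D(cw(u,q))$ and hence a specific center angle. I would first establish a \emph{discrete monotonicity claim}: as $u$ ranges over the vertices of $\alpha_1$ in CCW order starting from $c_1$, the corresponding center angles around $q$ appear in a fixed cyclic sense until the rotation completes one full turn. This follows from a short inscribed-angle computation on the chord $\overline{uq}$. The tangent point $a$ is then characterized as the unique vertex where $D(cw(u,q))$ attains the tangent configuration, and by Observation~\ref{obser:basic}(4) that disk covers all of $\alpha_1$ and so all four listed vertices.

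Given this monotonicity, the forward direction is handled as follows. If $v\in A\setminus\{c_1\}$, the center angle of $D_v$ lies strictly between those of $D_0$ and $D_a$. I would argue that in this configuration $D_0$ still contains the CCW arc of $\alpha_1$ from $c_1$ up to some position (where its boundary first leaves $\alpha_1$), and $D_v$ contains the CCW arc from that position up to $v$; because $v$ has not yet crossed $a$, these two coverage windows together contain the entire CCW arc from $ccw(c_1)$ through $cw(v)$, and in particular at least one of $D_0,D_v$ contains all four endpoint vertices. For the backward direction I prove the contrapositive: if $v\in C$, then $D_v$ has rotated past the tangent position $D_a$, so $D_v$ and $D_0$ sit on opposite sides of $D_a$ in the rotation family. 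I would then apply Observation~\ref{obser:turn}(2) to the arcs $cw(ccw(c_1),c_1)$ and $cw(cw(v),v)$ together with $q$: the overshoot forces these arcs and $q$ to form outer turns of opposite polarity, which in turn forces $ccw(c_1)\notin D_v$ and $cw(v)\notin D_0$, so neither disk contains all four vertices.

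The main obstacle will be justifying the discrete monotonicity claim uniformly across the whole hull, since vertices of $\alpha_1$ may lie on both sides of any given line through $q$. My plan is to split $\alpha_1$ into two chains along the horizontal direction from $q$ (more precisely, at the two vertices where the direction to $q$ becomes vertical) and handle each chain separately, with careful bookkeeping at the transition vertices. An alternative that may be cleaner routes the argument through the standard paraboloid lift already used in Lemma~\ref{lem:parallelquery}, where unit disks become halfspaces and all containment tests become linear inequalities, potentially sidestepping the casework entirely.
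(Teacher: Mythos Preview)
Your approach is quite different from the paper's, and the plan as written has a real gap. You are trying to push through a global angular-monotonicity argument about the centers of the disks $D(cw(u,q))$ as $u$ runs around $\alpha_1$, and then read off ``coverage windows'' for the four listed vertices. You yourself flag that the monotonicity is delicate; but the bigger problem is that, even granting it, nothing in your outline explains why the four \emph{specific} vertices $cw(v),v,c_1,ccw(c_1)$ are the right test set, nor why one of the two disks $D(cw(v,q)),D(cw(c_1,q))$ must simultaneously swallow all four when $v\in A$. Your forward-direction sketch (``$D_0$ covers an initial arc, $D_v$ covers the rest, so one of them covers all four endpoints'') does not follow from monotonicity of center angles alone, and your backward-direction ``opposite polarity outer turns'' claim is not supported by anything like Observation~\ref{obser:turn}(2) without additional hypotheses (that observation requires unit-disk coverability of a triple and a non-interior condition that you have not arranged).

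The paper's proof bypasses all of this with one structural observation: for any contiguous subsequence $F$ of the vertex list $\calL_1$, $F$ is already the cyclic vertex list of its own circular hull $\alpha(F)$ (because each arc of $\alpha_1$ between consecutive members of $F$ has a supporting disk covering all of $\alpha_1$, hence $F$). Taking $F$ to be the vertices from $c_1$ to $v$, the four points $cw(v),v,c_1,ccw(c_1)$ are exactly the two endpoints of $F$ together with their neighbors \emph{in} $\alpha(F)$. If $v\in A$, none of the interior vertices of $F$ is a tangent point from $q$ to $\alpha_1$, hence not to $\alpha(F)$ either, so the upper tangent from $q$ to $\alpha(F)$ lands at $c_1$ or $v$ and its supporting disk covers $F$---in particular all four listed points. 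If $v\in C$, then $a\in F$; assuming one of the two disks contains all four points makes it tangent to $\alpha(F)$ at $v$ (resp.\ $c_1$) by Observation~\ref{obser:tangent}, while $D(cw(a,q))$ is already tangent to $\alpha(F)$ at $a$, forcing $v=a$ (resp.\ $c_1=a$), a contradiction. This is the missing idea in your plan: the four vertices are not arbitrary---they are precisely the data needed to test tangency to the sub-hull $\alpha(F)$ at its two endpoints.
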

\begin{proof}
If $v=c_1$, then since $c_1\neq a$ and $c_1$ is the first vertex of $\calL_1$, $v$ must be in $A$.

Assume that $v$ is in $A\setminus\{c_1\}$. We show that the four points $cw(v),v,c_1,ccw(c_1)$ are all in $D(cw(v,q))$ or all in $D(cw(c_1,q))$.

We first give an {\em observation}: for any subsequence $F$ of $\calL_1$, $F$ is the cyclic sequence of all vertices on the circular hull $\alpha(F)$ of $F$. To see this, let $w$ be an arc of $\alpha_1$ connecting two adjacent vertices of $F$. Then $D(w)$ contains all vertices of $\alpha_1$, and thus it covers $F$. Therefore, by Observation~\ref{obser:basic}(2), $w$ is also an arc of $\alpha(F)$. Hence, the arc set of $\alpha(F)$ consists of all arcs of $\alpha_1$ connecting all pairs of adjacent vertices of $F$ plus another arc connecting the first vertex and the last vertex of $F$.

Let $F$ be the subsequence of $\calL_1$ from $c_1$ to $v$. By the above observation, $F$ is the vertex set of $\alpha(F)$. Recall our counterclockwise scanning procedure for finding $a$ in our static algorithm in Section~\ref{sec:static}, which starts from $c_1$. When a vertex $v'$ is processed, the result only depends on the two neighbors of $v'$. Hence, if we run our counterclockwise scanning procedure on both $\alpha_1$ and $\alpha(F)$, the result of the algorithm after processing a vertex $v'$ is the same for any $v'\in F\setminus\{c_1,v\}$. However, when $v'$ is $c_1$ or $v$, the result of processing $v'$ may be different as one of its neighbors gets changed from $\alpha_1$ to $\alpha(F)$. As each vertex of $F\setminus\{c_1,v\}$ is not a tangent point from $q$ to $\alpha_1$ (because $v\in A\setminus\{c_1\}$), it is not a tangent point from $p$ to $\alpha(F)$ either. Hence, the upper tangent point from $q$ to $\alpha(F)$ is either $c_1$ or $v$. If it is $c_1$, then $D(cw(c_1,q))$ covers $F$; otherwise, $D(cw(v,q))$ covers $F$.
Notice that all four points $cw(v),v,c_1,ccw(c_1)$ are in $F$. Thus, either $D(cw(c_1,q))$ or $D(cw(v,q))$ contains all the four points.


Now assume that $v$ is in $C$. We show that neither $D(cw(v,q))$ nor $D(cw(c_1,q))$ contains all  four points $cw(v),v,c_1,ccw(c_1)$, which will prove the lemma.

By the definition of $C$, $v\neq a$. Let $F$ be the subsequence of $\calL_1$ from $a$ to $c_1$. According to the above observation, $F$ is the cyclic sequence of vertices of $\alpha(F)$. Thus, $cw(v)$ and $c_1$ are the two neighbors of $v$ in $\alpha(F)$, and $v$ and $ccw(c_1)$ are two neighbors of $c_1$ in $\alpha(F)$.
Assume to the contrary that either $D(cw(v,q))$ or $D(cw(c_1,q))$ contains all four points $cw(v),v,c_1,ccw(c_1)$. We obtain contradiction below for either case.

In the first case (i.e., $D(cw(v,q))$ contains all four points), since $cw(v)$ and $c_1$ are the two neighbors of $v$ in $\alpha(F)$ and both of them are in $D(cw(v,q))$, $D(cw(v,q))$ is tangent to $\alpha(F)$ at $v$. Thus, $cw(v,q)$ is the upper tangent from $q$ to $\alpha(F)$. We claim that $a=v$. Indeed, since $v\in C$, $F$ contains $a$ by the definition of $F$. Because $cw(a,q)$ is the upper tangent from $q$ to $\alpha_1$, $D(cw(a,q))$ contains all vertices of $\alpha_1$ and thus covers $F$. Hence, $cw(a,q)$ is the upper tangent from $q$ to $\alpha(F)$ and $a$ is the tangent point. Thus, it holds that $v=a$. However, this contradicts with that $v\in C$.

In the second case, since $v$ and $ccw(c_1)$ are the two neighbors of $c_1$ in $\alpha(F)$ and both of them are in $D(cw(c_1,q))$, $D(cw(c_1,q))$ is tangent to $\alpha(F)$ at $c_1$. Thus, $cw(c_1,q)$ is the upper tangent from $q$ to $\alpha(F)$. Following the same analysis as above, we can show that $c_1=a$. However, this contradicts with that $a\neq c_1$.
\qed
\end{proof}

In light of Lemma~\ref{lem:binarysearch}, we can compute $a$ in $O(\log n)$ time using the tree $T_1$, as follows. First, we check whether $c_1$ is $a$, which can be done in constant time after $c_1$ is accessed in $O(\log n)$ time from $T_1$. If not, let $v$ be the vertex of $\alpha_1$ at the root of $T_1$. We check whether $v=a$ in $O(1)$ time. If yes, we stop the algorithm. Otherwise, we check whether $v\in A$ using Lemma~\ref{lem:binarysearch}. If yes, then we proceed on the right child; otherwise we proceed on the left child. The running time is $O(\log n)$, which is the height of $T_1$.
The lower tangent from $q$ to $\alpha_1$ can be found likewise.

The above algorithm finds the tangents if they exist. If we do not know whether they exist, then we slightly change the algorithm as follows. Whenever we check whether a vertex $v$ is the tangent point, we also check whether $v$ and $q$ can be covered by a unit disk. If not, then no tangents exist and we stop the algorithm; otherwise we proceed in the same way as before. But if we reach a leaf $v$ and $v$ is still not the tangent point, then no tangents exist. The time of the algorithm is still $O(\log n)$.

\subsection{The General Case}

In the following, we discuss the general case where $L$ and $R$ each have $n$ vertices. Our algorithm begins with checking whether a dominating case happens in the following lemma.

\begin{lemma}
Whether the $L$-dominating case (resp., the $R$-dominating case) happens can be determined in $O(\log n)$ time.
\end{lemma}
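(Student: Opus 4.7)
The plan is to reduce the $L$-dominating test to checking whether $\alpha_2$ is contained in a single supporting disk of $\alpha_1$, and then resolve that containment check by a binary search on $T_2$. The $R$-dominating case is symmetric, exchanging the roles of $L$ and $R$ and using the leftmost arc of $\alpha_2$ in place of the rightmost arc of $\alpha_1$.

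First I would access $c_1$ through $T_1$ and inspect its two $\alpha_1$-neighbors to identify the rightmost arc $w$ of $\alpha_1$ (or conclude that $w$ is null) in $O(\log n)$ time. If $w$ is null, then the rightmost point of $\alpha_1$ is a vertex of $L$, so $\alpha_1$ lies strictly to the left of $\ell$; since $R$ lies strictly to the right of $\ell$, the $L$-dominating case fails. Otherwise I claim the $L$-dominating case holds iff $\alpha_2 \subseteq D(w)$. Both endpoints of $w$ are vertices of $\alpha_1 \subseteq L$, so the chord of $w$ lies strictly to the left of $\ell$; consequently the part of $D(w)$ lying to the right of $\ell$ is on the arc side of the chord and hence is contained in $D_1(w)$, which in turn is contained in the convex region $\alpha_1$. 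Since $R$ lies strictly to the right of $\ell$, this gives $R \subseteq D(w) \iff R \subseteq \alpha_1$, and by Observation~\ref{obser:basic}(4), since $D(w)$ is a unit disk, $R \subseteq D(w) \iff \alpha_2 \subseteq D(w)$.

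To decide $\alpha_2 \subseteq D(w)$, let $c$ be the center of $D(w)$. In general position $c$ lies outside $\alpha_2$, so $d(\cdot, c)$ restricted to the cyclic vertex sequence of $\alpha_2$ is cyclically unimodal with a unique local (and hence global) maximum at the farthest vertex $v^*$. I would locate $v^*$ by a binary search in $T_2$: at each visited node $v$, I compare $d(v, c)$ with $d(cw(v), c)$ and $d(ccw(v), c)$ using the stored neighbor pointers, stop if $v$ is a local maximum, and otherwise descend into the subtree on the side of the larger neighbor. To orient the search despite the cycle being broken at $c_2$, the initial probe evaluates $d$ at $c_2$ and its two neighbors to decide on which side of $c_2$ the cyclic maximum lies, and then restricts the subsequent binary search to that side, on which the distance sequence is genuinely unimodal. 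After finding $v^*$, since the true boundary maximizer of $d(\cdot, c)$ may lie in the interior of an arc of $\alpha_2$, I would also check the two arcs of $\alpha_2$ incident to $v^*$: on an arc with supporting unit disk centered at $o$, the farthest point from $c$ is either an endpoint of the arc or the point $o + (o - c)/|o - c|$ when the latter lies on the arc, an $O(1)$ check per arc. The $L$-dominating case holds iff all resulting candidate distances are at most $1$, for an overall $O(\log n)$ running time.

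The main obstacle is executing this binary search correctly: although $d(\cdot, c)$ is cyclically unimodal on $\partial\alpha_2$, its restriction to the in-order sequence of $T_2$ can be bimodal whenever $c_2$ lies interior to a monotonic arc of the cycle, so a naive unimodal search may descend in the wrong direction. The initial $c_2$-probe described above resolves this by identifying the monotonic piece of the cycle on which the maximum lies and restricting the binary search to the corresponding subtree of $T_2$, on which standard unimodal binary search applies.
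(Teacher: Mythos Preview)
Your reduction is sound: the $L$-dominating case is equivalent to $\alpha_2\subseteq D(w)$, and locating $w$ via $T_1$ costs $O(\log n)$. The gap is in the containment test. You assert that with $c$ outside $\alpha_2$ the map $v\mapsto d(v,c)$ on the cyclic vertex list of $\alpha_2$ is unimodal, and your binary search for $v^*$ depends on this. But being outside a convex set does \emph{not} force cyclic unimodality of vertex distances: take the convex quadrilateral $(0,0),(10,1),(20,0),(10,-1)$ and $c=(10.1,5)$, which lies outside (even line-separated). The distances are approximately $11.27,\ 4.00,\ 11.09,\ 6.00$, giving two local maxima; your local-comparison search can descend toward the wrong one. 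Scaling this into a unit-coverable instance collapses the counterexample because the ``flat'' middle points cease to be circular-hull vertices, so the claim may be salvageable for circular hulls---but that would require a genuinely new argument exploiting the unit-radius structure, which you do not supply. The auxiliary claim that $c\notin\alpha_2$ ``in general position'' is also unsupported: $c$ lies left of $\ell$, yet $\alpha_2$ may extend left of $\ell$.

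The paper's proof avoids unimodality altogether. It invokes a circular-hull-specific fact (Lemma~4.6 of Hershberger--Suri): if two vertices of $\alpha_1$ lie in a unit disk $D$, then one of the two boundary chains of $\alpha_1$ between them lies entirely in $D$, and which one can be decided in $O(1)$ from the neighbors. This drives a halving search in $T_1$---at each node test whether the current vertex $v$ and the fixed anchor $c_1$ are both in $D$; if not, report ``not dominating''; if so, discard the contained chain and recurse on the other subtree---terminating at a leaf. That property is exactly the level-$1$ contiguity your approach would need, but it is tied to \emph{unit} disks, not arbitrary radii, so it does not imply the full unimodality you assumed. A clean fix in your style would be to test $c\in\mathcal I(R)$ (equivalent to $\alpha_2\subseteq D(c,1)$) by an $O(\log n)$ point-location in that convex arc-bounded region using $T_2$, rather than searching for a farthest vertex.
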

\begin{proof}
We only show how to determine whether the $R$-dominating case happens, and the other case is similar.  Recall that the $R$-dominating case refers to the case where $L$ is covered by the supporting disk $D$ of the leftmost arc of $\alpha_2$, which is true if and only if all vertices of $\alpha_1$ are in $D$ by Observation~\ref{obser:basic}(4). We first check whether the leftmost arc of $\alpha_2$ is $null$. If yes, then the case does not happen. Otherwise, we have the disk $D$ and proceed as follows.

Let $v$ be the vertex at the root of $T_1$.
The vertex $v$ and the rightmost vertex $c_1$ of $\alpha_1$ partition the boundary of $\alpha_1$ into two chains with a roughly equal number of vertices. We check whether both $v$ and $c_1$ are in $D$. If not, then the $R$-dominating case does not happen and we stop the algorithm. Otherwise, by Lemma 4.6 of~\cite{ref:HershbergerFi91}, one of the chains of $\alpha_1$ partitioned by $v$ and $c_1$ is entirely in $D$, and that chain can be determined in $O(1)$ time by knowing the neighbors of $v$ and $c_1$. If the chain counterclockwise from $c_1$ to $v$ is in $D$, then we go to the right child of $v$, i.e., working on the other chain recursively; otherwise, we go to the left child of $v$. If we reach a leaf $v$, then the $R$-dominating case happens if and only if $v\in D$. Clearly, the runtime of the algorithm is $O(\log n)$. \qed
\end{proof}

In the following, we assume that neither dominating case happens. Our goal is to determine whether $\alpha(L\cup R)$ exists, and if yes, compute the two common tangents of $\alpha_1$ and $\alpha_2$. We first show how to find the common tangents by assuming that $\alpha(L\cup R)$ exists.
We follow the binary search scheme of Overmars and van Leeuwen~\cite{ref:OvermarsMa81} for convex hulls but resort to the criteria in Lemma~\ref{lem:binarysearch}.

With respect to any vertex $q$ of $\alpha_2$, we define three sets of vertices of $\alpha_1$: $A,B,C$ in the same way as in Section~\ref{sec:special}. We further partition $C$ into two subsets: $C_1$ and $C_2$ as follows. A vertex $v\in C$ is in $C_1$ if $v$ is on $\alpha_1$ counterclockwise from $a$ to $b$, where $a$ and $b$ are the upper and lower tangent points from $q$ to $\alpha_1$, respectively. Let $C_2=C\setminus C_1$. Note that $C_1=\emptyset$ if $a=b$, for $a\not\in C$. By Observation~\ref{obser:tangent}, a vertex $v\in C$ is in $C_1$ if and only if there is a unit disk $D$ tangent to $\alpha_1$ at $v$ containing $q$, which can be determined in $O(1)$ time given the two neighbors of $v$.
A vertex $p$ of $\alpha_1$ is called an {\em $E$-vertex with respect to $q$} if $p\in E$ for any $E\in \{A,B,C,C_1,C_2\}$.

Symmetrically, with respect to a vertex $p\in \alpha_1$, we also define $E$-vertices of $\alpha_2$ following the {\em clockwise} order from the {\em leftmost} vertex $c_2$ of $\alpha_2$, for $E\in \{A,B,C,C_1,C_2\}$. For a pair of vertices $(p,q)$ with $p\in \alpha_1$ and $q\in \alpha_2$, we say that the pair is an {\em $(E,F)$ case} if $p$ is an $E$-vertex of $\alpha_1$ with respect to $q$ and $q$ is an $F$-vertex of $\alpha_2$ with respect to $p$, with $E,F\in \{A,B,C,C_1,C_2\}$.

We describe an algorithm to compute the upper common tangent $cw(a_1,b_1)$ with $a_1$ and $b_1$ as the tangent points on $\alpha_1$ and $\alpha_2$, respectively. Suppose $p$ and $q$ are vertices at the roots of $T_1$ and $T_2$, respectively.
Depending on whether $(p,q)$ is an $(E,F)$ case, for $E,F\in \{A,B,C\}$, there are nine cases (several subcases arise for the case $(C,C)$). We show below that in each case we can disregard half of the remaining vertices of either $\alpha_1$ or $\alpha_2$. Let $\calL_1$ be the list of vertices of $\alpha_1$ following their order in $T_1$, i.e.,  counterclockwise from $c_1$. Let $\calL_2$ be the list of vertices of $\alpha_2$ following their order in $T_2$, i.e., clockwise from $c_2$.
We discuss the nine cases in order corresponding to those in~\cite{ref:OvermarsMa81}, as follows.

\begin{enumerate}
\item
Case $(B,B)$, which corresponds to Case a. in~\cite{ref:OvermarsMa81}; e.g., see Fig.~\ref{fig:caseBB}. In this case, $a_1=p$ and $b_1=q$. We can stop the algorithm.

\begin{figure}[h]
\begin{minipage}[t]{0.49\textwidth}
\begin{center}
\includegraphics[height=1.8in]{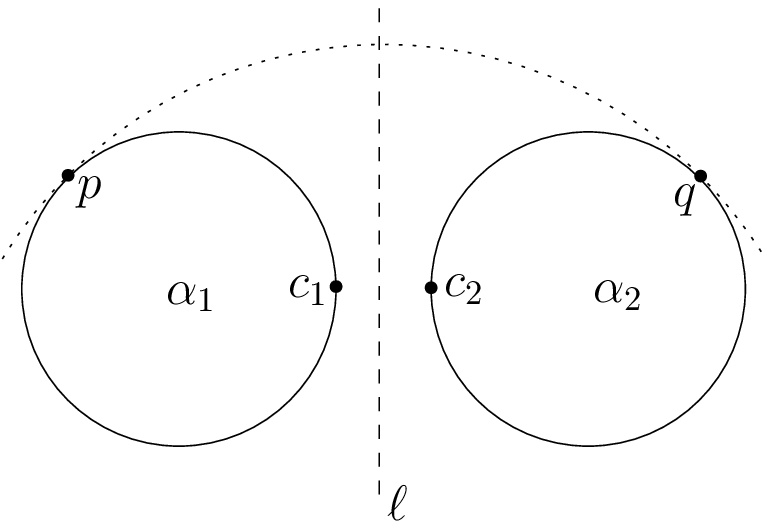}
\caption{\footnotesize Illustrating the case $(B,B)$.}
\label{fig:caseBB}
\end{center}
\end{minipage}
\hspace{0.05in}
\begin{minipage}[t]{0.49\textwidth}
\begin{center}
\includegraphics[height=1.8in]{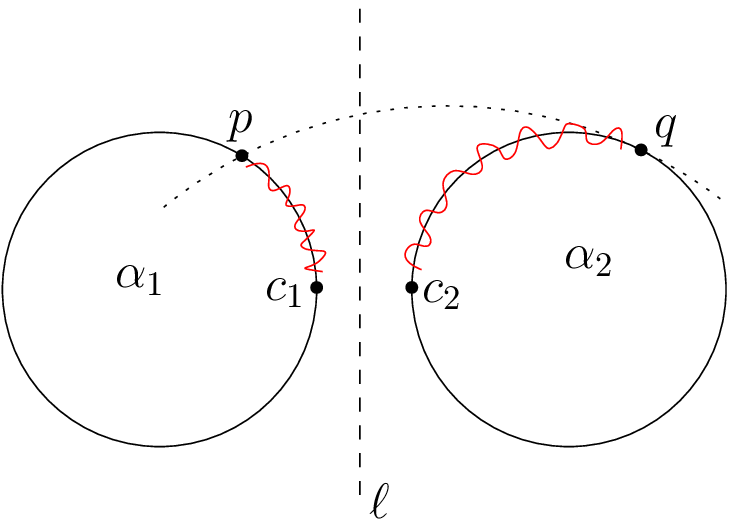}
\caption{\footnotesize Illustrating the case $(A,B)$.}
\label{fig:caseAB}
\end{center}
\end{minipage}
\vspace{-0.15in}
\end{figure}

\item
Case $(A,B)$, which corresponds to Case b. in~\cite{ref:OvermarsMa81} (with the notation $p$ and $q$ switched; the same applies below); e.g., see Fig.~\ref{fig:caseAB}. In this case, the part of $\calL_1$ before $p$ and the part of $\calL_2$ before $q$ can be disregarded, i.e., we move $p$ to its right child and move $q$ to its right child.

\item
Case $(C,B)$, which corresponds to Case c. in~\cite{ref:OvermarsMa81}; e.g., see Fig.~\ref{fig:caseCB}. In this case, the part of $\calL_1$ after $p$ and the part of $\calL_2$ before $q$ can be disregarded,
i.e., we move $p$ to its left child and move $q$ to its right child.

\begin{figure}[h]
\begin{minipage}[t]{0.49\textwidth}
\begin{center}
\includegraphics[height=1.8in]{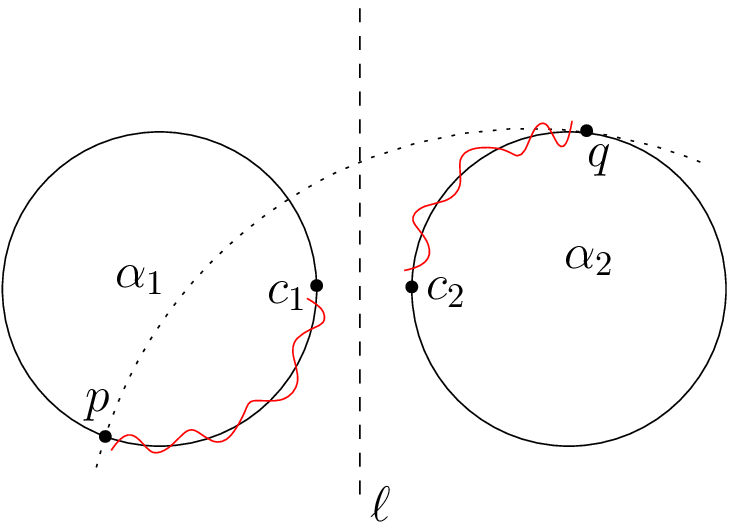}
\caption{\footnotesize Illustrating the case $(C,B)$.}
\label{fig:caseCB}
\end{center}
\end{minipage}
\hspace{0.05in}
\begin{minipage}[t]{0.49\textwidth}
\begin{center}
\includegraphics[height=1.8in]{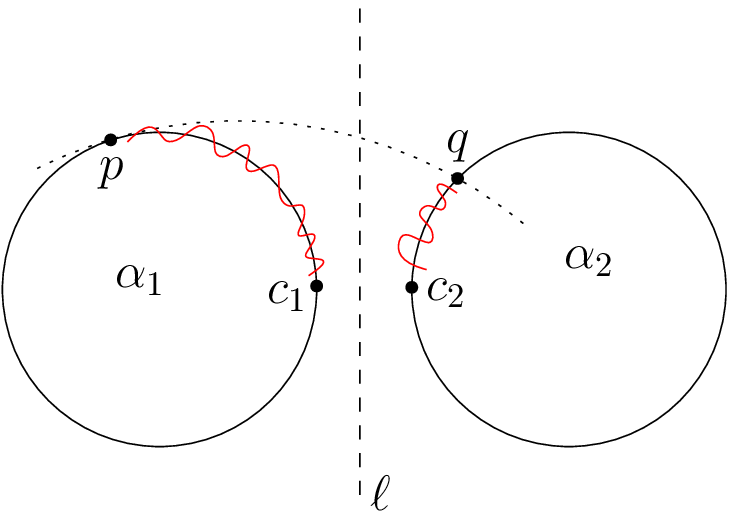}
\caption{\footnotesize Illustrating the case  $(B,A)$.}
\label{fig:caseBA}
\end{center}
\end{minipage}
\vspace{-0.15in}
\end{figure}

\item
Case $(B,A)$, which corresponds to Case d. in~\cite{ref:OvermarsMa81}; e.g., see Fig.~\ref{fig:caseBA}. In this case, the part of $\calL_1$ before $p$ and the part of $\calL_2$ before $q$ can be disregarded.

\item
Case $(B,C)$, which corresponds to Case e. in~\cite{ref:OvermarsMa81}; e.g., see Fig.~\ref{fig:caseBC}. In this case, the part of $\calL_1$ before $p$ and the part of $\calL_2$ after $q$ can be disregarded.

\begin{figure}[h]
\begin{minipage}[t]{0.49\textwidth}
\begin{center}
\includegraphics[height=1.8in]{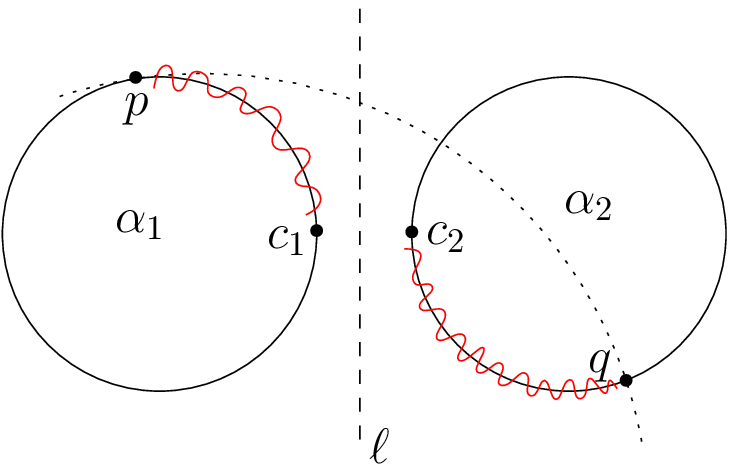}
\caption{\footnotesize Illustrating the case $(B,C)$.}
\label{fig:caseBC}
\end{center}
\end{minipage}
\hspace{0.05in}
\begin{minipage}[t]{0.49\textwidth}
\begin{center}
\includegraphics[height=1.8in]{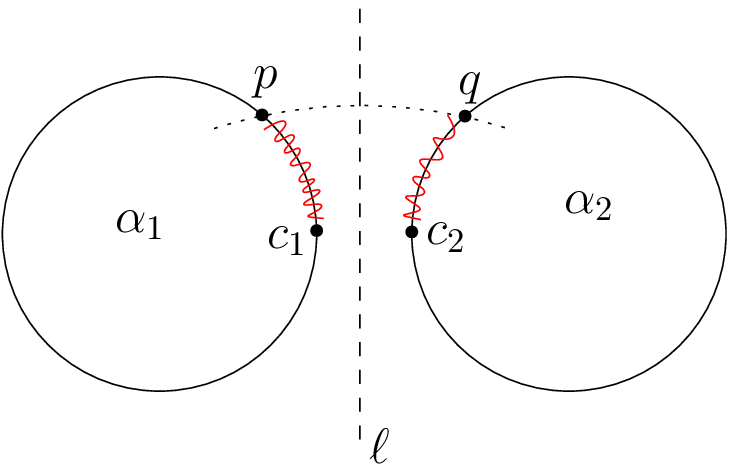}
\caption{\footnotesize Illustrating the case $(A,A)$.}
\label{fig:caseAA}
\end{center}
\end{minipage}
\vspace{-0.15in}
\end{figure}

\item
Case $(A,A)$, which corresponds to Case f. in~\cite{ref:OvermarsMa81}; e.g., see Fig.~\ref{fig:caseAA}. In this case, the part of $\calL_1$ before $p$ and the part of $\calL_2$ before $q$ can be disregarded.

\item
Case $(A,C)$, which corresponds to Case g. in~\cite{ref:OvermarsMa81}; e.g., see Fig.~\ref{fig:caseAC}. In this case, the part of $\calL_1$ before $p$ can be disregarded.

\begin{figure}[h]
\begin{minipage}[t]{0.49\textwidth}
\begin{center}
\includegraphics[height=1.8in]{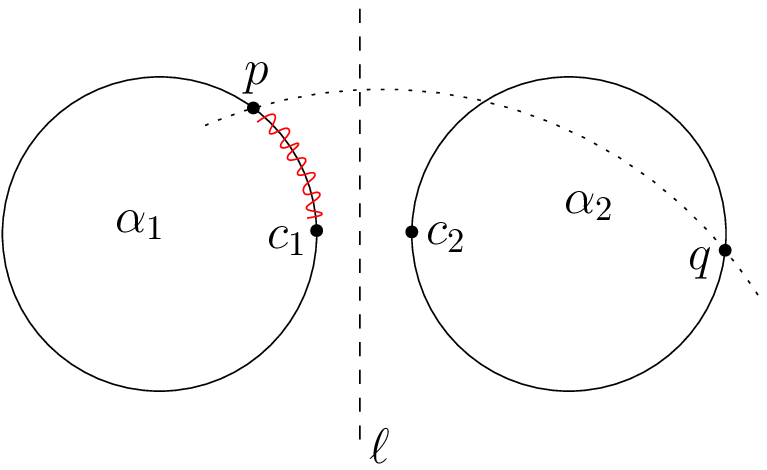}
\caption{\footnotesize Illustrating the case $(A,C)$.}
\label{fig:caseAC}
\end{center}
\end{minipage}
\hspace{0.05in}
\begin{minipage}[t]{0.49\textwidth}
\begin{center}
\includegraphics[height=1.8in]{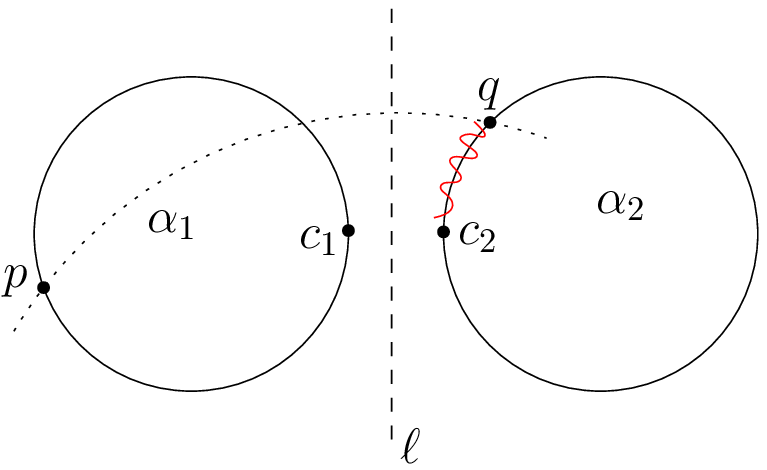}
\caption{\footnotesize Illustrating the case $(C,A)$.}
\label{fig:caseCA}
\end{center}
\end{minipage}
\vspace{-0.15in}
\end{figure}

\item
Case $(C,A)$, which corresponds to Case h. in~\cite{ref:OvermarsMa81}; e.g., see Fig.~\ref{fig:caseCA}. In this case, the part of $\calL_2$ before $q$ can be disregarded.

\item
Case $(C,C)$, which corresponds to Case i. in~\cite{ref:OvermarsMa81}.
In this case, two subcases are further considered in~\cite{ref:OvermarsMa81}. Here, however, we need more subcases.
Depending on whether $(p,q)$ is an $(E,F)$ case, for $E,F\in \{C_1,C_2\}$, there are four subcases.

\begin{enumerate}
\item
Case $(C_1,C_2)$; e.g., see Fig.~\ref{fig:caseC1C2}. In this case, the part of $\calL_2$ after $q$ can be disregarded. Indeed, for each vertex $q'$ in that part, $q'$ is in $C_2$ of $\calL_2$ with respect to $p$. By the definition of $C_2$, there is no unit disk tangent to $\alpha_2$ at $q'$ that covers $p$ (and thus $L$). Therefore, $q'$ cannot be the upper common tangent point, and thus can be disregarded.

\begin{figure}[h]
\begin{minipage}[t]{0.49\textwidth}
\begin{center}
\includegraphics[height=1.8in]{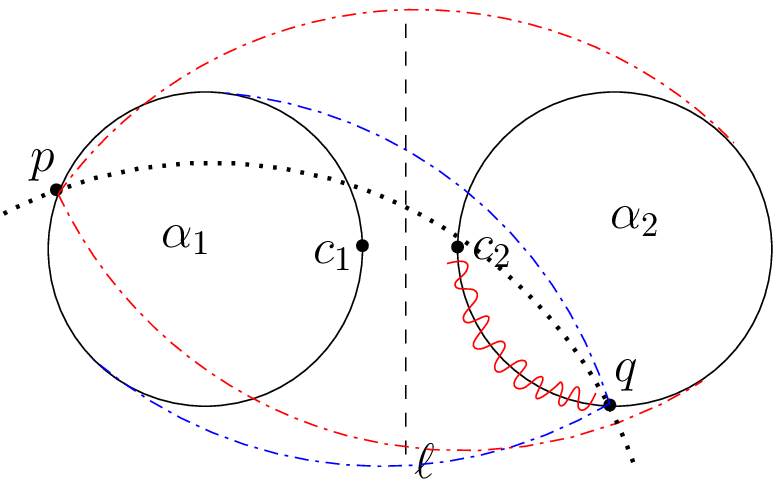}
\caption{\footnotesize Illustrating the case $(C_1,C_2)$. Also shown are the two tangents from $p$ to $\alpha_2$ (red dash-dotted arcs) and the two tangents from $q$ to $\alpha_1$ (blue dash-dotted arcs).}
\label{fig:caseC1C2}
\end{center}
\end{minipage}
\hspace{0.05in}
\begin{minipage}[t]{0.49\textwidth}
\begin{center}
\includegraphics[height=1.8in]{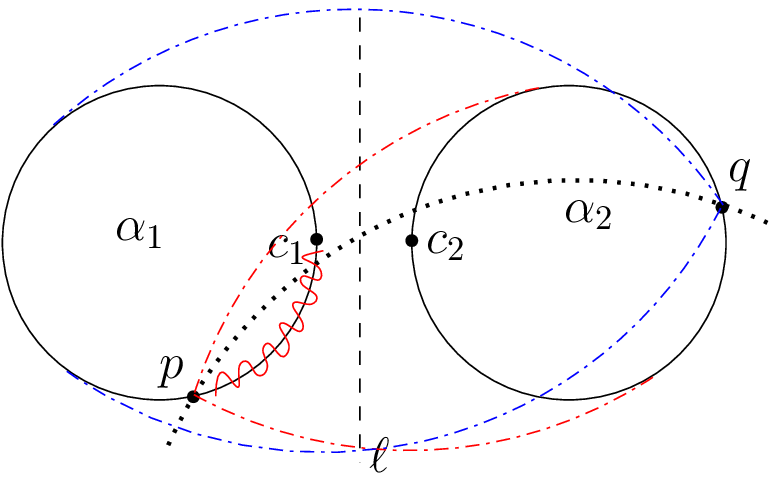}
\caption{\footnotesize Illustrating the case $(C_2,C_1)$.}
\label{fig:caseC2C1}
\end{center}
\end{minipage}
\vspace{-0.15in}
\end{figure}

\item
Case $(C_2,C_1)$; e.g., see Fig.~\ref{fig:caseC2C1}. In this case, the part of $\calL_1$ after $p$ can be disregarded, for the similar reason discussed above.

\item
Case $(C_2,C_2)$. In this case, the part of $\calL_1$ after $p$ and the part of $\calL_2$ after $q$ can be disregarded.

\item
Case $(C_1,C_1)$. In this case, we can find a unit disk $D_1$ that is tangent to $\alpha_1$ at $p$ and covers $q$ and a unit disk $D_2$ that is tangent to $\alpha_2$ at $q$ and covers $p$. Clearly, $D_1$ intersects $D_2$, because each of them contains both $p$ and $q$.

If $D_1=D_2$, then we claim that $ccw(p,q)$ is the lower common tangent. Indeed, since $D_1=D_2$, $D_1$ is tangent to $\alpha_1$ at $p$ and also tangent to $\alpha_2$ at $q$. Thus, either $cw(p,q)$ is the upper common tangent or $ccw(p,q)$ is the lower common tangent. As we know that $p$ is a $C$-vertex of $\calL_1$ with respect to $q$, $p$ cannot be the upper common tangent point and thus $cw(p,q)$ cannot be the upper common tangent. Hence, $ccw(p,q)$ is the lower common tangent.

The claim implies that $a_1$ cannot be after $p$ in $\calL_1$ and $a_2$ cannot be after $q$ in $\calL_2$. Therefore, in this case, the part of $\calL_1$ after $p$ and the part of $\calL_2$ after $q$ can be disregarded.

If $D_1\neq D_2$, then their boundaries intersect at two points. Let $s$ be the intersection point such that if we move from $p$ around $\partial D_1$ clockwise, we will encounter $s$ before the other insertion. Depending on whether $s$ is to the left or right of $\ell$, there are two subcases, which correspond to the two subcases of Case i. in~\cite{ref:OvermarsMa81}.

\begin{figure}[h]
\begin{minipage}[t]{0.49\textwidth}
\begin{center}
\includegraphics[height=2.8in]{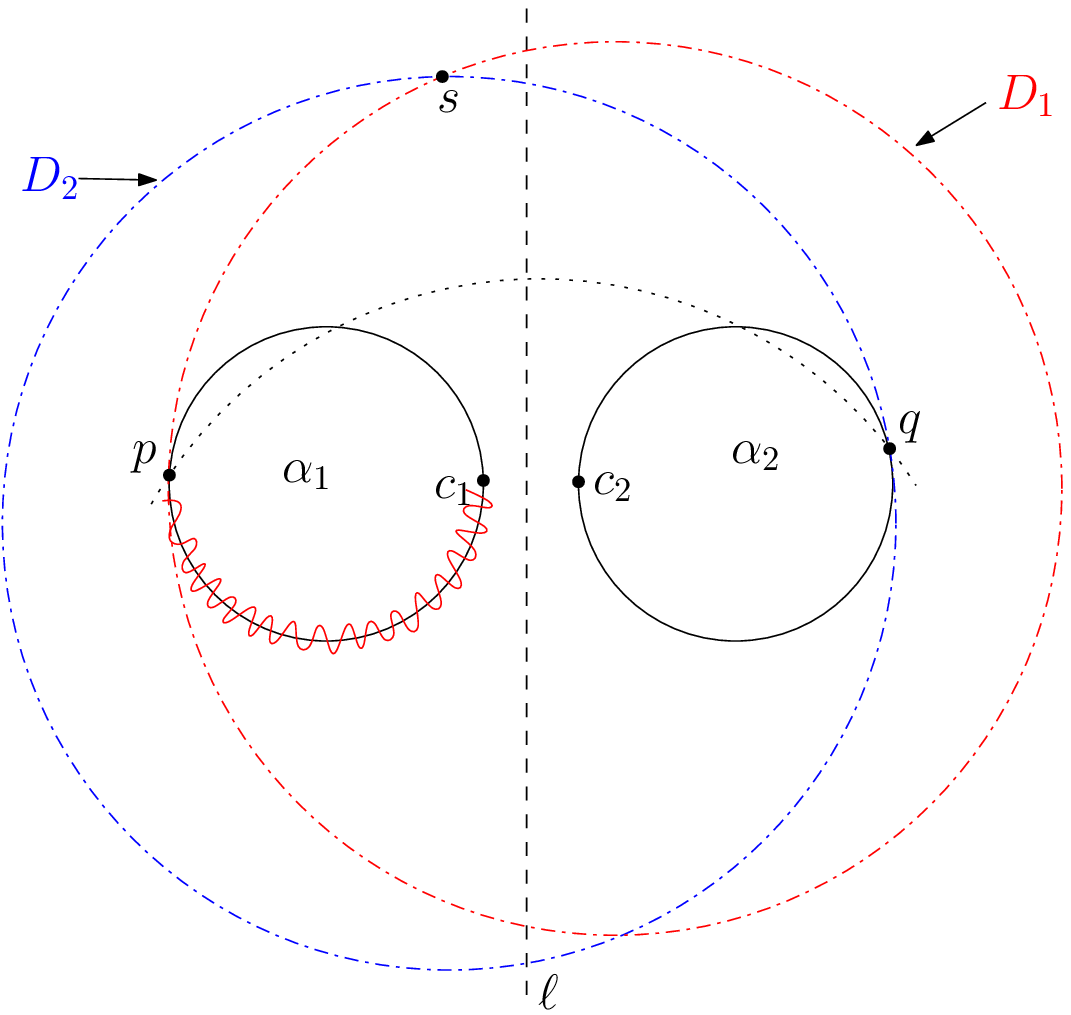}
\caption{\footnotesize Illustrating the case $(C_1,C_1)$, and the intersection $s$ is to the left of $\ell$.}
\label{fig:caseC1C1-1}
\end{center}
\end{minipage}
\hspace{0.05in}
\begin{minipage}[t]{0.49\textwidth}
\begin{center}
\includegraphics[height=2.8in]{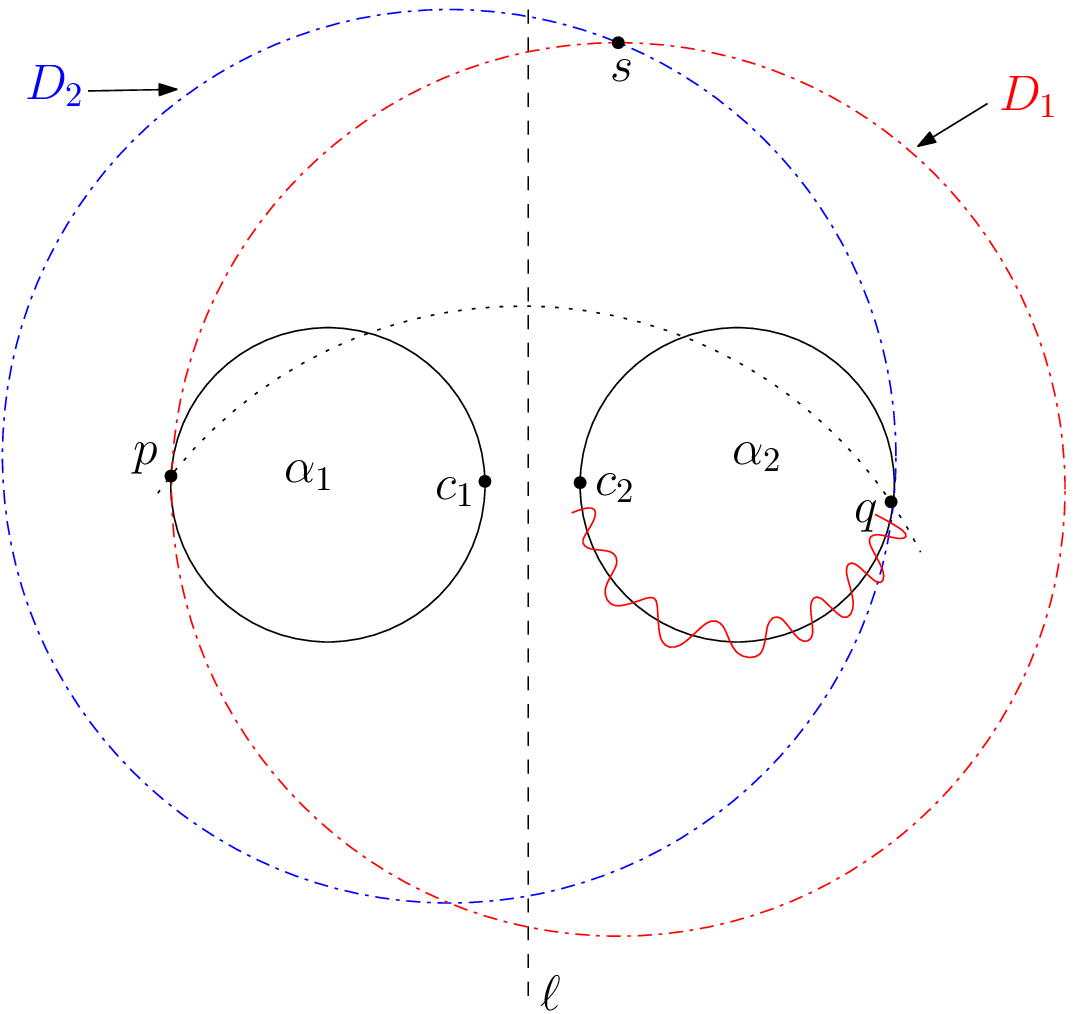}
\caption{\footnotesize Illustrating the case $(C_1,C_1)$, and the intersection $s$ is to the right of $\ell$}
\label{fig:caseC1C1-2}
\end{center}
\end{minipage}
\vspace{-0.15in}
\end{figure}

\begin{enumerate}
  \item If $s$ is to the left of $\ell$, e.g., see Fig.~\ref{fig:caseC1C1-1}, which corresponds to Case i1.in~\cite{ref:OvermarsMa81}, then the part of $\calL_1$ after $p$ can be disregarded.

  \item If $s$ is to the right of $\ell$, e.g., see Fig.~\ref{fig:caseC1C1-2}, which corresponds to Case i2.in~\cite{ref:OvermarsMa81}, then the part of $\calL_2$ after $q$ can be disregarded.
\end{enumerate}

\end{enumerate}

\end{enumerate}

By Lemma~\ref{lem:binarysearch}, with the two neighbors of $p$ and the two neighbors of $q$, each of the above nine cases can be determined in constant time. For the subcases in Case $(C,C)$, recall that given the two neighbors of $p$ in $\alpha_1$, whether $p$ is a $C_1$-vertex with respect to $q$ can be determined in constant time. Similarly, given the two neighbors of $q$ in $\alpha_2$, whether $q$ is a $C_1$-vertex with respect to $p$ can also be determined in constant time. Hence, determining all cases and subcases can be done in constant time.
Therefore, the upper common tangent can be found in $O(\log n)$ time. By a symmetric algorithm, we can compute the lower common tangent in $O(\log n)$ time.

The above algorithm is based on the assumption that $\alpha(L\cup R)$ exists (and thus the common tangents of $\alpha_1$ and $\alpha_2$ exist). If we do not know whether this is true, then we slightly change the algorithm as follows. Suppose we are considering a pair of vertices $(p,q)$ as above. Then, we first check whether $\{p,q\}$ is unit disk coverable. If not, then $\alpha(L\cup R)$ does not exist and we stop the algorithm. Otherwise, we proceed in the same way as before. In addition, if one of $p$ and $q$ is a leaf in its tree and the algorithm still wants to go to a child of that leaf, then we know that the common tangents do not exist and we stop the algorithm.
The runtime of the algorithm is still $O(\log n)$. This proves Lemma~\ref{lem:commontangent}.




%

\bibliographystyle{plain}
\bibliography{reference}

\appendix

\appendix
\section*{Appendix}

We provide a counterexample to show that Tan and Jiang's algorithm~\cite{ref:TanSi17} is not correct. We follow the same notation as in~\cite{ref:TanSi17} without further explanations. The authors first gave an algorithm for the convex position case where $S$ is in convex position, and then use it to solve the general case. Their algorithm uses binary search that relies on a monotonicity property given in Theorem 1. The argument of the proof does not stand. For example, because $r_1^*$ is adjustable, the authors claim that $r_1^*\geq r_2^*$ due to Lemma 3. But Lemma 3 does not imply that at all. Nevertheless, we provide a counterexample to demonstrate that the monotonicity property claimed in Theorem 1 does not hold.

\begin{figure}[h]
\begin{minipage}[t]{\textwidth}
\begin{center}
\includegraphics[height=3.0in]{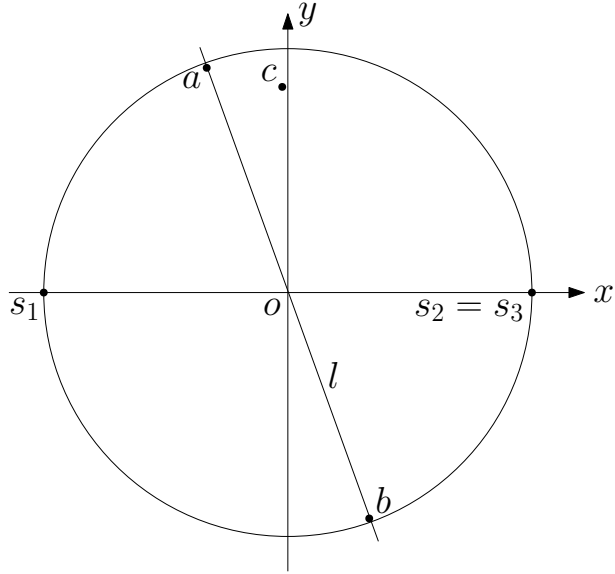}
\caption{\footnotesize Illustrating a counterexample for Theorem 1 in~\cite{ref:TanSi17}.}
\label{fig:counterex}
\end{center}
\end{minipage}
\vspace{-0.15in}
\end{figure}

Refer to Fig.~\ref{fig:counterex}. $S=\{s_1,a,b,c,s_2\}$. A circle $C$ centered at the origin $o$ contains all five points.  $s_1$ and $s_2$ are the two intersections of $x$-axis and $C$. $a,b,c$ are all in the interior of $C$. Hence, $C$ is the smallest enclosing circle of $S$. By definition, we have $s_2=s_3$. $a$ and $b$ are on a line $l$ through $o$ such that $a$ is in the second quadrant and $b$ is in the fourth quadrant. $l$ and $y$-axis form a relatively small angle. Both $a$ and $b$ are arbitrarily close to the boundary of $C$ so that any circle enclosing both $a$ and $b$ has a radius very close to $r$ or larger than $r$.

For any two points $p$ and $q$, let $|pq|$ denote their Euclidean distance.

We can pick the points $a,b,c$ to guarantee the following properties (although we do not provide their exact coordinates, one can verify that the example in Fig.~\ref{fig:counterex} satisfies these properties): (1) $|oa|=|ob|$ (and thus $|s_1b|=|s_2a|$ and $|s_2b|=|s_1a|$); (2) $|s_1a|<|s_1c|<|s_1b|<|bc|$; (3) $r(\{s_1,a,c\})=|s_1c|/2$; (4) $r(\{c,s_2,b\})=|bc|/2$; (5) $r(\{a,c,s_2\})=|as_2|/2$.

With the above properties, one can verify that the following holds (again, refer to~\cite{ref:TanSi17} for the definitions of the notation).
$r_1^*=\max\{r(\{s_1,b\}),r(\{a,c,s_2\})\}=\max\{|s_1b|/2,|as_2|/2\}$=$|s_1b|/2$. $r_2^*=\max\{r(\{s_1,a\}),r(\{c,s_2,b\})\}=\max\{|s_1a|/2,|bc|/2\}=|bc|/2$.
$r_3^*=\max\{r(\{s_1,a,c\}),r(\{s_2,b\})\}=\max\{|s_1c|/2,|s_2b|/2\}=|s_1c|/2$.
Due to that $|s_1c|<|s_1b|<|bc|$, we obtain $r_3^*<r_1^*<r_2^*$. Therefore, $r^*=r_3^*$, and according to Theorem 1 of~\cite{ref:TanSi17}, $r_1^*\geq r_2^*\geq r_3^*$ should hold, which contradicts with $r_3^*<r_1^*<r_2^*$. Hence, Theorem 1 is not correct.


\end{document}